\newtheorem{theorem}{Theorem}[section]
\newtheorem{corollary}[theorem]{Corollary}
\newtheorem{claim}[theorem]{Claim}
\newtheorem{definition}[theorem]{Definition}
\newtheorem{lemma}[theorem]{Lemma}
\newcommand{\E}{\mathbb{E}}
\newcommand{\p}{\mathbb{P}}
\title{Lower Bounds for XOR of Forrelations}
\date{}
\author{Uma Girish\thanks{Department of Computer Science, Princeton University. Research supported by the Simons Collaboration on Algorithms and Geometry, by a Simons Investigator Award and by the National Science Foundation grant No. CCF-1714779.} 
\and Ran Raz\thanks{Department of Computer Science, Princeton University. Research supported by the Simons Collaboration on Algorithms and Geometry, by a Simons Investigator Award and by the National Science Foundation grant No. CCF-1714779.} 
\and Wei Zhan\thanks{Department of Computer Science, Princeton University. Research supported by the Simons Collaboration on Algorithms and Geometry, by a Simons Investigator Award and by the National Science Foundation grant No. CCF-1714779.}}
\begin{document}
\maketitle

\begin{abstract}
The Forrelation problem, first introduced by Aaronson~\cite{aaronson10} and Aaronson and Ambainis~\cite{aaronsonambainis}, is  a well studied computational problem in the context of separating quantum and classical computational models. Variants of this problem were used to give tight separations between quantum and classical query complexity~\cite{aaronsonambainis}; the first separation between poly-logarithmic quantum query complexity and bounded-depth circuits of super-polynomial size, a result that also implied an oracle separation of the classes BQP and PH~\cite{raztal}; and improved separations between quantum and classical communication complexity~\cite{grt}. In all these separations, the lower bound for the classical model only holds when the advantage of the protocol (over a random guess) is more than $\approx 1/\sqrt{N}$, that is, the success probability is larger than $\approx 1/2 + 1/\sqrt{N}$. This is unavoidable as $\approx 1/\sqrt{N}$ is the correlation between two coordinates of an input that is sampled from the Forrelation distribution, and hence there are simple classical protocols that achieve advantage $\approx 1/\sqrt{N}$, in all these models.

To achieve separations when the classical protocol has smaller advantage, we study in this work the \textsc{xor} of $k$ independent copies of (a variant of) the Forrelation function (where $k\ll N$).
We prove a very general result that shows that any family of Boolean functions that is closed under restrictions, whose Fourier mass at level $2k$ is bounded by $\alpha^k$ (that is, the sum of the absolute values of all Fourier coefficients at level $2k$ is bounded by $\alpha^k$), cannot compute the \textsc{xor} of $k$ independent copies of the Forrelation function with advantage better than $O\left(\frac{\alpha^k}{{N^{k/2}}}\right)$.
This is a strengthening of a result of \cite{chlt}, that gave a similar statement for $k=1$, using the technique of~\cite{raztal}.
We give several applications of our result. In particular, we obtain the following separations:

{\bf Quantum versus Classical Communication Complexity:} We give the first example of a partial Boolean function that can
be computed by a simultaneous-message quantum protocol with communication
complexity $\mbox{polylog}(N)$ (where Alice and Bob also share $\mbox{polylog}(N)$ EPR pairs), and such that, any classical randomized protocol of communication complexity at most $\tilde{o}(N^{1/4})$, with any number of rounds,  has quasipolynomially small advantage over a random guess. Previously, only separations where the classical protocol has polynomially small advantage were known between these models~\cite{gavinsky,grt}.

{\bf Quantum Query Complexity versus Bounded Depth Circuits:}
We give the first example of a partial Boolean function
that has a quantum query algorithm with query complexity $\mbox{polylog}(N)$, and such that,
any constant-depth circuit of quasipolynomial size has quasipolynomially small advantage over a random guess. Previously, only separations where the constant-depth circuit has polynomially small advantage were known~\cite{raztal}.

\end{abstract}

\section{Introduction}

Several recent works used Fourier analysis to prove lower bounds for computing (variants of) the Forrelation (partial) function of~\cite{aaronson10,aaronsonambainis}, in various models of computation and communication~\cite{raztal,chlt,grt}. These works show that for many computational models, when analyzing the success probability of computing the Forrelation function, it's
sufficient to bound the contribution of Fourier coefficients at level~2, ignoring all other Fourier coefficients~\cite{raztal,chlt}. This holds for any computational model that is closed under restrictions and
is proved by analyzing the Forrelation distribution as a distribution resulting from a certain random walk, rather than analyzing it directly.

While this is a powerful technique, it could only be used to bound computations of the Forrelation function with advantage (over a random guess) larger than %$\approx \Omega\left(\tfrac{1}{\sqrt{n}}\right)$
$\approx 1/\sqrt{N}$, that is, computations with success probability larger than $\approx 1/2 + 1/\sqrt{N}$.
Roughly speaking, this is because the bound on the Fourier coefficients at level~2 of the Forrelation function is $\approx O\big(1/\sqrt{N}\big)$.

In this work, we study the \textsc{xor} of $k$ independent copies of the Forrelation function of~\cite{raztal} (where $k< o(N^{1/50})$).
We show that for many computational models, when analyzing the success probability of computing the \textsc{xor} of $k$ independent copies of the Forrelation function, it's
sufficient to bound the contribution of Fourier coefficients at level~$2k$, ignoring all other Fourier coefficients. Our proof builds on the techniques of~\cite{raztal}, and followup works~\cite{chlt,grt}, by analyzing a ``product'' of $k$ random walks, one for each of the independent copies of the Forrelation function. This can be viewed as a random walk with a $k$-dimensional time variable.

Consequently, we obtain a very general lower bound that
shows that any family of Boolean functions that is closed under restrictions, whose Fourier mass at level $2k$ is bounded by $\alpha^k$ (that is, for every function in the family, the sum of the absolute values of all Fourier coefficients at level $2k$ is bounded by $\alpha^k$), cannot compute the \textsc{xor} of $k$ independent copies of the Forrelation function with advantage better than $O\left(\frac{\alpha^k}{{N^{k/2}}}\right)$, that is, with success probability larger than $\frac{1}{2}+O\left(\frac{\alpha^k}{{N^{k/2}}}\right)$.
This is a strengthening of a result of \cite{chlt}, that gave a similar statement for $k=1$, using the technique of~\cite{raztal}.

We note that the requirement that the family of Boolean functions is closed under restrictions is satisfied by essentially all non-uniform computational models. The requirement of having a good bound on the Fourier mass at level $2k$ is satisfied by several central and well-studied computational models (see for example~\cite{chhl} for a recent discussion). In particular, we focus in this work on three such models: communication complexity, query complexity (decision trees) and bounded-depth circuits.
We note that our result is valid for any $k< N^{c}$, for some constant $c>0$, and hence it can be used to prove lower bounds for circuits/protocols with exponentially small advantage, in all these models. However, for the applications of separating quantum and classical computational models, we take $k$ to be poly-logarithmic in $N$, so that we have quantum protocols of poly-logarithmic cost.
We use our main theorem to give several separations between quantum and classical computational models.

\subsection{Communication Complexity}

Quantum versus classical separations in communication complexity have been studied for more than two decades in numerous works. We briefly summarize the history of quantum advantage in communication complexity of partial functions, that is most relevant for us: First, Buhrman, Cleve and Wigderson proved an exponential separation between zero-error simultaneous-message quantum communication complexity (without entanglement) and classical deterministic communication complexity~\cite{buhrman}. For the bounded-error model, Raz showed an exponential separation between two-way quantum communication complexity and two-way randomized communication complexity~\cite{raz}. Gavinsky et al (building on Bar-Yossef et al~\cite{Bar-YossefJK04}) gave an exponential separation between one-way quantum communication complexity and one-way randomized communication complexity~\cite{gavinskyetal}. Klartag and Regev gave an exponential separation between one-way quantum communication complexity and two-way randomized communication complexity~\cite{klartagregev}.
The state of the art separation, by Gavinsky, gave an exponential separation between simultaneous-message quantum communication complexity (with entanglement)
and two-way randomized communication complexity~\cite{gavinsky}.
An alternative proof for Gavinsky's result was recently given by~\cite{grt}, as a followup to~\cite{raztal,chlt}, and had the additional desired property that in the quantum protocol, the time complexity of all the players is poly-logarithmic.

\subsubsection*{Our Result:}

In all these works, the lower bounds for classical communication complexity only hold when the advantage of the protocol (over a random guess) is more than $\approx 1/\sqrt{N}$, that is, the success probability is larger than $\approx 1/2 + 1/\sqrt{N}$.

In this work, we give a partial Boolean function that can be computed by a simultaneous-message quantum protocol with communication complexity $\mbox{polylog}(N)$ (where Alice and Bob also share $\mbox{polylog}(N)$ EPR pairs), and such that, any classical randomized protocol of communication complexity at most $\tilde{o}(N^{1/4})$, with any number of rounds, has quasipolynomially small advantage over a random guess. This qualitatively matches the results of~\cite{gavinsky,grt} and has the additional desired property that the lower bound for the classical communication protocol holds for quasipolynomially small advantage, rather than polynomially small advantage.
Moreover, as in~\cite{grt}, the quantum protocol in our upper bound has the additional property of being {\it efficiently implementable}, in the sense that it can be described by quantum circuits of size $\mbox{polylog}(N)$, with oracle access to the inputs.

To prove this result we use  the \textsc{xor} of $k$ independent copies of the Forrelation function,
lifted to communication complexity using \textsc{xor} as the gadget~\cite{razxor}, as in~\cite{grt}. The quantum upper bound is simple. For the classical lower bound, we use  ideas from \cite{grt} to bound the level-$2k$ Fourier mass. This, along with our main theorem implies the desired separation. Our bounds for the level-$2k$ Fourier mass may be interesting in their own right and are proved in Section~7.

\subsubsection*{Related Work:}

We note that an exponential separation between {\bf two-way} quantum communication complexity
and two-way randomized communication complexity, with quasipolynomially small advantage, can be proved by a combination of several previous results, as follows:

Start with an existing separation between quantum and classical query complexity, such as the one of~\cite{aaronsonambainis}. Use Drucker's \textsc{xor}~lemma for randomized decision tree~\cite{drucker} to get a separation between quantum and classical query complexity, where the classical protocol has quasipolynomially small advantage. Finally, use the recent lifting theorem of~\cite{bppip} to lift the result to communication complexity.
To the best of our knowledge, this separation was not previously observed.

It follows from these works that there exists a function computable in the quantum two-way model in communication complexity $\mbox{polylog}(N)$, for which randomized protocols of cost $\tilde{o}(\sqrt{N})$ have at most quasipolynomially small advantage. While the lower bound is for cost $\tilde{o}(\sqrt{N})$ protocols, which is quantitatively stronger than our lower bound for cost $\tilde{o}(N^{1/4})$ protocols, the quantum upper bound in this result seems to require two rounds of communication, while our function is computable in the simultaneous model when Alice and Bob share entanglement.

\subsection{Bounded Depth Circuits}

Separations of quantum query complexity and bounded-depth classical circuit complexity have been studied in the context of oracle separations of the classes BQP and PH. 
An example of a partial Boolean function (Forrelation)
that has a quantum query algorithm with query complexity $\mbox{polylog}(N)$, and such that,
any constant-depth circuit of quasipolynomial size has polynomially small advantage over a random guess, was given in~\cite{raztal}. This result implied an oracle separation of the classes BQP and PH.

Here, we give the first example of a partial Boolean function (\textsc{xor} of $k$ copies of Forrelation)
that has a quantum query algorithm with query complexity $\mbox{polylog}(N)$, and such that,
any constant-depth circuit of quasipolynomial size has {\bf quasipolynomially} small advantage over a random guess. 

For the proof,  we use our main theorem, together with Tal's bounds on the level-$2k$ Fourier mass of bounded-depth circuits~\cite{talac}.

\subsection{Decision Trees}

The query complexity model (also known as black box model or decision-tree complexity) has played a central role in the study of quantum computational complexity.
Quantum advantages in query complexity (decision trees) have been demonstrated for partial functions in various settings and numerous works. For example, Aaronson and Ambainis~\cite{aaronsonambainis} showed that the Forrelation problem can be solved by one quantum query, while its randomized query complexity is  $\Omega(\sqrt{N}/\log N)$.

For classical randomized query complexity, there is a known \textsc{xor} lemma, proved by Drucker~\cite{drucker}. In particular, Theorem 1.3 of~\cite{drucker}, along with the result of \cite{aaronsonambainis} gives
a partial function (\textsc{xor} of $\mbox{polylog}(N)$ copies of Forrelation) that can be computed  by a quantum query algorithm with $\mbox{polylog}(N)$ queries, while  every classical randomized algorithm that makes $\tilde{o}(N^{1/2})$ queries, has quasipolynomially small advantage.

Our main theorem implies a different proof for this result, 
using Tal's recent bounds on the level-$2k$ Fourier mass of decision trees~\cite{tal}.

\subsection{The Main Theorem}

Our functions are obtained by taking an \textsc{xor} of several copies of a variant of the Forrelation problem, as defined in \cite{raztal}.

Let $N=2^n$ for sufficiently large $n\in \mathbb{N}$. Let $k\in \mathbb{N}$ be a parameter. We assume that $k=o(N^{1/50})$.
% although, the interesting range of parameters is $k=\mbox{polylog}( N)$.
Let $\bf{\epsilon=\frac{1}{60k^2 \ln N}}$ be a parameter.

Let $H_N$ denote the $N\times N$ normalized Hadamard matrix whose entries are either $-\frac{1}{\sqrt{N}}$ or $\frac{1}{\sqrt{N}}$.
%As in \cite{raztal},
Let $$forr(z) := \frac{1}{N} \left< z_2 , H_N z_1 \right> $$ denote the {\it Forrelation} of a vector  $z=(z_1,z_2)$, where $z_1,z_2 \in \mathbb{R}^{N}$.
%Similar to \cite{raztal},
The {\bf Forrelation Decision Problem} is the partial Boolean function $F:\{-1,1\}^{2N}\rightarrow \{-1,1\}$ defined at $z\in \{-1,1\}^{2N}$ by
\[F(z):=\begin{cases} -1 & \text{ if } forr(z ) \ge \epsilon/2 \\ 1 & \text{ if } forr(z) \le \epsilon/4\\  \text{undefined} & \text{ otherwise } \end{cases} \]
The {\bf $\oplus^k$ Forrelation Decision Problem} $F^{(k)}:\{-1,1\}^{2kN}\rightarrow \{-1,1\}$ is defined as the \textsc{xor} of $k$ independent copies of $F$. More precisely, for every $z_1,\ldots,z_k\in \{-1,1\}^{2N}$, let $$F^{(k)}(z_1,\ldots,z_k):=\prod_{j=1}^k F(z_j).$$

For our separation results, we take the function $F^{(k)}$, where $k=\lceil\log^2 N\rceil$. For our communication complexity separation we take the lift of $F^{(k)}$ with \textsc{xor} as the gadget. The quantum upper bounds in all these separation results are quite simple. Moreover, all the quantum algorithms in our upper bounds have the additional advantage of being {\it efficiently implementable}, in the sense that they can be described by quantum circuits of size $\mbox{polylog}(N)$, with oracle access to the inputs. 

Our main contribution is the classical lower bound. Towards this, our main theorem provides an upper bound on the maximum correlation of $F^{(k)}$ with any family of Boolean functions, in terms of the maximum level-$2k$ Fourier mass of a function in the family.

{\it {\bf Main Theorem} {\it (Informal)} There exist two distributions, $\sigma_0^{(k)}$ and $\sigma_1^{(k)}$, on the \textsc{no} and \textsc{yes} instances of $F^{(k)}$, respectively, with the following property. Let $\mathcal{H}$ be a family of Boolean functions, each of which maps $\{-1,1\}^{2kN}$ into $[-1,1]$. Assume that $\mathcal{H}$ is closed under restrictions. For $H\in \mathcal{H}$, let $L_{2k}(H):=\sum_{|S|=2k}|\widehat{H}(S)|$. Let $\alpha\in \mathbb{R}$ be such that $\alpha^k:=\underset{H\in \mathcal{H}}{\sup} \left( L_{2k}(H),1\right)$. Then, for every $H\in \mathcal{H}$,
\[ \left| \underset{ z\sim \sigma_0^{(k)}}{\E}\left[ H(z) \right]  -  \underset{ z\sim \sigma_1^{(k)}}{\E}\left[ H(z) \right] \right| \le O\left( \frac{\alpha^k}{N^{k/2}}\right)  \]}

Our main theorem implies that functions in $\mathcal{H}$ cannot correlate with $F^{(k)}$ by more than $\frac{1}{2}+O\left(\frac{\alpha^k}{N^{k/2}} \right)$.
%at all points.
For the applications, we instantiate $\mathcal{H}$ with the class of functions computed by classical protocols of small cost.

\subsection{Overview of Proof of the Main Theorem for $k=2$}

Our proof builds on the techniques of~\cite{raztal}, and followup works~\cite{chlt,grt}, which, in turn, used a key idea from~\cite{chhl}.
We will now give an overview of the proof of the Main Theorem for the special case $k=2$, where one can already see most of the key ideas.

We start by recalling the hard distributions for $k=1$, as in \cite{raztal}. The {\bf distribution $\mathcal{U}$ on \textsc{no} instances} of $F$ is the uniform distribution $U_{2N}$ on $\{-1,1\}^{2N}$. It can be shown that a bit string drawn uniformly at random almost always has low Forrelation. The {\bf distribution $\mathcal{G}$ on \textsc{yes} instances} of $F$ is the Gaussian distribution with mean 0 and covariance matrix $\epsilon\begin{bmatrix} \mathbb{I}_N & H_N \\ H_N & \mathbb{I}_N \end{bmatrix}$. It can be shown that a vector drawn from this distribution almost always has high Forrelation (at least $\epsilon/2$). Although $\mathcal{G}$ is not a distribution over $\{-1,1\}^{2N}$, this can be fixed (by probabilistically rounding the values) and we ignore this issue in the proof overview.

Our hard distributions for $k\ge2$ are obtained by naturally lifting these distributions. The {\bf distribution $\mu_0$ on \textsc{no} instances} of $F^{(2)}$ is $\frac{1}{2}\left( \mathcal{U}\times \mathcal{U} + \mathcal{G}  \times\mathcal{G} \right)$. The {\bf distribution $\mu_1$ on \textsc{yes} instances} is $\frac{1}{2}\left( \mathcal{U} \times\mathcal{G}+ \mathcal{G} \times\mathcal{U}\right)$. It can be shown that these distributions indeed have almost all their mass on the \textsc{yes} and \textsc{no} instances of $F^{(2)}$, respectively.

Throughout this proof, we identify functions in $\mathcal{H}$ with their unique multilinear extensions.
%as in \cite{chlt}.
Using this identification, it follows that for all $H\in \mathcal{H}$ and $z_0\in \mathbb{R}^{4N}$, we have $\E_{z\sim \mathcal{U}}[H(z_0+(z,0))]=\E_{z\sim \mathcal{U}}[H(z_0+(0,z))]=\E_{z\sim \mathcal{U}^2}[H(z_0+z)]=H(z_0)$.

\subsubsection*{Bounding the Advantage of $H$ in Distinguishing $p\cdot \mu_0$ and $p\cdot \mu_1$, for Small $p$:}

As in~\cite{raztal,chlt}, in
order to show that functions in $\mathcal{H}$ can't distinguish between $\mu_0$ and $\mu_1$, we first show that they can't distinguish between $p\cdot \mu_0$ and $p\cdot \mu_1$, for small $p$. We show that for every $H\in \mathcal{H}$, and $p\le\frac{1}{2N}$,
\begin{align*}\begin{split}
\left| \underset{z\sim p\cdot \mu_0}{\E}[H(z)]- \underset{z\sim p\cdot \mu_1}{\E}[H(z)]\right|  &\triangleq \frac{1}{2}  \left| \underset{\substack{z_1\sim p\cdot \mathcal{G} \\ z_2\sim  p\cdot\mathcal{G}}}{\E} \left [H(z_1,z_2) -H(z_1,0)-H(0,z_2) + H(0,0)\right] \right| \\
&\le p^4\cdot O\left( \frac{L_4(H)}{N} \right) + O(p^6 N^{1.5})
\end{split}\end{align*}
This claim is analogous to Claim~20 from~\cite{chlt}. For sufficiently small $p$, the second term in the R.H.S. of the inequality is negligible, compared to the first term. To prove this inequality, we use the Fourier expansion of $H$ in the L.H.S. and bound the difference between the moments of $p\cdot\mu_0$ and $p\cdot\mu_1$. We show that $p\cdot\mu_0$ and $p\cdot\mu_1$ agree on moments of degree less than~4, so these moments don't contribute to the difference. We then show that the contribution of the moments of degree~4 is $L_4(H)\cdot O\left( \frac{p^4}{N} \right)$ and the contribution of  moments of higher degrees is $O(p^6 N^{1.5})$.

\subsubsection*{Bounding the Advantage of $H(z_0 + z)$ in Distinguishing $p\cdot \mu_0$ and $p\cdot \mu_1$, for Small $p$:}

Next, as in~\cite{raztal,chlt},
we show a similar statement for the function $H(z_0+z)$ of $z$, where $z_0$ is not too large. We show that
for every $H\in \mathcal{H}$, and every $z_0\in [-1/2,1/2]^{2kN}$ and $p\le\frac{1}{2N}$,
\begin{align}\label{inequality:shifted}\begin{split}
& \frac{1}{2}  \left| \underset{\substack{z_1\sim p\cdot \mathcal{G} \\ z_2\sim  p\cdot\mathcal{G}}}{\E} \left [H(z_0+(z_1,z_2)) -H(z_0+(z_1,0))-H(z_0+(0,z_2)) + H(z_0)\right] \right| \\
&\le p^4\cdot O\left( \frac{L_4(H)}{N} \right) + O(p^6 N^{1.5})
\end{split}\end{align}
The proof of this inequality is similar to the proof of Claim 19 of~\cite{chlt}, using key ideas from~\cite{chhl}, and relies on the multilinearity of functions in $\mathcal{H}$ and the closure of $\mathcal{H}$ under restrictions.

\subsubsection*{A Random Walk with Two-Dimensional Time Variable:}

This is the main place where our proof differs from the one of~\cite{raztal} and followup works~\cite{chlt,grt}. In all these works the Forrelation distribution was ultimately analyzed as the distribution obtained by a certain random walk. Here, we consider a product of two random walks, which can also be viewed as a random walk with two-dimensional time variable.

Let $T= 16N^4 $ and $p=\frac{1}{\sqrt{T}}$. Let $z_1^{(1)},z_2^{(1)},\ldots,z_1^{(T)},z_2^{(T)}\sim p\cdot \mathcal{G}$ be independent samples. Let $t=(t_1,t_2)$ for $t_1,t_2\in \{0,\ldots,T\}$. Let $z^{\le(t)}:=\left( \sum_{i=1}^{t_1} z_1^{(i)}, \sum_{i=1}^{t_2} z_2^{(i)}\right)$. Note that $z^{\le(t)}$ is distributed according to $ (p\sqrt{t_1}\cdot \mathcal{G})\times(p\sqrt{ t_2}\cdot \mathcal{G})$. In particular, $z^{\le(T,T)}$ is distributed according to $\mathcal{G}\times \mathcal{G}$. This implies that
\[(*):=\underset{z\sim  \mu_0}{\E}[H(z)]- \underset{z\sim  \mu_1}{\E}[H(z)]  \triangleq  \frac{1}{2}  {\E} \left [H(z^{\le(T,T)}) -H(z^{\le(T,0)})-H(z^{\le(0,T)}) + H(0,0)\right]\]
We now rewrite $(*)$ as follows.
\begin{equation} \label{introtelescopic} (*)= \frac{1}{2}  \underset{\substack{t_1\in [T]\\t_2\in [T]}}{\sum} \E \left[H(z^{\le(t_1,t_2)}) -H(z^{\le(t_1-1,t_2)}) - H(z^{\le(t_1,t_2-1)})+H(z^{\le(t_1-1,t_2-1)})\right]
\end{equation}
The last equation follows by a two-dimensional telescopic cancellation, as depicted in Figure~1.
This turns out to be a powerful observation.
Note that for every fixed $t = (t_1,t_2)$, the random variable $z^{\le(t)}-z^{\le(t-(1,1))}\triangleq (z_1^{(t_1)},z_2^{(t_2)})$ is distributed according to $ p\cdot \mathcal{G}^2$, by construction. We can thus apply Inequality\eqref{inequality:shifted}, setting $z_0=z^{\le(t-(1,1))}$. This, along with the Triangle-Inequality implies that
\begin{align*} \begin{split}
|(*)| &\le\frac{1}{2}  \underset{\substack{t_1\in [T]\\t_2\in [T]}}{\sum} \left| \E \left[H(z^{\le(t_1,t_2)}) -H(z^{\le(t_1-1,t_2)}) - H(z^{\le(t_1,t_2-1)})+H(z^{\le(t_1-1,t_2-1)})\right]  \right|  \\
&\le \frac{1}{2}  \underset{\substack{t_1\in [T]\\t_2\in [T]}}{\sum}\left( p^4\cdot O\left( \frac{L_4(H)}{N} \right) + O\left(p^6N^{1.5} \right) \right)  \quad \;\;\;\;\;\; \text{ by Inequality~\eqref{inequality:shifted}} \\
&=  O\left( \frac{L_4(H)}{N} \right) + o\left(\frac{1}{N} \right)  \quad \;\;\;\;\;\; \text{ since }T= 16N^4 =\frac{1}{p^2}
\end{split}\end{align*}

This completes the proof overview for $k=2$, albeit with many details left out.
%The proof for $k>2$ is similar.

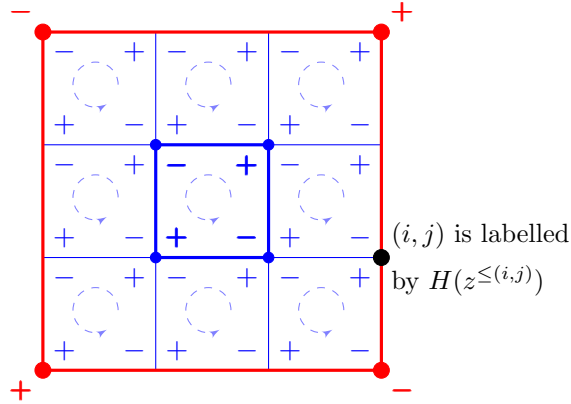
\begin{figure}[h]
\centering
\begin{tikzpicture}

  \draw[blue,step=1.5] (0,0) grid (4.5,4.5);
    \filldraw[red] (4.5,0) circle (3pt);
       \node[anchor=north west,red] at (4.5,0) {$\pmb{-}$};

     \filldraw[red] (0,4.5)  circle (3pt);
            \node[anchor=south east,red] at (0,4.5) {$\pmb{-}$};

       \filldraw[red] (4.5,4.5)  circle (3pt);
                   \node[anchor=south west,red] at (4.5,4.5) {$\pmb{+}$};

       \filldraw[red] (0,0)  circle (3pt);
                   \node[anchor=north east,red] at (0,0) {$\pmb{+}$};

    \draw[ blue, very thick] (1.5,1.5)--(3,1.5);
    \draw[ blue, very thick] (3,1.5)--(3,3);
    \draw[ blue, very thick] (3,3)--(1.5,3);
    \draw[blue, very thick] (1.5,3) --(1.5,1.5);

    \draw[very thick,red] (0,0)--(4.5,0);
    \draw[ very thick,red] (4.5,0)--(4.5,4.5);
    \draw[ very thick,red] (4.5,4.5)--(0,4.5);
    \draw[ very thick,red] (0,4.5) --(0,0);

  \filldraw[black] (4.5,1.5)  circle (3pt);
    \node[right] at (4.5,1.8) {\small $(i,j)$ is labelled};
  \node[right] at (4.5,1.2) {\small by $H(z^{\le(i,j)})$};

  \filldraw[blue] (3,3)  circle (2pt);
  \filldraw[blue] (3,1.5)  circle (2pt);
  \filldraw[blue] (1.5,1.5)  circle (2pt);
  \filldraw[blue] (1.5,3)  circle (2pt);
        \node[blue, anchor=north east] at (3,3) {$\pmb{+}$};
	\node[blue, anchor=south west] at (1.5,1.5) {$\pmb{+}$};
	\node[blue, anchor=north west] at (1.5,3) {$\pmb{-}$};
	\node[blue, anchor=south east] at (3,1.5) {$\pmb{-}$};

 \foreach \i in {0,1,2}
      \foreach \j in {0,...,2}
      {
        \node[blue, anchor=south west] at (1.5*\i,1.5*\j) {${+}$};
	\node[blue, anchor=north east] at (1.5*\i+1.5,1.5*\j+1.5) {${+}$};
	\node[blue, anchor=south east] at (1.5*\i+1.5,1.5*\j) {${-}$};
	\node[blue, anchor=north west] at (1.5*\i,1.5*\j+1.5) {${-}$};

        \draw[->,>=latex',blue!50,dashed] (1.5*\i+1,1.5*\j+0.8) arc[radius=0.3,start angle=0,delta angle=300];
      }

\end{tikzpicture}
\caption{Consider the $(T+1)\times (T+1)$ grid whose vertices are indexed by $v\in (\{0\}\cup[T])^2$. Each vertex $v$ is labelled by $H(z^{\le(v)})$. Each rectangle has a sign on its vertices as defined in Figure 1 and the label of a rectangle is the sum of signed labels of its vertices. The sum of labels of all $1\times 1$ rectangles equals the label of the larger $T\times T$ rectangle. This is exactly the content of \cref{introtelescopic}. }
\end{figure}

\subsection{Organization of the Paper}

We present the preliminaries regarding Forrelation in Section~2 and state our main theorems in Section~3. In Section~4, we show how to bound the advantage of $H$ in distinguishing between $p\cdot \mu_0$ and $p\cdot \mu_1$, for Small $p$. In Section~5, we show how to bound the advantage of $H(z_0 + z)$ in distinguishing between $p\cdot \mu_0$ and $p\cdot \mu_1$, for Small $p$. In Section 6, we give the analysis of our random walk with $k$-dimensional time variable. Section 7 contains the proofs of the quantum-classical separations.

\section{Preliminaries}

For $n\in \mathbb{N}$, we use $[n]$ to denote the set $\{1,2,\ldots,n\}$. We typically use $N$ to refer to $2^n$. For a set $S\subseteq [n]$, let $\bar{S}:=[n]\setminus S$ denote the complement of $S$. For sets $S\subseteq[n],T\subseteq[m]$, we typically use $S\times T:= \{ (s,t):s\in S,t\in T\}$ denote the set product of $S$ and $T$. Sometimes, we use the notation $(S,T)$. Note that the map $(i,j)\rightarrow m(i-1) + j$ is a bijection between $[n]\times [m]$ and $[nm]$. Using this identification, $S\times T $ is a subset of $[nm]$.
We identify subsets $S\subseteq[n]$ with their $\{0,1\}$ indicator vector, that is, the vector $S\in\{0,1\}^n$ such that for each $j\in [n]$, $S_j=1$ if and only if $j\in S$.

Let $v\in \mathbb{R}^n$. For $i\in [n]$, we refer to the $i$-th coordinate of $v$ by $v_i$ or $v(i)$. For $x,y\in \mathbb{R}^n$, let $x\cdot y\in \mathbb{R}^n$ be the pointwise product between $x$ and $y$. This is the vector whose $i$-th coordinate is $x_iy_i$, for every $i\in [n]$. Let $\left<x , y\right>$ denote the real inner product between $x$ and $y$. For $x,y\in \{0,1\}^n$, let $\left< x, y\right>_2:=\sum_{i=1}^n x_i y_i \mod 2$ denote the mod 2 inner product between $x$ and $y$. We use $\mathbb{I}_n$ to denote the $n\times n$ identity matrix. We use $0$ to denote the zero vector in arbitrary dimensions.

\paragraph*{Distributions} For a probability distribution $D$, let $x\sim D$ denote a random variable $x$ sampled according to $D$. For distributions $D_1$ and $D_2$, we use $D_1\times D_2$ to denote the product distribution defined by sampling $(x,y)$ where $x\sim D_1$ and $y\sim D_2$ are sampled independently. For $n\in \mathbb{N}$ and a distribution $D$, let $D^n$ denote the product of $n$ distributions, each of which is $D$. Let $\mu\in \mathbb{R}^n$ be a vector and $\Sigma\in \mathbb{R}^{n\times n}$ be a positive semi-definite matrix. We use $\mathcal{N}(\mu,\Sigma)$ to refer to the $n$-dimensional Gaussian distribution with mean $\mu$ and covariance matrix $\Sigma$. Let $U_n$ denote the uniform distribution on $\{-1,1\}^n$. For a distribution $D$ over $\mathbb{R}^n$ and $a\in \mathbb{R}^n$, let $a+D$ refer to the distribution obtained by sampling $z\sim D$ and returning $z+a$. For $P\in \mathbb{R}^n$ and a distribution $D$ over $\mathbb{R}^n$, let $P\cdot D$ denote the distribution obtained by sampling $x\sim D$ and returning $P\cdot x$. For $p\in \mathbb{R}$, we use $p\cdot D$ to denote the distribution obtained by sampling $x\sim D$ and returning $p x$. For $I\subseteq [n]$, let $\widehat{D}(I):= \underset{z\sim D}{\E} \left[ \prod_{i\in I}z_i   \right]$ refer to the $I$-th moment of $D$.

\paragraph*{Concentration Inequalities} We make use of the following concentration inequalities. The first is the Gaussian Concentration Inequality~\cite{gaussian} which states that $ \underset{z\sim \mathcal{N}(0,1)}{\p}[z\ge t] \le e^{-t^2/2}$. We also use the following concentration inequality for the Chi-Squared distribution.~\cite{chi}
\[ \underset{z_1,\ldots,z_n\sim \mathcal{N}(0,1)}{\p}\left[ \left|\frac{1}{n}\sum_{i=1}^n z_i^2 - 1\right| \ge t\right] \le 2e^{-nt^2/8} \quad\quad\text{ for all } t\in (0,1)\]

\paragraph*{Fourier Analysis} We refer to $\{-1,1\}^n$ as the Boolean hypercube in $n$ dimensions. Let $\mathcal{F}:=\{f:\{-1,1\}^n\rightarrow \mathbb{R} \}$ denote the real vector space of all Boolean functions on $n$ variables. There is an inner product on this space as follows. For $f,g\in \mathcal{F}$, let $\left< f,g\right>:=\E_{x\sim U_n}[f(x)g(x)]$. For every $S\subseteq [n]$, there is a character function $\chi_S:\{-1,1\}^n\rightarrow \{-1,1\}$ defined at $x\in \{-1,1\}^n$ by $\chi_S(x):=\prod_{i\in S} x_i$. The set of character functions $\{\chi_S\}_{S\subseteq [n]}$ forms an orthonormal basis for $\mathcal{F}$. For $f\in \mathcal{F}$ and $S\subseteq[n]$, let $\widehat{f}(S):=\left< f,\chi_S\right>$ denote the $S$-th Fourier coefficient of $f$. Note that for all $f\in \mathcal{F}$, we have $f=\sum_{S\subseteq [n]} \widehat{f}(S)\chi_S$. For $f\in \mathcal{F}$, the multilinear extension of $f$ is the unique multilinear polynomial $\tilde{f}:\mathbb{R}^n\rightarrow\mathbb{R}$ which agrees with $f$ on $\{-1,1\}^n$. For every $S\subseteq [n]$, the multilinear extension of $\chi_S$ is the monomial $\prod_{i\in S}x_i$. This implies that the multilinear extension of $f\in \mathcal{F}$ is $\sum_{S\subseteq[n]} \widehat{f}(S)\prod_{i\in S}x_i$. Henceforth, we identify Boolean functions with their multilinear extensions. With this identification, it can be shown that functions in $\mathcal{F}$ which map $\{-1,1\}^{n}$ into $[-1,1]$ also map $[-1,1]^{n}$ into $[-1,1]$. For $f,g\in \mathcal{F}$, let $f*g\in \mathcal{F}$ be defined at $z\in \{-1,1\}^n$ by $(f*g)(z):=\E_{x\sim U_n}[f(x)g(x\cdot z)].$ It can be shown that for all $S\subseteq[n]$, we have $\widehat{f*g}(S)=\widehat{f}(S)\widehat{g}(S)$.

\paragraph*{Level-$k$ Fourier Mass} For $f\in \mathcal{F}$ and $k\in \{0,\ldots,n\}$, let $L_k(f):=\sum_{|S|=k} |\widehat{f}(S)|$ denote the level-$k$ Fourier mass of $f$. For a family $\mathcal{H}\subseteq \mathcal{F}$ of Boolean functions, let $L_k(\mathcal{H}):=\sup_{H\in \mathcal{H}}L_k(H)$.

\subsection{The Forrelation Problem}

Let $k,N\in \mathbb{N}$ be parameters, where $N=2^n$ for some $n\in \mathbb{N}$. We assume that $k=o(N^{1/50})$. Fix a parameter $\epsilon=\frac{1}{60k^2 \ln N}$. Let $\mathcal{U}$ refer to $U_{2N}$.

\paragraph*{Hadamard Matrix}
The Hadamard matrix $H_N$ of size $N$ is an $N\times N$ matrix. The rows and columns are indexed by strings $a$ and $b$ respectively where $a,b\in \{0,1\}^n$ and the $(a,b)$-th entry of $H_N$ is defined to be $\frac{1}{\sqrt{N}} (-1)^{\left<a,b\right>_2}$. Equivalently,
\[ H_N(a,b):=\begin{cases} \frac{-1}{\sqrt{N}} &\text{ if } \sum_{i=1}^n a_ib_i\equiv {1 \mod  2} \\
 \frac{+1}{\sqrt{N}} & \text{ if } \sum_{i=1}^n a_ib_i\equiv {0 \mod  2}   \end{cases}  \]

\paragraph*{The Forrelation Function}
The Forrelation Function $forr:\mathbb{R}^{2N}\rightarrow\mathbb{R}$ is defined as follows.
Let $z\in \mathbb{R}^{2N}$ and $x,y\in \mathbb{R}^N$ be such that $z=(x,y)$. Then, \[ forr(z) := \frac{1}{N} \langle x , H_N y \rangle  \]

\paragraph*{The $\oplus^k$ Forrelation Decision Problem}
\begin{definition} [The $\oplus^k$ Forrelation Decision Problem]
\label{forrelationproblem}
The Forrelation Decision Problem is the partial Boolean function $F:\{-1,1\}^{2N}\rightarrow \{-1,1\}$ defined as follows. For $z\in \{-1,1\}^{2N}$, let
\[F(z):=\begin{cases} -1 & \text{ if } forr(z ) \ge \epsilon/2 \\ 1 & \text{ if } forr(z) \le \epsilon/4\\  \text{undefined} & \text{ otherwise } \end{cases} \]
The $\oplus^k$ Forrelation Decision Problem $F^{(k)}:\{-1,1\}^{2kN}\rightarrow \{-1,1\}$ is defined as the \textsc{xor} of $k$ independent copies of $F$. To be precise, for every $z_1,\ldots,z_k\in \{-1,1\}^{2N}$, let
\[ F^{(k)}(z_1,\ldots,z_k):=\prod_{j=1}^k F(z_j)\]
\end{definition}

\paragraph*{The Gaussian Forrelation Distribution $\mathcal{G}$}
\begin{definition}\label{gaussian} Let $\mathcal{G}$ denote the Gaussian distribution over $\mathbb{R}^{2N}$ defined by the following process.
\begin{enumerate}
\item Sample $x_1,\ldots,x_N\sim \mathcal{N}(0,\epsilon)$ independently.
\item Let $x=(x_1,\ldots,x_N)$ and $y=H_N x$.
\item Output $(x,y)$.
\end{enumerate}
\end{definition}
The distribution $\mathcal{G}$ can be equivalently expressed as $\mathcal{N}\left(0, \epsilon \begin{bmatrix} \mathbb{I}_N & H_N \\ H_N & \mathbb{I}_N \end{bmatrix}\right)$.

\paragraph*{Moments of $\mathcal{G}$} We state some useful facts about the moments of $\mathcal{G}$. We use the following notation to refer to the moments of $\mathcal{G}$. For subsets $S,T\subseteq [N]$, let
$\widehat{\mathcal{G}}(S,T):= \underset{(x,y)\sim \mathcal{G}}{\E} \left[ \prod_{i\in S}x_i \prod_{j\in T}y_j \right] $. The following claim and its proof appear as Claim 4.1 in \cite{raztal}. We omit the proof.
\begin{claim} \label{claim1} Let $S,T\subseteq [N]$ and $i,j\in [N]$. Let $i_1=|S|,i_2=|T|$. Then,
\begin{enumerate}
\item $\widehat{\mathcal{G}}(\{i\},\{j\})=\epsilon N^{-1/2}(-1)^{\left<i,j\right>_2}$.
\item $\widehat{\mathcal{G}}(S,T)=0$ if $i_1\neq i_2$.
\item $\left|\widehat{\mathcal{G}}(S,T)\right|\le \epsilon ^{i} i!N^{-i/2}$ if $i=i_1=i_2$.
\end{enumerate}
\end{claim}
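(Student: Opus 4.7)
The plan is to apply Wick's formula (Isserlis' theorem) to the centered Gaussian vector $(x,y)\sim\mathcal{G}$. Reading the pairwise covariances off the covariance matrix $\epsilon\begin{bmatrix}\mathbb{I}_N & H_N\\ H_N & \mathbb{I}_N\end{bmatrix}$, we have $\E[x_i x_j]=\epsilon\,\delta_{ij}$, $\E[y_i y_j]=\epsilon\,\delta_{ij}$, and $\E[x_i y_j]=\epsilon\,H_N(i,j)=\epsilon\,N^{-1/2}(-1)^{\langle i,j\rangle_2}$. Item~(1) is then immediate, since $\widehat{\mathcal{G}}(\{i\},\{j\})=\E[x_i y_j]$ is a single covariance entry.

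For items~(2) and~(3), the workhorse is that for a zero-mean Gaussian vector, $\E[\prod_\ell v_\ell]$ equals the sum, over perfect matchings of the variables $\{v_\ell\}$, of the product of pairwise covariances of matched pairs (and vanishes identically if the collection has odd size). Applied to $\widehat{\mathcal{G}}(S,T)=\E[\prod_{s\in S} x_s\prod_{t\in T} y_t]$, I will argue that no matching containing an $x$-$x$ pair or a $y$-$y$ pair can contribute: any such pair involves two distinct indices from $S$ (respectively $T$) because $S$ and $T$ are sets, and the covariances $\E[x_s x_{s'}]=\epsilon\,\delta_{s,s'}$, $\E[y_t y_{t'}]=\epsilon\,\delta_{t,t'}$ then force the contribution to vanish. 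Consequently every surviving matching pairs each $x_s$ with some $y_t$, which forces $|S|=|T|$ and gives item~(2).

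For item~(3), assuming $i_1=i_2=i$, the surviving matchings are exactly the bijections from $S$ to $T$, of which there are $i!$. Each such bijection contributes a product of $i$ covariances of the form $\epsilon\,N^{-1/2}(-1)^{\langle s,t\rangle_2}$, so has absolute value exactly $\epsilon^i N^{-i/2}$. A triangle-inequality sum over the $i!$ matchings then yields $|\widehat{\mathcal{G}}(S,T)|\le \epsilon^i\, i!\, N^{-i/2}$, as claimed. I do not foresee a real obstacle here; the one subtlety worth explicit care is that distinctness of indices within $S$ and within $T$ (because they are sets rather than multisets) is what kills all $x$-$x$ and $y$-$y$ pairings and drives both the vanishing in~(2) and the clean matching count in~(3). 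If one wanted to generalize the claim to multisets, one would instead obtain additional terms with $\epsilon^{|S\cap T|\text{-like}}$ factors from the diagonal covariances.
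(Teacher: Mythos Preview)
Your proposal is correct and follows essentially the same approach as the original proof in \cite{raztal} (which this paper cites and omits): apply Isserlis' theorem to the Gaussian $(x,y)$, read off the covariances from $\epsilon\begin{bmatrix}\mathbb{I}_N & H_N\\ H_N & \mathbb{I}_N\end{bmatrix}$, observe that distinctness of indices within $S$ and within $T$ kills all same-block pairings, and then count the $i!$ cross-matchings and bound via the triangle inequality. There is nothing to add.
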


\subsection{Hard Distributions over $\mathbb{R}^{2kN}$}
Let $\mathcal{P},\mathcal{Q}$ be two probability distributions on the domain $\mathbb{D}:=\mathbb{R}^{2N}$. Let $S\subseteq [k]$. We define $ \mathcal{P}^S \mathcal{Q}^{\bar{S}}$ to be the distribution on $\mathbb{D}^k$ defined by sampling $x=(x_1,\ldots,x_k)$ where $x_1,\ldots,x_k\in \mathbb{D}$ are sampled as follows.
\[ \text{For each $j\in [k]$, independently sample } \begin{cases} x_j \sim \mathcal{P}& \text{ if }j\in S\\
x_j\sim \mathcal{Q} & \text{ if } j\in\bar{S} \end{cases} \]
Note that for every $I=(I_1,\ldots,I_k)\subseteq [2kN]$, where $I_1,\ldots,I_k\subseteq[2N]$, we have the following.
\[ \widehat{\mathcal{P}^S\mathcal{Q}^{\bar{S}} }(I)=\prod_{j\in S} \widehat{\mathcal{P}}(I_j) \cdot \prod_{j\notin S}\widehat{\mathcal{Q}}(I_j) \]
\begin{definition}\label{harddistributions}
Let $\mathcal{G}$ be the distribution in \cref{gaussian} and $\mathcal{U}=U_{2N}$. Define a pair of distributions $\mu_0^{(k)},\mu_1^{(k)}$ on $\mathbb{R}^{2kN}$ as follows.
\[ \mu_0^{(k)}:= \frac{1}{2^{k-1}}\sum_{\substack{S\subseteq [k]\\ |S|\text{ is even }}}\mathcal{G}^S \mathcal{U}^{\bar{S}}\quad\quad\text{ and }\quad\quad\mu_1^{(k)}:= \frac{1}{2^{k-1}}\sum_{\substack{S\subseteq [k]\\ |S|\text{ is odd }}}\mathcal{G}^S \mathcal{U}^{\bar{S}}\]
\end{definition}
\begin{lemma} \label{moments} Let $I=(I_1,\ldots,I_k)\subseteq [2kN]$, where each $I_j \subseteq [2N]$.
\begin{enumerate}
\item If $|I|<2k$ or if $I_j=\emptyset$ for some $j\in [k]$, then  $\widehat{\mu_0^{(k)}}(I)=\widehat{\mu_1^{(k)}}(I)$.
\item If $|I_j|$ is odd for some $j\in [k]$, then $\widehat{\mu_0^{(k)}}(I)=\widehat{\mu_1^{(k)}}(I)$.
\item Let $|I|=2i$ for some $i\in \mathbb{N}$. Then,
$\left|\widehat{\mu_0^{(k)}}(I)-\widehat{\mu_1^{(k)}}(I)\right| \le 2^{-k+1} \epsilon^{i}N^{-i/2}i! $.
\end{enumerate}
\end{lemma}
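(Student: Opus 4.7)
My plan is to expand each moment via the product-moment formula given right before \cref{harddistributions}:
\[
\widehat{\mathcal{G}^S \mathcal{U}^{\bar S}}(I) \;=\; \prod_{j\in S}\widehat{\mathcal{G}}(I_j)\cdot \prod_{j\notin S}\widehat{\mathcal{U}}(I_j).
\]
Since $\mathcal{U}=U_{2N}$, the moment $\widehat{\mathcal{U}}(I_j)=\E_{z\sim U_{2N}}\bigl[\prod_{i\in I_j}z_i\bigr]$ equals $1$ if $I_j=\emptyset$ and $0$ otherwise. Setting $J:=\{j\in[k]:I_j\neq\emptyset\}$, only subsets $S\supseteq J$ contribute, and each contributing term equals $\prod_{j\in J}\widehat{\mathcal{G}}(I_j)$. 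Thus
\[
\widehat{\mu_b^{(k)}}(I) \;=\; \frac{\prod_{j\in J}\widehat{\mathcal{G}}(I_j)}{2^{k-1}}\cdot \bigl|\{S\supseteq J : |S|\equiv b\pmod 2\}\bigr|.
\]

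For part 1, assume $I_j=\emptyset$ for some $j$, i.e.\ $r:=k-|J|\ge 1$. The subsets of $[k]\setminus J$ split evenly by parity ($2^{r-1}$ of each), so the two counts above agree and $\widehat{\mu_0^{(k)}}(I)=\widehat{\mu_1^{(k)}}(I)$. The case $|I|<2k$ reduces to this: since $|I|=\sum_j |I_j|$ and there are $k$ parts, some $|I_j|\le 1$; if $|I_j|=0$ this is the previous argument, and if $|I_j|=1$ it falls under part 2. For part 2, if some $|I_j|$ is odd, writing $I_j=(S_j,T_j)$ with $S_j,T_j\subseteq[N]$ forces $|S_j|\neq|T_j|$, so $\widehat{\mathcal{G}}(I_j)=0$ by \cref{claim1}(2); this kills the product in the remaining case $J=[k]$, and $J\subsetneq[k]$ is already handled.

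For part 3, assume $J=[k]$ (else the difference is $0$) and each $|I_j|$ is even (else $\widehat{\mathcal{G}}(I_j)=0$). Write $|I_j|=2i_j$ with $i_j\ge 1$ and $\sum_j i_j=i$. Only $S=[k]$ contributes to the sums, landing in $\mu_0^{(k)}$ or $\mu_1^{(k)}$ according to the parity of $k$, so
\[
\bigl|\widehat{\mu_0^{(k)}}(I)-\widehat{\mu_1^{(k)}}(I)\bigr| \;=\; \frac{1}{2^{k-1}}\prod_{j=1}^k \bigl|\widehat{\mathcal{G}}(I_j)\bigr| \;\le\; \frac{1}{2^{k-1}}\prod_{j=1}^k \epsilon^{i_j} i_j!\, N^{-i_j/2}
\]
by \cref{claim1}(3). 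Combining exponents and using $\prod_j i_j!\le i!$ (valid because the multinomial coefficient $\binom{i}{i_1,\dots,i_k}$ is a positive integer whenever the $i_j$ are non-negative integers summing to $i$) yields the desired bound $2^{-k+1}\epsilon^i N^{-i/2} i!$.

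The only step that needs any care is the parity-cancellation count, where one must verify that $|J|\subsetneq[k]$ implies an exact match between even- and odd-sized supersets of $J$; this is elementary but worth stating carefully because the normalization $2^{k-1}$ needs to align. Everything else is direct bookkeeping built on top of \cref{claim1}.
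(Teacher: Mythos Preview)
Your proof is correct and follows essentially the same route as the paper. The only cosmetic difference is that the paper packages the difference via the factorization $\widehat{\mu_0^{(k)}}(I)-\widehat{\mu_1^{(k)}}(I)=\frac{1}{2^{k-1}}\prod_{j=1}^{k}\bigl(\widehat{\mathcal{U}}(I_j)-\widehat{\mathcal{G}}(I_j)\bigr)$, whereas you compute each $\widehat{\mu_b^{(k)}}(I)$ separately and match parity counts; both arguments then invoke \cref{claim1} and the multinomial bound $\prod_j i_j!\le i!$ in exactly the same way.
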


\begin{proof}[Proof of \cref{moments}.]
Note that we have the following equality
\begin{align}
\begin{split}\label{differencemoments}
\widehat{\mu_0^{(k)}}(I) - \widehat{\mu_1^{(k)}}(I)&\triangleq\frac{1}{2^{k-1}} \left( \sum_{\substack{S\subseteq [k]\\ |S|\text{ is even }}}\widehat{\mathcal{G}^S \mathcal{U}^{\bar{S}} }(I)- \sum_{\substack{S\subseteq [k]\\ |S|\text{ is odd }}}\widehat{\mathcal{G}^S \mathcal{U}^{\bar{S}}}(I)\right)  \\
&= \frac{1}{2^{k-1}}  \left( \sum_{S\subseteq [k]} (-1)^{|S|} \widehat{\mathcal{G}^S\mathcal{U}^{\bar{S}}}(I)\right)  \\
&= \frac{1}{2^{k-1}}   \left( \sum_{S\subseteq [k]} (-1)^{|S|} \prod_{j\in S} \widehat{\mathcal{G}}(I_j) \prod_{j\notin S} \widehat{\mathcal{U}}(I_j)\right)  \\
& = \frac{1}{2^{k-1}}  \prod_{j=1}^{k} \left(\widehat{\mathcal{U}}(I_j) -\widehat{\mathcal{G}}(I_j)\right)
\end{split}
\end{align}
\begin{enumerate}[(1.)]
\item If $|I|<2k$ then there exists some $j\in [k]$ such that $|I_j|<2$. If $|I_j|=0$, then $\widehat{\mathcal{G}}(I_j)=\widehat{\mathcal{U}}(I_j)=1$. If $| I_j|=1$, \cref{claim1} implies that $\widehat{\mathcal{G}}(I_j)=\widehat{\mathcal{U}}(I_j)=0$. This along with \cref{differencemoments} implies that $\widehat{\mu_0^{(k)}}(I)=\widehat{\mu_1^{(k)}}(I)$.
\item Suppose $|I_j|$ is odd for some $j\in [k]$. \cref{claim1} implies that $\widehat{\mathcal{G}}(I_j)=\widehat{\mathcal{U}}(I_j)=0$. This, along with \cref{differencemoments} implies that $\widehat{\mu_0^{(k)}}(I)=\widehat{\mu_1^{(k)}}(I)$.
\item Due to item (1.) and (2.) of this lemma, we may assume that $I_j\neq \emptyset$ and $|I_j|$ is even for every $j\in [k]$, otherwise $\widehat{\mu_0^{(k)}}(I)-\widehat{\mu_1^{(k)}}(I)=0$ and the inequality is trivially true. For each $j\in [k]$, let $|I_j|=2i_j$ for some $i_j\in \mathbb{N}$. \cref{claim1} states that if $|I_j|=2i_j$, then $|\widehat{\mathcal{G} }(I_j) |\le \epsilon^{i_j} i_j!N^{-i_j/2}$. Since $I_j\neq \emptyset$, we have $\widehat{\mathcal{U}}(I_j)= 0$. This, along with \cref{differencemoments} implies that
\begin{align*}
\begin{split}
\left|\widehat{\mu_0^{(k)}}(I_1,\ldots,I_k) - \widehat{\mu_1^{(k)}}(I_1,\ldots,I_k)\right|& =\frac{1}{2^{k-1}}   \left|  \prod_{j=1}^{k} \left(\widehat{\mathcal{G}}(I_j) -\widehat{\mathcal{U}}(I_j)\right)\right| \\
&\le \frac{1}{2^{k-1}}  \prod_{j=1}^{k} \epsilon^{i_j}  i_j!N^{-i_j/2}\\
&=  \frac{1}{2^{k-1}}  \epsilon^{i} N^{-i/2}  \prod_{j=1}^{k} i_j! \le 2^{-k+1} \epsilon^{i} N^{-i/2} i!\\
\end{split}
\end{align*}
\end{enumerate}
This completes the proof of \cref{moments}.
\end{proof}

\subsection{Rounding Distributions to the Boolean Hypercube}

Let $trnc:\mathbb{R}\rightarrow [-1,1]$ denote the truncation function, whose action on $a\in \mathbb{R}$ is given by
\[ trnc(a)=\begin{cases} sign(a) & \text{if } a\notin [-1,1] \\ a&\text{otherwise} \end{cases}\]
For $l\in \mathbb{R}$, we also use $trnc:\mathbb{R}^l\rightarrow[-1,1]^l$ to refer to the function that applies the above truncation function coordinate-wise.

\begin{definition}\label{rounding}
Let $\mu$ be any distribution on $\mathbb{R}^{M}$. We define the rounded distribution $\tilde{\mu}$ on $\{-1,1\}^{M}$ as follows.
\begin{enumerate}
\item Sample $z\sim \mu$.
\item For each coordinate $i\in [M]$, independently, let $z'_i=1$ with probability $\frac{1+trnc(z_i)}{2}$ and $z'_i=-1$ with probability $\frac{1-trnc(z_i)}{2}$.
\item Output $z'=(z'_1,\ldots,z'_M)$.
\end{enumerate}
Let $z_0\in \mathbb{R}^M$ and $\mu$ be the distribution whose support is $\{z_0\}$. We use $\tilde{z}_0$ to refer to $\tilde{\mu}$.
\end{definition}

We show some useful facts about expectations of multilinear functions over these distributions.

\begin{claim} \label{multilinearity} Let $H:\mathbb{R}^M\rightarrow \mathbb{R}$ be any multilinear polynomial and $a\in \mathbb{R}^M$. Let $\mu$ be a distribution on $\mathbb{R}^M$ where each coordinate is sampled independently of the rest so that $\E_{z\sim \mu}[z]=a$. Then,
\[ \E_{z\sim \mu}[ H(z)] = H(a)\]
\end{claim}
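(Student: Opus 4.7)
The plan is to exploit the two defining features of the setup directly: the multilinearity of $H$ and the coordinate-wise independence of $\mu$. Since $H$ is a multilinear polynomial on $M$ variables, I can write it in the monomial basis as $H(z)=\sum_{S\subseteq [M]} c_S \prod_{i\in S} z_i$ for some real coefficients $c_S$, noting that this sum is finite because multilinearity caps the degree in each variable at $1$.

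Next I would apply linearity of expectation to push $\E_{z\sim \mu}$ inside the sum, obtaining
\[
\E_{z\sim \mu}[H(z)] = \sum_{S\subseteq[M]} c_S \, \E_{z\sim \mu}\Bigl[\prod_{i\in S} z_i\Bigr].
\]
Now the key step: because $\mu$ is a product distribution on the coordinates, the expectation of the product factors as the product of expectations, giving $\E_{z\sim \mu}\bigl[\prod_{i\in S} z_i\bigr]=\prod_{i\in S}\E_{z\sim \mu}[z_i]=\prod_{i\in S} a_i$, where the last equality uses the hypothesis $\E_{z\sim \mu}[z]=a$. Substituting back yields $\sum_S c_S \prod_{i\in S} a_i = H(a)$, completing the proof.

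There is essentially no obstacle here; the only subtlety worth flagging is that independence is needed to convert the expectation of a product of distinct coordinates into a product of expectations (this is exactly where multilinearity of $H$ is used, since without it we would see powers like $z_i^2$ whose expectation is not determined by $a_i$ alone). Hence the claim would fail either if $H$ had a repeated variable in some monomial or if the coordinates of $\mu$ were correlated, and the proof makes clear that these two hypotheses are what matter.
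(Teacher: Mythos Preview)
Your proposal is correct and is essentially the same argument as the paper's: both verify the identity on monomials $\prod_{i\in S} z_i$ using independence to factor the expectation, and then extend to all multilinear $H$ by linearity. The only cosmetic difference is that the paper phrases the last step as ``both sides are linear in $H$, so checking on characters suffices,'' whereas you expand $H$ in the monomial basis first; the content is identical.
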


\begin{corollary} \label{claim2}  Let $H:\mathbb{R}^{M}\rightarrow \mathbb{R}$ be any multilinear polynomial. Let $\mu$ be any distribution on $\mathbb{R}^M$ and $\tilde{\mu}$ be the distribution on $\{-1,1\}^{M}$ obtained by rounding $\mu$ as in \cref{rounding}. Then,
\[ \underset{z \sim\tilde{\mu}}{\E}[H(z)]= \underset{z\sim \mu}{\E}[H(trnc(z))]  \]
\end{corollary}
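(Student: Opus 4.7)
The plan is to derive the stated corollary as an immediate consequence of \cref{multilinearity} via the tower property of conditional expectation. The rounded distribution $\tilde{\mu}$ has a two-stage structure: first one samples $z \sim \mu$, and then one samples the Boolean-valued $z' \in \{-1,1\}^M$ coordinatewise from a product distribution that depends on $z$. The structural fact to exploit is that, conditional on $z$, the coordinates $z'_1,\ldots,z'_M$ are independent, and a one-line computation from step 2 of \cref{rounding} gives
\[ \E[z'_i \mid z] \;=\; 1\cdot\frac{1+trnc(z_i)}{2} \,+\, (-1)\cdot\frac{1-trnc(z_i)}{2} \;=\; trnc(z_i). \]

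Given this, I would apply \cref{multilinearity} to the multilinear polynomial $H$ together with the conditional distribution of $z'$ given $z$: this distribution has independent coordinates whose mean vector is exactly $trnc(z)$, so the claim yields $\E_{z' \mid z}[H(z')] = H(trnc(z))$. Taking the outer expectation over $z \sim \mu$ and invoking the tower property then gives
\[ \underset{z' \sim \tilde{\mu}}{\E}[H(z')] \;=\; \underset{z \sim \mu}{\E}\Bigl[\, \underset{z'\mid z}{\E}[H(z')] \,\Bigr] \;=\; \underset{z \sim \mu}{\E}\bigl[H(trnc(z))\bigr], \]
which is exactly the desired identity.

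There is no substantive obstacle here; the corollary is really a direct packaging of \cref{multilinearity}. The only points that deserve a little care are verifying that the coordinatewise conditional mean really equals $trnc(z_i)$ (the computation above), and observing that the conditional independence of the $z'_i$ given $z$ is precisely the independence hypothesis required by \cref{multilinearity}. No further properties of $\mu$ are used, and the truncation $trnc$ enters only through its appearance in the rounding rule, which is why $trnc$ must appear on the right-hand side (without it, the values $\frac{1 \pm z_i}{2}$ would fail to be valid probabilities whenever $z_i \notin [-1,1]$).
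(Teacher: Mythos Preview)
Your proposal is correct and follows essentially the same approach as the paper's proof: verify that the conditional distribution of $z'$ given $z$ has independent coordinates with mean $trnc(z)$, apply \cref{multilinearity} to the inner expectation, and then take the outer expectation over $z\sim\mu$. The paper's write-up is terser but structurally identical.
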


\begin{claim}\label{claim5} Let $H:\mathbb{R}^{2kN}\rightarrow\mathbb{R}$ be any multilinear polynomial mapping $\{-1, 1\}^{2kN}$ into $[-1,1]$. Let $z_0$ and $P$ be in $[-1/2,1/2]^{2kN}$. Then,
\[ \E_{z\sim \mathcal{G}^{(k)}} \left[\left| H(trnc(z_0+P\cdot z))-H(z_0+P\cdot z) \right|\right] \le O\left( \frac{1}{N^{5k^2}}\right) \]
\end{claim}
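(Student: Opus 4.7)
The plan is to exploit that the truncation $trnc$ is the identity on $[-1,1]^{2kN}$, so $H(trnc(y))-H(y)=0$ whenever $y:=z_0+P\cdot z$ lies in the cube. Since $z_0,P\in[-1/2,1/2]^{2kN}$, each coordinate satisfies $|y_i|\le 1/2+|z_i|/2$, which yields the uniform domination $\max(1,|y_i|)\le\max(1,|z_i|)$ for every $i$. In particular, the ``bad'' event $B:=\{\exists i:|y_i|>1\}$ is contained in $\{\exists i:|z_i|>1\}$, and since every coordinate of $z\sim\mathcal{G}^{(k)}$ is marginally $\mathcal{N}(0,\epsilon)$, the Gaussian tail and a union bound give $\p[B]\le 4kN\cdot e^{-1/(2\epsilon)}=4kN\cdot N^{-30k^2}$.

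The key pointwise tool is a ``max-product'' bound: any multilinear $H:\mathbb{R}^n\to\mathbb{R}$ mapping $[-1,1]^n$ into $[-1,1]$ satisfies
\[
|H(y)|\;\le\;\prod_{i=1}^n \max(1,|y_i|)\qquad\text{for all }y\in\mathbb{R}^n.
\]
I would prove this by induction on $n$: since $H$ is affine in $y_n$, write $H(y)=\tfrac{1+y_n}{2}H(y_{<n},1)+\tfrac{1-y_n}{2}H(y_{<n},-1)$ and combine the identity $\tfrac{|1+t|+|1-t|}{2}=\max(1,|t|)$ with the inductive hypothesis applied to $H(\cdot,\pm 1)$ (each still bounded by $1$ on $[-1,1]^{n-1}$). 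Together with $|H(trnc(y))|\le 1$ from the preliminaries, this shows the integrand is $0$ off $B$ and at most $1+\prod_i\max(1,|z_i|)$ on $B$. Since $\prod_i\max(1,|z_i|)=1$ off $B$, splitting the expectation gives
\[
\E\bigl[|H(trnc(y))-H(y)|\bigr]\;\le\;2\,\p[B]\;+\;\Bigl(\E\bigl[\textstyle\prod_{i=1}^{2kN}\max(1,|z_i|)\bigr]-1\Bigr).
\]

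It remains to bound the final expectation. Factor first over the $k$ independent copies of $\mathcal{G}$; within a single copy write $z=(x,H_Nx)$ with $x\sim\mathcal{N}(0,\epsilon I_N)$. The two halves are deterministically coupled, but Cauchy--Schwarz separates them, and orthogonality of $H_N$ makes each half an iid $\mathcal{N}(0,\epsilon)^N$ vector, reducing the full expectation to $\bigl(\E_{\xi\sim\mathcal{N}(0,\epsilon)}[\max(1,\xi^2)]\bigr)^{kN}$. A one-dimensional Gaussian tail calculation gives $\E[\max(1,\xi^2)]=1+O(\sqrt{\epsilon}\cdot N^{-30k^2})$, so raising to the power $kN$ and subtracting $1$ yields $O(kN^{1-30k^2})$, which for $k\ge 1$ is far smaller than the target $O(N^{-5k^2})$. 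The main obstacle is the within-copy coupling $y=H_Nx$, which prevents factoring the expectation inside a single copy of $\mathcal{G}$; Cauchy--Schwarz decoupling---together with the multilinear max-product lemma---is what reduces the whole problem to an elementary one-dimensional Gaussian tail estimate.
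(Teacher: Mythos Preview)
Your argument is correct and follows the same strategy as the Raz--Tal proof that the paper defers to: reduce to the bad event where some coordinate leaves $[-1,1]$, control $|H(y)|$ pointwise via the max-product lemma, and then handle the product expectation by Cauchy--Schwarz to decouple the two halves of each copy of $\mathcal{G}$. The one-dimensional tail computation and the final arithmetic are fine.

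There is one small slip worth fixing. You write ``$\prod_i\max(1,|z_i|)=1$ off $B$'' with $B:=\{\exists i:|y_i|>1\}$, but this is false: if some $P_i$ is small (e.g.\ $P_i=0$) then $|y_i|\le 1$ can coexist with $|z_i|>1$. The containment you proved goes the other way: $B\subseteq B':=\{\exists i:|z_i|>1\}$. The fix is immediate---run the splitting with $B'$ instead of $B$. The integrand is still zero on $(B')^c$ (since $(B')^c\subseteq B^c$), the union-bound estimate $\p[B']\le 4kN\cdot N^{-30k^2}$ is exactly what you already computed, and on $(B')^c$ the product $\prod_i\max(1,|z_i|)$ is genuinely equal to $1$, so the displayed inequality
\[
\E\bigl[|H(trnc(y))-H(y)|\bigr]\;\le\;2\,\p[B']\;+\;\Bigl(\E\bigl[\textstyle\prod_{i}\max(1,|z_i|)\bigr]-1\Bigr)
\]
now holds verbatim. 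Everything downstream is unchanged.
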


\begin{proof}[Proof of \cref{multilinearity}] Let $T\subseteq[M]$ and $z\sim\mu$. The given assumption on $\mu$ is that each $z_j$ for $j\in [M]$ is sampled independently so that $\E_{z\sim \mu}[z_j]=a_j$. This implies that $\E_{z\sim \mu}[\chi_T(z)]\triangleq  \E_{z\sim \mu} \left[ \prod_{j\in T}z_j\right] = \prod_{j\in T} a_j  \triangleq\chi_T(a)$. Note that the quantities $\E_{z\sim \mu}[H(z)]$ and $H(a)$ are both linear with respect to $H$. Since we have shown that  $\E_{z\sim \mu}[H(z)]=H(a)$  for all character functions $H$, this observation implies that $\E_{z\sim \mu}[H(z)]=H(a)$ for all multilinear functions $H$.
\end{proof}

\begin{proof}[Proof of \cref{claim2} from \cref{multilinearity}] Observe that for every $z\in \mathbb{R}^M$, the  distribution $\tilde{z}$ as in \cref{rounding} satisfies the hypothesis in \cref{multilinearity} with $a=trnc(z)$. \cref{multilinearity} implies that $\underset{z'\sim \tilde{z}}{\E}[H(z')]=  H(trnc(z))$. Therefore, $ \underset{z\sim \tilde{\mu}}{\E}[H(z)]\triangleq \underset{z\sim \mu}{\E} \hspace{0.1cm} \underset{z'\sim \tilde{z}}{\E}[H(z') \mid z]= \underset{z\sim \mu}{\E} [H(trnc(z))]$. \end{proof}

\cref{claim2} is similar to Equation (2) from \cite{raztal} and Claim 2.2 from \cite{grt}. \cref{claim5} is similar to Claim 5.3 from \cite{raztal}. The proof of this is also identical, so we omit it. We remark that the bound in \cite{raztal} is $8\cdot N^{-2}$ as opposed to our bound of $O\left( N^{-5k^2}\right).$ This difference in parameters arises from our choice of $\epsilon=\frac{1}{60k^2\ln N}$ as opposed to their choice of $\epsilon=\frac{1}{24\ln N}$. We also remark that the claim as stated in \cite{raztal} is for scalars $P\in [-1/2,1/2]$ as opposed to our assumption of $P\in [-1/2,1/2]^{2kN}$. However, their proof works under this assumption as well.

\subsection{The Forrelation Distribution}

Let $k\in\mathbb{N}$. Let $\tilde{\mu}_0^{(k)}$ and $\tilde{\mu}_1^{(k)}$ (respectively $\mathcal{\tilde{G}}$) be distributions over $\{-1, 1\}^{2kN}$ (respectively $\{-1,1\}^{2N}$) generated from rounding $\mu_1^{(k)}$ and $\mu_0^{(k)}$ (respectively $\mathcal{G}$) according to \cref{rounding}. Observe that we may alternatively define  $\tilde{\mu}_0^{(k)}$ and $\tilde{\mu}_1^{(k)}$ as follows.
\begin{definition}\label{roundingharddistributions} Let $\mathcal{G}$ be as in \cref{gaussian} and $\mathcal{U}=U_{2N}$. Let
\[ \tilde{\mu}_0^{(k)}:= \frac{1}{2^{k-1}}\sum_{\substack{S\subseteq [k]\\ |S|\text{ is even }}}\tilde{\mathcal{G}}^S {\mathcal{U}}^{\bar{S}}\quad\quad\text{ and } \quad\quad \tilde{\mu}_1^{(k)}:= \frac{1}{2^{k-1}}\sum_{\substack{S\subseteq [k]\\ |S|\text{ is odd }}}\tilde{\mathcal{G}}^S {\mathcal{U}}^{\bar{S}}\]
We refer to $\tilde{\mu}_1^{(1)}\triangleq \tilde{\mathcal{G}}$ as the Forrelation Distribution.
\end{definition}

We show that the distributions $\tilde{\mu}_1^{(k)}$ and $\tilde{\mu}_0^{(k)}$ put considerable mass on the \textsc{yes} and \textsc{no} instances of $F^{(k)}$, respectively, where $F^{(k)}$ is the $\oplus^k$ Forrelation Decision Problem as in \cref{forrelationproblem}.

\begin{lemma}\label{concentrationcorollary2} Let $\tilde{\mu}_0^{(k)}$ and $\tilde{\mu}_1^{(k)}$ be distributions as in \cref{roundingharddistributions} and $F^{(k)}$ be the $\oplus^k$ Forrelation Decision Problem as in \cref{forrelationproblem}. Then,
\[ \underset{z\sim\tilde{\mu}_0^{(k)}}{\p}[ F^{(k)}(z)=1 ] \ge 1 - O\left( \frac{k}{N^{6k^2}} \right) \quad\text{ and }\quad \underset{z\sim \tilde{\mu}_1^{(k)} }{\p}[ F^{(k)}(z)=-1 ] \ge 1-  O\left( \frac{k}{N^{6k^2}} \right) \]
\end{lemma}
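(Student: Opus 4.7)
Since $F^{(k)}(z_1,\ldots,z_k)=\prod_{j=1}^k F(z_j)$ and, by \cref{roundingharddistributions}, $\tilde\mu_b^{(k)}$ is a uniform mixture over $S\subseteq[k]$ with $|S|\equiv b\pmod 2$ of the product distributions $\tilde{\mathcal{G}}^S\mathcal{U}^{\bar S}$, on any sample from $\tilde{\mathcal{G}}^S\mathcal{U}^{\bar S}$ we have $F^{(k)}=(-1)^{|S|}$ whenever $F(z_j)=-1$ for every $j\in S$ and $F(z_j)=+1$ for every $j\notin S$. Thus it suffices to prove the two single-copy statements
\[ \p_{z\sim \mathcal{U}}\bigl[F(z)\neq 1\bigr]\le O(N^{-6k^2}), \qquad \p_{z\sim \tilde{\mathcal{G}}}\bigl[F(z)\neq -1\bigr]\le O(N^{-6k^2}), \]
and then union-bound over the $k$ coordinates inside each $\tilde{\mathcal{G}}^S\mathcal{U}^{\bar S}$ before averaging over~$S$.

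\textbf{Uniform case.} For $z=(x,y)\in\{-1,1\}^{2N}$ uniform, $forr(z)=\frac{1}{N}\langle x,H_N y\rangle$ is a bilinear form in independent Rademacher variables with mean zero. Conditioning on $y$, the weight vector $\frac{1}{N}H_N y$ has $\ell_2$-norm $\frac{1}{\sqrt N}$, so Hoeffding gives $\p[forr(z)\ge \epsilon/4\mid y]\le 2\exp(-\Omega(\epsilon^2 N))$, and the same bound survives averaging over~$y$. Since $\epsilon^{-2}=O(k^4\log^2 N)=N^{o(1)}$ and $k=o(N^{1/50})$, this probability is $\exp(-N^{1-o(1)})$, which is vastly smaller than $N^{-6k^2}$.

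\textbf{Gaussian case.} Let $z=(x,H_N x)\sim \mathcal{G}$ with $x_i\sim\mathcal{N}(0,\epsilon)$ iid. Because $H_N^2=\mathbb{I}_N$, we have $forr(z)=\frac{1}{N}\|x\|^2$, so the chi-squared concentration inequality quoted in the preliminaries gives $forr(z)\in[\,7\epsilon/8,\,9\epsilon/8\,]$ except with probability $2e^{-N/512}$. Each coordinate of $z$ is marginally $\mathcal{N}(0,\epsilon)$, so the Gaussian tail bound yields $\p[|z_i|>1]\le 2e^{-1/(2\epsilon)}=2N^{-30k^2}$, and a union bound over the $2N$ coordinates shows that $trnc(z)=z$ (and thus $forr(trnc(z))=forr(z)$) except with probability $4N^{1-30k^2}$. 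Finally, the rounded sample can be written as $z'=trnc(z)+u$, where conditional on~$z$ the vector $u$ has independent coordinates with $\E[u_i\mid z]=0$ and $|u_i|\le 2$. Expanding the bilinear form,
\[ forr(z')=forr(trnc(z))+\tfrac{1}{N}\langle u_{[1:N]},H_N y\rangle+\tfrac{1}{N}\langle x,H_N u_{[N+1:2N]}\rangle+\tfrac{1}{N}\langle u_{[1:N]},H_N u_{[N+1:2N]}\rangle. \]
Conditional on $(x,y)$, the two linear cross-terms are sums of independent bounded mean-zero random variables whose weight vectors have $\ell_2$-norm $O(1/\sqrt N)$, so Hoeffding gives $|\cdot|\le\epsilon/8$ except with probability $\exp(-\Omega(\epsilon^2 N))$. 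The purely bilinear term is handled by Hanson--Wright applied to the Hadamard matrix (operator norm $1$, Frobenius norm $\sqrt N$), yielding the same order of tail bound. Intersecting these events and combining with the chi-squared and truncation estimates, $forr(z')\ge \epsilon/2$, i.e.\ $F(z')=-1$, except with probability $O(N^{-6k^2})$.

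\textbf{Main obstacle.} The main technical work lies in the Gaussian case: one must verify that the perturbation of $forr$ produced by the coordinate-wise rounding is substantially smaller than the gap $forr(z)-\epsilon/2\approx \epsilon/2 = \Theta(1/(k^2\log N))$, which forces the use of bilinear concentration rather than a black-box application of \cref{claim5}. Once these tail bounds are in place, the reduction to the single-copy statements via \cref{roundingharddistributions} is immediate, and the union bound across the $k$ coordinates and the averaging over $S$ yield the advertised $O(k/N^{6k^2})$ failure probability for both $\tilde\mu_0^{(k)}$ and $\tilde\mu_1^{(k)}$.
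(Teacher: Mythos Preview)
Your reduction to the single-copy statements and the union bound over the $k$ coordinates of each $\tilde{\mathcal{G}}^S\mathcal{U}^{\bar S}$ is exactly what the paper does. The paper also uses the same chi-squared bound for $forr(z)$ under $\mathcal{G}$ and the same Gaussian tail bound to control truncation. The one genuine difference is in how the rounding perturbation $forr(z')-forr(trnc(z))$ is handled: the paper treats $forr(z-z_0)$, $forr(x_0,y)-forr(z_0)$, and $forr(x,y_0)-forr(z_0)$ as degree-$\le 2$ polynomials on a biased product space and invokes the general hypercontractive tail bound (their Lemma~A.4), obtaining $e^{-\Omega(N^{1/4})}$, whereas you expand the bilinear form and apply Hoeffding to the linear cross-terms and Hanson--Wright to the bilinear term, obtaining the sharper $e^{-\Omega(\epsilon^2 N)}$. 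Your route is more elementary and quantitatively stronger; the paper's route is more uniform (one black-box lemma handles all three pieces at once) and reuses a tool already stated in their preliminaries. One small point to tighten: your claim that the linear cross-terms have weight vectors of $\ell_2$-norm $O(1/\sqrt N)$ requires $\|x\|_2^2=O(N)$, which is not almost sure for Gaussian $x$---you should explicitly intersect with the chi-squared event $\|x\|_2^2\le 2\epsilon N$ (which you have already established) before invoking Hoeffding.
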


The proofs of these use hypercontractivity to show concentration inequalities for low degree polynomials under product distributions on the Boolean hypercube. These proofs are technical and are deferred to the appendix.

\subsection{Closure under Restrictions}

\begin{definition} \label{restriction} Let $a\in \{-1,1,0\}^M$. Let $\rho_a:\mathbb{R}^{M} \rightarrow\mathbb{R}^{M}$ be a restriction defined as follows. For $v\in \mathbb{R}^M$, let $\rho_a(v) \in \mathbb{R}^M$ be such that for all $j\in [M]$,
\[ (\rho_a(v)) (j):=\begin{cases} v(j) & \text{if } a(j)=0\\ a(j) & \text{otherwise} \end{cases} \]
For a function $F:\{-1,1\}^M\rightarrow \mathbb{R}$, the restricted function $F\circ\rho_v:\{-1,1\}^M\rightarrow\mathbb{R}$ is defined at $z\in \{-1,1\}^M$ by $(F\circ \rho_v)(z) := F(\rho_v(z))$.

We say that a family $\mathcal{H}$ of Boolean functions in $M$ variables is {\it closed under restrictions} if for all restrictions $v\in \{-1,1,0\}^M$ and $H\in\mathcal{H}$, the  restricted function $H\circ \rho_v$ is in $\mathcal{H}$.
\end{definition}

\section{The Main Result}

Let $N\in \mathbb{N}$ be a parameter describing the input size. We will assume that $N$ is a sufficiently large power of 2. Let $k\in \mathbb{N}$. We assume that $k=o(N^{1/50})$. Let $\epsilon=\frac{1}{60k^2 \ln N}$ be the parameter defining $\mathcal{G}$ as before.

\begin{theorem} \label{theorem1}  Let $\mathcal{H}$ be a family of Boolean functions on $2kN$ variables, each of which maps $\{-1,1\}^{2kN}$ into $[-1,1]$. Assume that $\mathcal{H}$ is closed under restrictions. Let $\tilde{\mu}_0^{(k)},\tilde{\mu}_1^{(k)}$ be the distributions over $\{-1,1\}^{2kN}$ as in \cref{roundingharddistributions}. Then, for every $H\in \mathcal{H}$,
\[ \left| \underset{ z\sim \tilde{\mu}_0^{(k)}}{\E}\left[ H(z) \right]  -  \underset{ z\sim \tilde{\mu}_1^{(k)}}{\E}\left[ H(z) \right] \right| \le O\left( \frac{L_{2k}(\mathcal{H})}{N^{k/2}}\right) + o\left(  \frac{1}{N^{k/2}}\right)  \]
\end{theorem}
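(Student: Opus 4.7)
The plan is to generalize the $k=2$ overview in the introduction to arbitrary $k$. The proof splits into three stages: (i) pass from the rounded Boolean distributions $\tilde\mu_0^{(k)},\tilde\mu_1^{(k)}$ to their Gaussian-based lifts $\mu_0^{(k)},\mu_1^{(k)}$; (ii) establish a ``small-$p$'' distinguishing bound for $p\cdot\mu_0^{(k)}$ versus $p\cdot\mu_1^{(k)}$ together with its shifted version $H(z_0+\cdot)$; (iii) execute a $k$-dimensional random-walk telescoping argument.

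For stage (i), \cref{claim2} gives $\E_{z\sim\tilde\mu_b^{(k)}}[H(z)]=\E_{z\sim\mu_b^{(k)}}[H(trnc(z))]$ for $b\in\{0,1\}$. Writing $\mu_b^{(k)}=\frac{1}{2^{k-1}}\sum_{|S|\equiv b\,(\mathrm{mod}\,2)}\mathcal{G}^S\mathcal{U}^{\bar S}$ and observing that $trnc$ is the identity on coordinates sampled from $\mathcal{U}$, \cref{claim5} applied component-wise bounds $\E|H(trnc(z))-H(z)|=O(k\cdot N^{-5k^2})=o(N^{-k/2})$. Hence it suffices to bound $|\E_{\mu_0^{(k)}}[H]-\E_{\mu_1^{(k)}}[H]|$.

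For stage (ii), Fourier-expand $H$ and apply \cref{moments} to get $|\widehat{p\cdot\mu_0^{(k)}}(I)-\widehat{p\cdot\mu_1^{(k)}}(I)|=p^{|I|}\cdot|\widehat{\mu_0^{(k)}}(I)-\widehat{\mu_1^{(k)}}(I)|$; the lemma kills every $I$ with $|I|<2k$, with any empty $I_j$, or with any odd-size $I_j$. The surviving $|I|=2k$ terms (each $I_j$ of size exactly $2$) contribute $L_{2k}(H)\cdot p^{2k}\cdot 2^{-k+1}\epsilon^k k!\cdot N^{-k/2}$, and the choice $\epsilon=1/(60k^2\ln N)$ makes the combinatorial prefactor $O(1)$. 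The $|I|>2k$ contribution, using $p\le 1/(2N)$ together with Parseval and the $[-1,1]$-boundedness of $H$, is absorbed into $p^{2k+2}\cdot\mathrm{poly}(N,k)$. This yields
\[\left|\E_{z\sim p\cdot\mu_0^{(k)}}[H(z)]-\E_{z\sim p\cdot\mu_1^{(k)}}[H(z)]\right|\le p^{2k}\cdot O\!\left(\frac{L_{2k}(H)}{N^{k/2}}\right)+p^{2k+2}\cdot\mathrm{poly}(N,k).\]
I would then upgrade this to the shifted analog: for $z_0\in[-1/2,1/2]^{2kN}$ and $p\le 1/(2N)$ the same right-hand side bounds
\[\frac{1}{2^{k-1}}\left|\E_{z_1,\ldots,z_k\sim p\cdot\mathcal{G}}\!\left[\sum_{u\in\{0,1\}^k}(-1)^{k-|u|}H(z_0+(u_1 z_1,\ldots,u_k z_k))\right]\right|.\]
Following the \cite{chlt,chhl} template, I would realize the shift $H(z_0+\cdot)$ as an expectation over random restrictions $H\circ\rho$ of functions in $\mathcal H$ (using that any $z_0\in[-1/2,1/2]^{2kN}$ is expressible as $\E[\rho]$ for a suitable random $\rho\in\{-1,0,1\}^{2kN}$) and invoke closure of $\mathcal H$ under restrictions plus multilinearity to transfer the unshifted bound, with $L_{2k}(H)$ replaced by $L_{2k}(\mathcal H)$.

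For stage (iii), choose $T\approx N^{ck}$ with $c$ large enough that the final error is $o(N^{-k/2})$, and set $p=1/\sqrt T$. Sample $z_j^{(t)}\sim p\cdot\mathcal{G}$ independently for $j\in[k]$, $t\in[T]$, and define $z^{\le(\vec t)}\in\mathbb R^{2kN}$ whose $j$-th block is $\sum_{i=1}^{t_j}z_j^{(i)}$, for $\vec t=(t_1,\ldots,t_k)\in\{0,\ldots,T\}^k$. Then $z^{\le(T,\ldots,T)}\sim\mathcal{G}^k$, and by multilinearity replacing any axis $t_j$ by $0$ corresponds to integrating the $j$-th block against $\mathcal{U}$. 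Hence
\[\E_{\mu_0^{(k)}}[H]-\E_{\mu_1^{(k)}}[H]=\frac{1}{2^{k-1}}\sum_{v\in\{0,1\}^k}(-1)^{|v|}\E\!\left[H(z^{\le(v_1 T,\ldots,v_k T)})\right].\]
A $k$-fold telescoping identity (iterating the one-dimensional telescope along each axis, as depicted for $k=2$ in Figure~1) rewrites this as $\frac{\pm1}{2^{k-1}}\sum_{\vec t\in[T]^k}\E[\Delta_{\vec t}H]$, where
\[\Delta_{\vec t}H:=\sum_{u\in\{0,1\}^k}(-1)^{k-|u|}H\!\left(z^{\le(\vec t-\mathbf{1}+u)}\right)\]
is the $k$-dimensional finite difference. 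Conditioning on $z_0:=z^{\le(\vec t-\mathbf{1})}$, which is independent of the fresh increments $z_j^{(t_j)}\sim p\cdot\mathcal{G}$, each conditional expectation is exactly an instance of the shifted small-$p$ bound. Summing $T^k$ such estimates and using $T^kp^{2k}=1$, the main term telescopes to $O(L_{2k}(\mathcal H)/N^{k/2})$, while the error is $T^k\cdot p^{2k+2}\cdot\mathrm{poly}(N,k)=p^2\cdot\mathrm{poly}(N,k)=o(N^{-k/2})$ by the choice of $T$. Combining with stage (i) gives the desired bound.

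The main obstacle I anticipate is the shifted small-$p$ inequality: extending from $H(z)$ to $H(z_0+z)$ while controlling the level-$2k$ Fourier mass requires a careful averaging-over-restrictions argument that crucially uses both the closure of $\mathcal H$ under restrictions and the constraint $z_0\in[-1/2,1/2]^{2kN}$. A secondary bookkeeping issue is that the combinatorial factors $k!$, $2^{k-1}$, and $\epsilon^k$ must collapse to $O(1)$, which is exactly what dictates the choice $\epsilon=\Theta(1/(k^2\ln N))$ together with the standing assumption $k=o(N^{1/50})$.
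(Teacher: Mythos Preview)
Your three-stage plan matches the paper's architecture (Lemmas~\ref{mainlemma1}, \ref{mainlemma2}/Corollary~\ref{maincorollary2}, then the $k$-dimensional telescoping of Section~6), and stages~(ii) and~(iii) are essentially right. However, there is a genuine gap created by the way you organize stage~(i).

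You propose to strip off the truncation \emph{before} telescoping, reducing to $|\E_{\mu_0^{(k)}}[H]-\E_{\mu_1^{(k)}}[H]|$, and then to telescope $H(z^{\le(\vec t)})$. But the shifted small-$p$ bound requires $z_0=z^{\le(\vec t-\mathbf 1)}\in[-1/2,1/2]^{2kN}$, and this fails with positive probability. On that bad event you have no control: $H$ is only bounded on $[-1,1]^{2kN}$, and its multilinear extension can blow up arbitrarily outside the cube, so there is no trivial bound on $|\Delta_{\vec t}H|$ available. (Relatedly, your invocation of \cref{claim5} in stage~(i) does not meet its hypothesis: to write $z\sim\mathcal G^S\mathcal U^{\bar S}$ as $z_0+P\cdot z'$ with $z'\sim\mathcal G^k$ you would need $P=1$ on the $S$-blocks, which violates $P\in[-1/2,1/2]^{2kN}$.)

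The paper's fix is to keep the truncation \emph{inside} the telescoping: it telescopes $H(trnc(z^{\le(\cdot)}))$ rather than $H(z^{\le(\cdot)})$. Then on the bad event $E_a=\{z^{\le(a-1)}\notin[-1/2,1/2]^{2kN}\}$ one has $|H(trnc(\cdot))|\le 1$, giving the trivial bound $|\Delta_{E_a}|\le 2$, which multiplied by $\Pr[E_a]=O(kN^{-6k^2})$ is negligible. On the good event $\neg E_a$, first remove $trnc$ via \cref{claim5} (now legitimately applicable since $z_0\in[-1/2,1/2]^{2kN}$ and the fresh increment has $P\in[-p,p]^{2kN}\subset[-1/2,1/2]^{2kN}$), incurring an $O(N^{-5k^2})$ error, and only then apply the shifted small-$p$ bound. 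In short: truncation is not a preprocessing step you can discharge once at the endpoints; it is the device that keeps every intermediate term in the $T^k$-fold sum bounded.
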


\begin{definition}\label{realharddistributions} Let $\tilde{\mu}_0^{(k)}, \tilde{\mu}_1^{(k)}$ be as in \cref{roundingharddistributions}. Let $\sigma_0^{(k)}$ (respectively $\sigma_1^{(k)}$) be obtained by conditioning $\tilde{\mu}_0^{(k)}$ on being a \textsc{no} (respectively \textsc{yes}) instance of $F^{(k)}$.
\end{definition}

\begin{corollary} \label{restatetheorem1} Under the same hypothesis as \cref{theorem1}, for every $H\in \mathcal{H}$
\[ \left| \underset{ z\sim \sigma_0^{(k)}}{\E}\left[ H(z) \right]  -  \underset{ z\sim \sigma_1^{(k)}}{\E}\left[ H(z) \right] \right| \le O\left( \frac{L_{2k}(\mathcal{H})}{N^{k/2}}\right) + o\left(  \frac{1}{N^{k/2}}\right)  \]
\end{corollary}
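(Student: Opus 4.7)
The plan is to show that $\sigma_b^{(k)}$ and $\tilde{\mu}_b^{(k)}$ are so close in total variation that conditioning in Definition 3.2 perturbs expectations of $[-1,1]$-valued functions by at most a negligible additive error, which is then absorbed into the $o(1/N^{k/2})$ slack already present in Theorem 3.1.

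First, fix $b\in\{0,1\}$ and let $A_0$ denote the set of \textsc{no} instances and $A_1$ the set of \textsc{yes} instances of $F^{(k)}$. By \cref{concentrationcorollary2}, $\tilde{\mu}_b^{(k)}(A_b)\ge 1-\eta$ where $\eta=O(k/N^{6k^2})$. By the definition $\sigma_b^{(k)}=\tilde{\mu}_b^{(k)}(\cdot \mid A_b)$ and the decomposition
\[\underset{z\sim\tilde{\mu}_b^{(k)}}{\E}[H(z)] = \tilde{\mu}_b^{(k)}(A_b)\cdot\underset{z\sim\sigma_b^{(k)}}{\E}[H(z)] + \tilde{\mu}_b^{(k)}(A_b^c)\cdot\underset{z\sim\tilde{\mu}_b^{(k)}(\cdot\mid A_b^c)}{\E}[H(z)],\]
the assumption $H:\{-1,1\}^{2kN}\to[-1,1]$ gives, for each $b$,
\[\left|\underset{z\sim\tilde{\mu}_b^{(k)}}{\E}[H(z)] - \underset{z\sim\sigma_b^{(k)}}{\E}[H(z)]\right| \le 2\eta = O\!\left(\frac{k}{N^{6k^2}}\right).\]

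Next, apply the triangle inequality to combine these two perturbations with the bound from \cref{theorem1}:
\[\left|\underset{z\sim\sigma_0^{(k)}}{\E}[H(z)] - \underset{z\sim\sigma_1^{(k)}}{\E}[H(z)]\right| \le \left|\underset{z\sim\tilde{\mu}_0^{(k)}}{\E}[H(z)] - \underset{z\sim\tilde{\mu}_1^{(k)}}{\E}[H(z)]\right| + O\!\left(\frac{k}{N^{6k^2}}\right).\]
By \cref{theorem1}, the first term on the right is at most $O(L_{2k}(\mathcal{H})/N^{k/2}) + o(1/N^{k/2})$.

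Finally, verify that the extra error is absorbed: since $k=o(N^{1/50})$, for any $k\ge 1$ we have $6k^2\ge k/2 + 1$, so $k/N^{6k^2} = o(1/N^{k/2})$. Adding this into the existing $o(1/N^{k/2})$ term yields the desired conclusion. The main conceptual content is entirely contained in \cref{theorem1} and \cref{concentrationcorollary2}; the corollary is simply a bookkeeping step to move from the distributions $\tilde{\mu}_b^{(k)}$ (which are easier to analyze by Fourier methods) to the ``physically meaningful'' distributions $\sigma_b^{(k)}$ supported on genuine instances of $F^{(k)}$. There is no real obstacle here beyond the totally routine check that the conditioning error has smaller order than $1/N^{k/2}$.
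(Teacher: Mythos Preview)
Your proposal is correct and follows essentially the same approach as the paper: both decompose $\tilde{\mu}_b^{(k)}$ into its conditioning on the correct instance set plus a remainder of mass $O(k/N^{6k^2})$ via \cref{concentrationcorollary2}, use the $[-1,1]$-boundedness of $H$ to control the perturbation, apply the triangle inequality together with \cref{theorem1}, and then check that $k/N^{6k^2}=o(1/N^{k/2})$. The only cosmetic difference is that the paper names the complementary conditioned distributions $\pi_b^{(k)}$ and arrives at a $3\delta$ error term, whereas you work directly with $\tilde{\mu}_b^{(k)}(\cdot\mid A_b^c)$ and obtain $2\eta$; both are absorbed into the $o(1/N^{k/2})$ slack.
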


\subsection{Applications to Quantum versus Classical Separations}

\paragraph*{Query Complexity Separations}
\begin{lemma} \label{corollary3oftheorem1} Let $D:\{-1,1\}^{2kN}\rightarrow \{-1,1\}$ be a deterministic decision tree of depth $d\ge 1$. Then,
\[ \left| \underset{ z\sim \sigma_0^{(k)}}{\E}\left[ D(z) \right]  -  \underset{ z\sim \sigma_1^{(k)}}{\E}\left[ D(z) \right] \right| \le \left(\frac{ O\left(  d\log (kN)\right)}{N^{1/2}} \right)^k   \]
\end{lemma}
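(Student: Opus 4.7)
The plan is to apply \cref{restatetheorem1} to the family $\mathcal{H}$ consisting of (the multilinear extensions of) all deterministic decision trees of depth at most $d$ on $2kN$ variables. I need to verify the two hypotheses of the corollary: closure of $\mathcal{H}$ under restrictions, and a bound on $L_{2k}(\mathcal{H})$.

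First, I would check that $\mathcal{H}$ is closed under restrictions, which is immediate from the definition: fixing any coordinate to $\pm 1$ in a depth-$d$ decision tree produces a decision tree of depth at most $d$ (replace each node querying the restricted coordinate by the corresponding subtree). Hence $\mathcal{H}$ satisfies the hypothesis of \cref{restatetheorem1}.

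Next, I would bound $L_{2k}(\mathcal{H})$ using Tal's bound on the level-$k$ Fourier mass of decision trees~\cite{tal}, which asserts that any depth-$d$ decision tree $T$ on $n$ variables satisfies $L_{k'}(T) \le O(\sqrt{d\log n})^{k'}$ for every $k' \ge 1$. Instantiating with $k'=2k$ and $n = 2kN$, and absorbing $\log(2kN) = O(\log(kN))$ into the constant, this gives
$$L_{2k}(D) \le \bigl(O(\sqrt{d\log(kN)})\bigr)^{2k} = \bigl(O(d\log(kN))\bigr)^k$$
uniformly over $D \in \mathcal{H}$.

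Finally, I would plug this bound into the conclusion of \cref{restatetheorem1}. The lower-order term $o(1/N^{k/2})$ is dominated by $\bigl(O(d\log(kN))/\sqrt{N}\bigr)^k$ for $d \ge 1$, so combining the two terms yields exactly the claimed inequality
$$\left|\underset{z \sim \sigma_0^{(k)}}{\E}[D(z)] - \underset{z \sim \sigma_1^{(k)}}{\E}[D(z)]\right| \le \left(\frac{O(d\log(kN))}{N^{1/2}}\right)^k.$$
The only nontrivial ingredient is Tal's Fourier-mass bound, which is cited as a black box; everything else is a direct invocation of the paper's main theorem, and I do not anticipate any real obstacle in carrying out this plan.
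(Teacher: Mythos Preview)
Your proposal is correct and follows essentially the same approach as the paper: define $\mathcal{H}$ as the family of depth-$d$ decision trees, note closure under restrictions, cite Tal's level-$2k$ Fourier-mass bound as a black box, apply \cref{restatetheorem1}, and absorb the $o(1/N^{k/2})$ term.
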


\begin{theorem} \label{theorem4} $F^{(k)}$ can be computed in the bounded-error quantum query model with $O(k ^5  \log^2 N\log k)$ queries. However, every randomized decision tree of depth $\tilde{o}(\sqrt{N})$ has a worst-case success probability of at  most $\frac{1}{2}+\exp(-\Omega(k))$.
\end{theorem}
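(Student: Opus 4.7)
The theorem splits into a quantum upper bound and a classical lower bound, which I would handle independently.

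\textbf{Quantum upper bound.} The plan is to run the standard one-query Forrelation estimator of~\cite{aaronson10,aaronsonambainis} on each of the $k$ blocks $z_j$, amplify each to error below $1/(3k)$, and return the \textsc{xor} of the decoded bits. The basic estimator uses a single oracle query and outputs a bit whose acceptance probability is a known affine function of $forr(z_j)$ with slope $\Theta(1)$. Taking the empirical mean over $m$ runs gives, by Hoeffding, an estimate of $forr(z_j)$ with additive error $O(1/\sqrt m)$ with constant confidence; choosing $m = \Theta(1/\epsilon^2) = \Theta(k^4 \log^2 N)$ resolves the gap between $forr(z_j) \ge \epsilon/2$ and $forr(z_j) \le \epsilon/4$ with constant success. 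A majority-vote over $O(\log k)$ such estimators drives the per-block error below $1/(3k)$, so by a union bound the \textsc{xor} of the $k$ decoded bits equals $F^{(k)}(z_1,\ldots,z_k)$ with probability at least $2/3$. The total query count is $k \cdot O(k^4 \log^2 N \log k) = O(k^5 \log^2 N \log k)$, matching the stated bound.

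\textbf{Classical lower bound.} The plan is to invoke \cref{corollary3oftheorem1} inside Yao's minimax principle applied to the hard distribution $\mu := \tfrac{1}{2}(\sigma_0^{(k)} + \sigma_1^{(k)})$. For any deterministic tree $D$ of depth $d$, a short calculation gives
\[
\p_{z \sim \mu}[D(z) = F^{(k)}(z)] \;=\; \tfrac{1}{2} + \tfrac{1}{4}\bigl(\E_{\sigma_0^{(k)}}[D] - \E_{\sigma_1^{(k)}}[D]\bigr) \;\le\; \tfrac{1}{2} + \tfrac{1}{4}\Bigl(\tfrac{O(d\log(kN))}{\sqrt{N}}\Bigr)^{k}.
\]
Setting $d \le \sqrt{N}/(C \log(kN))$ for a sufficiently large absolute constant $C$---which fits comfortably within $\tilde{o}(\sqrt{N})$ since $k = o(N^{1/50})$ implies $\log(kN) = O(\log N)$---forces the base above to be at most $1/2$, so the deterministic advantage is at most $(1/2)^{k+2} = \exp(-\Omega(k))$. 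A randomized tree $R$ of the same depth is a convex combination of such $D$'s, so $\E_{z \sim \mu}\bigl[\p_{\text{coins}}[R(z) = F^{(k)}(z)]\bigr] \le 1/2 + \exp(-\Omega(k))$, which exhibits an input in $\mathrm{supp}(\mu)$ witnessing the claimed worst-case bound.

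\textbf{Main obstacle.} Because the Fourier-analytic heavy lifting---combining \cref{theorem1} with Tal's level-$2k$ bound on decision trees~\cite{tal}---has already been absorbed into \cref{corollary3oftheorem1}, and \cref{concentrationcorollary2} guarantees that $\sigma_0^{(k)}, \sigma_1^{(k)}$ are supported on valid \textsc{no} and \textsc{yes} instances of $F^{(k)}$, the remaining work is essentially bookkeeping: calibrating the Hoeffding/amplification constants on the quantum side to land at $O(k^5 \log^2 N \log k)$ queries, and choosing the hidden constant in the depth threshold on the classical side so that $d\log(kN)/\sqrt{N}$ is strictly below $1$ while still fitting under $\tilde{o}(\sqrt{N})$. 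The only conceptual point that requires care is ensuring the ``$k$-fold attenuation'' structure: a per-copy factor strictly less than $1$ is what converts a merely polynomially small classical advantage into the quasipolynomially small bound $\exp(-\Omega(k))$, and this is precisely what \cref{corollary3oftheorem1} buys us.
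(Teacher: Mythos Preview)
Your proposal is correct and follows essentially the same route as the paper: the quantum upper bound amplifies the one-query Forrelation estimator on each block to per-block error $O(1/k)$ and union-bounds, yielding $O(k^5\log^2 N\log k)$ queries, and the classical lower bound plugs \cref{corollary3oftheorem1} into Yao's principle on $\tfrac12(\sigma_0^{(k)}+\sigma_1^{(k)})$, observing that for $d=\tilde{o}(\sqrt{N})$ the base $O(d\log(kN))/\sqrt{N}$ is $o(1)$ so the advantage is $\exp(-\Omega(k))$. The paper's proof is essentially identical, only phrased slightly more tersely.
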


Setting $k=\lceil \log^c N\rceil$ for $c\in \mathbb{N}$ in \cref{theorem4} gives us an explicit family of partial functions that are computable by quantum query algorithms of cost $\tilde{O}(\log^{5c+2} N)$, however every randomized query algorithm of cost $\tilde{o}(N^{\frac{1}{2}})$ has at most $\frac{1}{2^{\Omega(\log^c N)}}$ advantage over random guessing.

\paragraph*{Communication Complexity Separations}
\begin{definition} [The $\oplus^k$ Forrelation Communication Problem $F^{(k)}\circ \textsc{xor}$]
Alice is given $x$ and Bob is given $y$ where $x,y \in\{-1, 1\}^{2kN}$. Let $F^{(k)}$ be as in \cref{forrelationproblem}. Their goal is to compute the partial function $ F^{(k)}(x\cdot y)$.
\end{definition}

\begin{lemma} \label{corollary2oftheorem1} Let $C:\{-1,1\}^{2kN}\times \{-1,1\}^{2kN}\rightarrow \{-1,1\}$ be any deterministic protocol of communication complexity $c$. Then,
\[ \left| \underset{\substack{x \sim U_{2kN}\\ z\sim \sigma_0^{(k)}}}{\E}\left[ C(x,x\cdot z) \right]  -  \underset{\substack{x \sim U_{2kN}\\ z\sim \sigma_1^{(k)}}}{\E}\left[ C(x,x\cdot z) \right]  \right| \le O\left( \frac{(c+8k)^{2k}}{N^{k/2}}\right)   \]
\end{lemma}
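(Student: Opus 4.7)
The plan is to reduce the claim to the Main Theorem by folding Alice's uniform input into the protocol. Define
\[ H_C(z) \;:=\; \underset{x\sim U_{2kN}}{\E}\bigl[\,C(x,\,x\cdot z)\,\bigr], \qquad z \in \{-1,1\}^{2kN}. \]
Since $C$ takes values in $\{-1,1\}$, $H_C$ is a multilinear function mapping $\{-1,1\}^{2kN}$ into $[-1,1]$, and the left-hand side of the lemma equals $\bigl|\E_{z\sim \sigma_0^{(k)}}[H_C(z)] - \E_{z\sim \sigma_1^{(k)}}[H_C(z)]\bigr|$. Let $\mathcal{H}_c$ denote the closure under restrictions (in the sense of \cref{restriction}) of the set of all such $H_{C'}$ with $C'$ a deterministic two-party protocol of communication complexity at most $c$. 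By construction $\mathcal{H}_c$ is closed under restrictions and contains $H_C$, so \cref{restatetheorem1} gives
\[ \bigl|\E_{z\sim\sigma_0^{(k)}}[H_C(z)] - \E_{z\sim\sigma_1^{(k)}}[H_C(z)]\bigr| \;\le\; O\!\left(\frac{L_{2k}(\mathcal{H}_c)}{N^{k/2}}\right) + o\!\left(\frac{1}{N^{k/2}}\right), \]
and the task reduces to the Fourier-mass estimate $L_{2k}(\mathcal{H}_c) \le O\bigl((c+8k)^{2k}\bigr)$.

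For the Fourier-mass estimate I would follow the strategy of~\cite{grt} (carried out in Section~7). Write $C = \sum_\ell \sigma_\ell \mathbf{1}_{A_\ell \times B_\ell}$ as a sign-weighted partition into at most $2^c$ combinatorial rectangles coming from the protocol transcripts, with $\sigma_\ell \in \{-1,1\}$. Fourier-expanding $\mathbf{1}_{B_\ell}(x \cdot z) = \sum_T \widehat{\mathbf{1}_{B_\ell}}(T)\chi_T(x)\chi_T(z)$ and averaging against $\mathbf{1}_{A_\ell}(x)$ yields
\[ \widehat{H_C}(T) \;=\; \sum_\ell \sigma_\ell\, \widehat{\mathbf{1}_{A_\ell}}(T)\, \widehat{\mathbf{1}_{B_\ell}}(T). \]
Applying Cauchy--Schwarz per rectangle and then Bonami's level-$k$ inequality, which gives $\sum_{|T|=2k}\widehat{\mathbf{1}_A}(T)^2 \le \mu(A)^2(O(\log 1/\mu(A))/k)^{2k}$ for an indicator, produces a per-rectangle bound of $\mu(A_\ell)\mu(B_\ell)\cdot(O(\log(1/(\mu(A_\ell)\mu(B_\ell))))/k)^{2k}$. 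Summing with $\sum_\ell \mu(A_\ell)\mu(B_\ell) = 1$ and a convex maximization of $p \mapsto p(\log 1/p)^{2k}$ across at most $2^c$ atoms bounds the total level-$2k$ mass of $H_C$ by $(c+O(k))^{2k}$. A parallel analysis for a restriction $H_C\circ\rho_v$, whose Fourier coefficients satisfy $\widehat{H_C\circ\rho_v}(U) = \sum_{W \subseteq S}\widehat{H_C}(U\cup W)\,v^W$ for $U \subseteq \bar{S}$ with $S=\{i:v_i\ne 0\}$, reduces to the same rectangle computation and transports the bound to every element of $\mathcal{H}_c$.

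The main obstacle is precisely this Fourier-mass bound: a naive Cauchy--Schwarz over the $2^c$ rectangles gives only a trivial $2^{c/2}$-type dependence, so one must exploit that $2k \ll c$ and invoke Bonami's level-$k$ inequality to convert an exponential-in-$c$ estimate into polynomial-in-$(c+k)$ raised to the $2k$-th power. With the Fourier-mass bound in hand, plugging $L_{2k}(\mathcal{H}_c) \le O((c+8k)^{2k})$ into the first-paragraph inequality and absorbing the $o(1/N^{k/2})$ term into the leading $(c+8k)^{2k}/N^{k/2}$ (which is at least $1/N^{k/2}$) yields the stated estimate.
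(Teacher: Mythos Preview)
Your overall plan---define $H_C(z)=\E_x[C(x,x\cdot z)]$, place it in a family closed under restrictions, apply \cref{restatetheorem1}, and then bound $L_{2k}$ via the rectangle decomposition together with the level-$k$ inequality and a Jensen argument on $p\mapsto p(\ln 1/p)^{2k}$---matches the paper's route. But there is a genuine gap in the Fourier-mass step.

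The level-$k$ inequality you quote requires $2k\le 2\ln(1/\mu(A))$, i.e.\ $\mu(A)\le e^{-k}$; likewise the function $p\mapsto p(\ln 1/p)^{2k}$ is concave only on $[0,e^{-(2k-1)}]$, so Jensen needs every $\mu(A_\ell)\mu(B_\ell)$ to lie in that range. A generic cost-$c$ protocol can have rectangles with $\mu(A_\ell)$ or $\mu(B_\ell)$ close to $1$ (think of leaves reached after one bit), and for those neither ingredient applies. The paper fixes this with the \emph{extension} trick: replace $C$ by $ext^l(C)$ for $l=\lceil 2k\log e\rceil$, where both players first reveal $l$ junk bits. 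This forces every rectangle to have $\mu(A),\mu(B)\le 2^{-l}\le e^{-2k}$ at the price of increasing the cost to $c+2l=c+O(k)$, which is exactly where the ``$+8k$'' in the statement comes from. Without this (or an equivalent device for handling large rectangles) your per-rectangle bound and your ``convex maximization'' step are not justified.

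A second, smaller point: your treatment of closure under restrictions is sketchy. The formula $\widehat{H_C\circ\rho_v}(U)=\sum_{W\subseteq S}\widehat{H_C}(U\cup W)v^W$ is correct, but it does not obviously ``reduce to the same rectangle computation''---after the triangle inequality you are summing $|\widehat{H_C}(T)|$ over all $T$ with $|T\cap\bar S|=2k$, which involves arbitrarily high levels of $H_C$. The paper avoids this entirely by working with distributions $\mathcal{C}$ over cost-$c$ protocols and showing (\cref{closedunderrestrictions}) that $H_{ext^l(\mathcal{C})}\circ\rho_v = H_{ext^l(\mathcal{C}_v)}$ for another such distribution $\mathcal{C}_v$; the family is then closed under restrictions by construction, and the weight bound (\cref{weightbound3}) is proved once, uniformly over all such distributions.
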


\begin{theorem} \label{theorem3} $F^{(k)}\circ \textsc{xor}$ can be solved in the quantum simultaneous with entanglement model with $O(k^5 \log^3 N \log k)$ bits of communication, when Alice and Bob share $O(k^5 \log^3 N \log k)$ EPR pairs. However, any randomized protocol of cost $\tilde{o}(N^{1/4})$ has a worst-case success probability of at  most $\frac{1}{2}+\exp(-\Omega(k))$.
\end{theorem}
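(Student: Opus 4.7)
The plan is to combine the query algorithm of \cref{theorem4} with the standard \textsc{xor}-gadget simulation in the simultaneous-with-entanglement model, as done in \cite{grt}. By \cref{theorem4}, $F^{(k)}$ has a bounded-error quantum query algorithm making $Q = O(k^5 \log^2 N\log k)$ queries to its input $z \in \{-1,1\}^{2kN}$. A single sign-oracle query to $(x\cdot y)_i = x_i y_i$ is simulated by having Alice and Bob each send an $O(\log N)$-qubit query register to the referee, who uses $O(1)$ shared EPR pairs to combine their registers and apply the phase $(-1)^{x_i y_i}$; this is the \textsc{xor}-gadget construction of \cite{grt}. Batching all $Q$ (non-adaptive) queries in parallel yields a single-round simultaneous protocol of total communication $O(Q \log N) = O(k^5 \log^3 N \log k)$, using the same number of EPR pairs.

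\textbf{Classical lower bound.} Let $c = \tilde{o}(N^{1/4})$. By Yao's minimax principle, to upper bound the worst-case advantage of any randomized protocol of cost $c$ it suffices to exhibit a pair of distributions on valid no/yes instances of $F^{(k)}\circ\textsc{xor}$ against which every deterministic protocol of cost $c$ has advantage $\exp(-\Omega(k))$. For $b\in\{0,1\}$, I take the distribution of $(x,\,x\cdot z)$ with $x\sim U_{2kN}$ and $z\sim \sigma_b^{(k)}$; since $x\cdot(x\cdot z)=z$, this is supported on no/yes instances of the lifted problem. \cref{corollary2oftheorem1} bounds the advantage of any deterministic protocol of cost $c$ under this distribution by
\[
O\!\left(\frac{(c+8k)^{2k}}{N^{k/2}}\right).
\]
Using $c = \tilde{o}(N^{1/4})$ and $k = o(N^{1/50})$, we have $c+8k \le N^{1/4}/\omega(1)$, so this advantage is at most $\omega(1)^{-2k} = \exp(-\Omega(k))$, as required.

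\textbf{Main obstacle.} The classical lower bound is a direct application of \cref{corollary2oftheorem1} combined with a short calculation, so there the only work is chasing constants. The more delicate step is the quantum upper bound: one must verify that the $Q$ queries of the algorithm from \cref{theorem4} can be issued non-adaptively on disjoint query registers so that the sign-query simulation of \cite{grt} fits in a single simultaneous round. This holds for \textsc{xor} of $k$ copies of Forrelation because the $k$ copies live on disjoint inputs and, within each copy, the amplification step preserves parallelism.
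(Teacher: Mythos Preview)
Your proposal is correct and follows essentially the same approach as the paper. The paper's own proof is even terser: for the upper bound it simply says the step is identical to Theorem~3.3 of \cite{grt} and omits it, and for the lower bound it defines the same hard distributions $(x,x\cdot z)$ with $z\sim\sigma_i^{(k)}$, applies \cref{corollary2oftheorem1}, and checks that $(c+8k)^2/\sqrt{N}=o(1)$ gives advantage $\exp(-\Omega(k))$. Your added remark about non-adaptivity of the queries (each of the $O(k\log k/\epsilon^2)$ repetitions is an independent single-query algorithm, hence parallelizable) is a point the paper leaves implicit in its citation of \cite{grt}.
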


Setting $k=\lceil \log^c N\rceil$ for $c\in \mathbb{N}$ in \cref{theorem3} gives us an explicit family of partial functions that are computable by quantum simultaneous protocols of cost $\tilde{O}(\log^{5c+3} N)$ when Alice and Bob share $\tilde{O}(\log^{5c+3} N)$ EPR pairs, however every interactive randomized protocol of cost $\tilde{o}(N^{\frac{1}{4}})$ has at most $\frac{1}{2^{\Omega(\log^c N)}}$ advantage over random guessing.

\paragraph*{Circuit Complexity Separations}
\begin{lemma} \label{corollary1oftheorem1} Let $C:\{-1,1\}^{2kN}\rightarrow \{-1,1\}$ be an AC0 circuit of depth $d\ge 1$ and size $s$. Then,
\[ \left| \underset{ z\sim \sigma_0^{(k)}}{\E}\left[ C(z) \right]  -  \underset{ z\sim \sigma_1^{(k)}}{\E}\left[ C(z) \right] \right| \le \left( \frac{O\left(\log^{2d-2}(s)\right)}{N^{1/2}}\right)^k   \]
\end{lemma}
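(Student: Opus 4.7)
The plan is to deduce this lemma as a direct consequence of Corollary~\ref{restatetheorem1} combined with Tal's bound on the level-$k$ Fourier mass of AC0 circuits from \cite{talac}. Let $\mathcal{H}$ be the family of all AC0 circuits on $2kN$ variables of depth at most $d$ and size at most $s$, viewed as Boolean functions into $\{-1,1\}\subseteq[-1,1]$.

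The first step is to verify that $\mathcal{H}$ satisfies the hypotheses of Corollary~\ref{restatetheorem1}. Functions in $\mathcal{H}$ map into $[-1,1]$ by definition. For closure under restrictions, note that fixing any subset of input variables to $\pm 1$ in an AC0 circuit yields another circuit of no greater depth and no greater size, so $H\circ \rho_v\in\mathcal{H}$ for every $v\in\{-1,1,0\}^{2kN}$ and every $H\in\mathcal{H}$.

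The second step is to bound $L_{2k}(\mathcal{H})$. Tal's theorem \cite{talac} bounds the level-$\ell$ Fourier mass of any AC0 circuit $C$ of depth $d$ and size $s$ by $L_\ell(C)\le O(\log s)^{(d-1)\ell}$. Instantiating this at $\ell=2k$ gives
\[
L_{2k}(\mathcal{H})\ \le\ O(\log s)^{(d-1)\cdot 2k}\ =\ \bigl(O(\log^{2d-2} s)\bigr)^k.
\]
Thus we may take $\alpha:=O(\log^{2d-2} s)$ in Corollary~\ref{restatetheorem1}, noting that $\alpha\ge 1$ for any non-trivial circuit (and if $\alpha<1$ then $C$ is essentially constant and the statement is immediate).

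The third step is to plug this into Corollary~\ref{restatetheorem1}, which gives
\[
\Bigl|\underset{z\sim \sigma_0^{(k)}}{\E}[C(z)]-\underset{z\sim \sigma_1^{(k)}}{\E}[C(z)]\Bigr|\ \le\ O\!\left(\frac{\alpha^k}{N^{k/2}}\right)+o\!\left(\frac{1}{N^{k/2}}\right).
\]
Since $\alpha\ge 1$, the second error term is absorbed into the first, yielding the claimed bound $\bigl(O(\log^{2d-2} s)/N^{1/2}\bigr)^k$. There is no real obstacle in this argument, as both ingredients — the closure property of AC0 and Tal's Fourier mass bound — are standard; the work is merely in matching the parameters, with the only care needed in ensuring that $\alpha\ge 1$ so that the additive $o(1/N^{k/2})$ slack from the main theorem can be absorbed into the multiplicative target.
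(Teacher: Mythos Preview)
Your proposal is correct and follows essentially the same approach as the paper: define $\mathcal{H}$ as the family of AC0 circuits of the given depth and size, observe closure under restrictions, invoke Tal's bound $L_{2k}(\mathcal{H})\le (O(\log^{d-1} s))^{2k}$, and plug into Corollary~\ref{restatetheorem1}. The paper's proof is the same (its reference to \cref{theorem1} and $\tilde{\mu}$ in the displayed equation is a typo for \cref{restatetheorem1} and $\sigma$), and your added remark about $\alpha\ge 1$ absorbing the $o(1/N^{k/2})$ term is exactly the implicit step the paper uses in its final equality.
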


\begin{theorem} \label{theorem2} The distributions $\sigma_1^{(k)}$ and $\sigma_0^{(k)}$ can be distinguished by a bounded-error quantum query protocol with $O(k ^5  \log^2 N\log k)$ queries with $2/3$ advantage. However, every constant depth circuit of size $o\left( \exp\left( N^{\frac{1}{4(d-1)}}\right) \right)$ can distinguish these distributions with at most $\exp(-\Omega(k))$ advantage.
\end{theorem}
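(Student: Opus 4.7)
The theorem is a direct combination of two previously stated facts, so my plan is to handle each direction separately and then combine them without extra machinery.

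\textbf{Upper bound.} I would invoke \cref{theorem4} verbatim: it gives a bounded-error quantum query algorithm computing $F^{(k)}$ using $O(k^5 \log^2 N \log k)$ queries. By \cref{realharddistributions}, $\sigma_0^{(k)}$ (resp.\ $\sigma_1^{(k)}$) is supported on \textsc{no} (resp.\ \textsc{yes}) instances of $F^{(k)}$, so the quantum algorithm for $F^{(k)}$ automatically distinguishes the two distributions with the same success probability. Starting from error at most $1/3$, a constant number of independent repetitions followed by a majority vote amplifies the success probability to at least $5/6$, which corresponds to distinguishing advantage $2 \cdot 5/6 - 1 = 2/3$, and only multiplies the query complexity by a constant. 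This gives the claimed upper bound.

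\textbf{Lower bound.} I would plug the size hypothesis into \cref{corollary1oftheorem1}. Let $C$ be an $\mathrm{AC}^0$ circuit of depth $d \ge 1$ and size $s = o(\exp(N^{1/(4(d-1))}))$. \cref{corollary1oftheorem1} yields
\[
\left|\underset{z\sim \sigma_0^{(k)}}{\E}[C(z)] - \underset{z\sim \sigma_1^{(k)}}{\E}[C(z)]\right|
\le \left(\frac{C_0\cdot \log^{2d-2}(s)}{N^{1/2}}\right)^{k}
\]
for some absolute constant $C_0$ hidden in the $O(\cdot)$. The size hypothesis means that for any fixed $c > 0$ and large enough $N$ one has $\log s \le c \cdot N^{1/(4(d-1))}$, hence $\log^{2d-2}(s) \le c^{2d-2}\cdot N^{1/2}$. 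Choosing $c$ small enough so that $C_0 \cdot c^{2d-2} \le 1/2$ (which is allowed since $d$ is a constant), the base is at most $1/2$, so the right-hand side is at most $(1/2)^k = \exp(-\Omega(k))$, as required.

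\textbf{Main obstacle and its resolution.} The only place that requires any real care is checking the asymptotic arithmetic in the lower bound: one needs to verify that the $O(\log^{2d-2}(s))/N^{1/2}$ factor can indeed be driven below $1$ under the stated size regime. The resolution is simply that $d$ is constant, so the dependence on $d$ is absorbed into the $\Omega(\cdot)$ in the exponent, and the $o(\cdot)$ in the size bound absorbs the hidden constant $C_0$. All of the heavy lifting, namely combining the main theorem with Tal's bound $L_{2k}(\mathrm{AC}_{d,s}^0) = O(\log^{2d-2} s)^{k}$, is done inside \cref{corollary1oftheorem1}, so no new technique is needed here. Finally, I would remark (as in the paragraph following the theorem statement in the introduction) that taking $k = \lceil \log^c N \rceil$ turns the lower bound into quasipolynomially small advantage against quasipolynomial-size constant-depth circuits, matching the applications advertised in the abstract.
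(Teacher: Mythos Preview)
Your proposal is correct and follows essentially the same approach as the paper: the paper invokes the quantum algorithm for $F^{(k)}$ (described in the ``Quantum Upper Bound'' paragraph, which is the content underlying \cref{theorem4}) to distinguish $\sigma_0^{(k)}$ and $\sigma_1^{(k)}$, and for the lower bound plugs the size hypothesis into \cref{corollary1oftheorem1} noting that $O(\log^{2d-2}(s))=o(\sqrt N)$ so the base is $o(1)$ and the bound becomes $\exp(-\Omega(k))$. Your extra amplification step and your explicit handling of the constant $C_0$ are minor presentational additions; otherwise the arguments coincide.
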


Setting $k=\lceil \log^c N\rceil$ for $c\in \mathbb{N}$ in \cref{theorem2} gives us an explicit family of distributions that are distinguishable by cost $\tilde{O}(\log^{5c+2} N)$ quantum query algorithms, however every constant depth circuit of quasipolynomial size can distinguish them with at most $\frac{1}{2^{\Omega(\log^c N)}}$ advantage.

\section{Single Step Analysis Around the Origin}

\begin{lemma} \label{mainlemma1} Let $H$ be a Boolean function on $2kN$ variables that maps $\{-1, 1\}^{2kN}$ into $[-1, 1]$. Let $p\le \frac{1}{2N}$ and $P\in [-p,p]^{2kN}$. Then,
\[ \Delta: = \left| \underset{z\sim P\cdot \mu_0^{(k)} }{\E}[H(z)]- \underset{z\sim P\cdot \mu_1^{(k)} }{\E} [H( z) ]\right| \le   O\left( 2^{-2k}\cdot \frac{L_{2k}(H)  p^{2k}}{N^{k/2}}   +  p^{2(k+1)} N^{(k+1)/2} \right)    \]
\end{lemma}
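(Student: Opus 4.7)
The plan is to use the Fourier (multilinear) expansion of $H$ to reduce $\Delta$ to a weighted sum of differences of moments of $\mu_0^{(k)}$ and $\mu_1^{(k)}$, and then apply Lemma \ref{moments}. Writing $H(z) = \sum_{I \subseteq [2kN]} \widehat{H}(I) \prod_{i \in I} z_i$ and observing that the $I$-th moment of $P \cdot \mu_b^{(k)}$ equals $\big(\prod_{i \in I} P_i\big) \cdot \widehat{\mu_b^{(k)}}(I)$, the triangle inequality gives
\[ \Delta \;\le\; \sum_{I} |\widehat{H}(I)| \cdot \Big(\prod_{i \in I}|P_i|\Big) \cdot \left|\widehat{\mu_0^{(k)}}(I) - \widehat{\mu_1^{(k)}}(I)\right|. \]
By items (1) and (2) of Lemma \ref{moments}, only $I = (I_1,\ldots,I_k)$ with every $I_j$ nonempty and of even size contribute; in particular, $|I| = 2i$ for some $i \ge k$. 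Using $\prod_{i\in I}|P_i| \le p^{|I|}$ together with item (3) of Lemma \ref{moments}, this yields
\[ \Delta \;\le\; \sum_{i \ge k} 2^{-k+1}\, \epsilon^{i}\, N^{-i/2}\, i!\, p^{2i} \sum_{|I|=2i} |\widehat{H}(I)|. \]

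I would then analyze the leading term $i = k$ and the tail $i \ge k+1$ separately. For $i = k$, by definition $\sum_{|I|=2k}|\widehat{H}(I)| = L_{2k}(H)$, so this contribution is exactly $2^{-k+1}\, \epsilon^{k}\, k!\, p^{2k}\, L_{2k}(H)/N^{k/2}$. The choice $\epsilon = 1/(60 k^2 \ln N)$ gives $(2\epsilon k)^k \le (1/(30 k \ln N))^k = o(1)$, hence $2^{-k+1}\epsilon^k k! \le 2^{-k+1}(\epsilon k)^k \le O(2^{-2k})$, matching the first term in the target bound.

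For the tail $i \ge k+1$, I would bound $\sum_{|I|=2i}|\widehat{H}(I)|$ via Cauchy--Schwarz and Parseval: since $H$ maps $\{-1,1\}^{2kN}$ into $[-1,1]$, we have $\sum_I \widehat{H}(I)^2 \le 1$, so
\[ \sum_{|I|=2i}|\widehat{H}(I)| \;\le\; \sqrt{\tbinom{2kN}{2i}} \;\le\; \frac{(2kN)^i}{\sqrt{(2i)!}}. \]
Combined with the elementary estimate $i!/\sqrt{(2i)!} \le 2^{-i/2}$ (which follows from $\binom{2i}{i} \ge 2^i$), each level-$2i$ contribution is at most $2^{-k+1}\big(\sqrt{2}\, k \epsilon p^2 N^{1/2}\big)^{i}$. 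The common ratio $\sqrt{2}\, k \epsilon p^2 N^{1/2}$ is $o(1)$ under our assumptions ($p \le 1/(2N)$, $k = o(N^{1/50})$), so the tail sum is a geometric series dominated by the $i = k+1$ term, which evaluates to $2^{-k+1}(\sqrt{2}\, k \epsilon)^{k+1}\, p^{2(k+1)}\, N^{(k+1)/2} = O\big(p^{2(k+1)} N^{(k+1)/2}\big)$, matching the second term of the target bound.

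The main obstacle will be bookkeeping: I need to verify that the $\epsilon^k k!$ and $2^{-k}$ factors combine correctly to yield the specific $2^{-2k}$ prefactor on the leading term (not weaker), and that the higher-order factorial growth in $i!$ from Lemma \ref{moments} is safely absorbed by the Cauchy--Schwarz bound through the cancellation $i!/\sqrt{(2i)!} \le 2^{-i/2}$. Once these estimates are set up carefully with the choice $\epsilon = 1/(60 k^2 \ln N)$, the geometric-series argument cleanly closes out the tail.
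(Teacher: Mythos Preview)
The proposal is correct and follows essentially the same route as the paper: expand $H$ in the Fourier basis, apply Lemma~\ref{moments} to reduce to levels $2i$ with $i\ge k$, isolate the $i=k$ term using $L_{2k}(H)$ and the smallness of $\epsilon^k k!$, and bound the tail $i\ge k+1$ via Cauchy--Schwarz/Parseval ($\sum_{|I|=2i}|\widehat H(I)|\le\sqrt{\binom{2kN}{2i}}$) followed by a geometric-series estimate. Your elementary inequality $i!/\sqrt{(2i)!}\le 2^{-i/2}$ plays the same role as the paper's Stirling-type bound $\sqrt{\binom{2kN}{2i}}\cdot i!\,\epsilon^i = O(N^i)$, and your verification that $2^{-k+1}\epsilon^k k!\le 2^{-k+1}(\epsilon k)^k = O(2^{-2k})$ matches the paper's computation exactly.
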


\begin{proof}[Proof of \cref{mainlemma1}]

For all $z\in \mathbb{R}^{2kN}$, we have $H(z)=\sum_{S\subseteq [2kN]} \widehat{H}(S)\underset{i\in S}{\prod} z_i$. This implies that
\begin{align*}\begin{split}
\Delta&= \left| \underset{S\subseteq [2kN]}{\sum} \widehat{H}(S) \left( \underset{z\sim P\cdot \mu_0^{(k)}}{\E}\Big[\prod_{i\in S}z_i\Big]  -\underset{z\sim P\cdot \mu_1^{(k)}}{\E}\Big[\prod_{i\in S}z_i\Big]\right) \right| \\
&= \left| \underset{S\subseteq [2kN]}{\sum} \widehat{H}(S) \cdot \prod_{i\in S}P_i \cdot  \left( \underset{z\sim \mu_0^{(k)}}{\E}\Big[\prod_{i\in S}z_i\Big]  -\underset{z\sim \mu_1^{(k)}}{\E}\Big[\prod_{i\in S}z_i\Big]\right) \right| \\
&= \left| \underset{S\subseteq [2kN]}{\sum} \widehat{H}(S) \cdot \prod_{i\in S}P_i \cdot \left( \widehat{\mu_0^{(k)}}(S)  -\widehat{\mu_1^{(k)}}(S) \right) \right| \\
&\le \underset{S\subseteq [2kN]}{\sum} |\widehat{H}(S)|\cdot p^{|S|} \cdot \left| \widehat{ \mu_0^{(k)}}(S)  -\widehat{ \mu_1^{(k)}}(S) \right| \quad\quad\ldots\text{ since $P\in [-p,p]^{2kN}$}
\end{split}\end{align*}

We now apply \cref{moments} to bound the difference in moments between the distributions $\mu_1^{(k)}$ and $\mu_0^{(k)}$. \cref{moments} implies that if $|S| <2k$ or $|S|$ is odd, then $\widehat{\mu_0^{(k)}}(S)=\widehat{\mu_1^{(k)}}(S)$. Furthermore, if $| S|=2i$ for some $i\in \mathbb{N}$, then $\left| \widehat{ \mu_0^{(k)}}(S)-\widehat{ \mu_1^{(k)}}(S) \right| \le 2^{-k+1} \epsilon^i N^{-i/2}  i! $. This implies that
\[\Delta \le \sum_{i=k}^{kN}  \left( \sum_{|S|=2i} |\widehat{H}(S)| \right)\cdot 2^{-k+1}  \epsilon^i N^{-i/2}   i! p^{2i}  \]
Since $H$ maps $\{-1,1\}^{2kN}$ to $[-1,1]$, we can bound $\sum_{|S|=2i} |\widehat{H}(S)|$ by $\sqrt{2kN \choose 2i}$.~\footnote{This is because $\sum_{|S|=2i} |\widehat{H}(S)|\le \sqrt{\sum_{|S|=2i}1}\sqrt{\sum_{|S|=2i}\widehat{H}(S)^2}\le \sqrt{{2kN \choose 2i}}$.} We also bound $2^{-k+1}$ by 1. This, along with the previous inequality implies that
\[ \Delta \le  L_{2k}(H) \cdot (2^{-k+1} k! \epsilon^k) \cdot N^{-k/2} p^{2k} +   \sum_{i=k+1}^{kN} \left(\sqrt{2kN\choose 2i}\cdot i! \epsilon^i\right) \cdot N^{-i/2}  p^{2i}    \]
Note that $  \sqrt{2kN\choose 2i}\cdot i! \le \frac{(2k)^iN^i}{\sqrt{(2i)!}} i! =O\left( \frac{(2k)^ie^{i}}{(2i)^{i}} \cdot \frac{i^i}{e^i} \cdot N^i   \right) =O\left( k^i N^i \right)$.
Furthermore, since $\epsilon=\frac{1}{60k^2\ln N}$, for all $i\ge k$, we have $   \epsilon^i k^i= O\left( \frac{1}{60^k k^{2k}} \cdot k^k \right) = O\left(\frac{1}{2^k}\right)$. This implies that
$2^{-k+1} k!\epsilon^k = O\left( 2^{-2k} \right)$ and $\sqrt{2kN \choose 2i} \cdot i! \epsilon^i =O( N^i)$. Substituting these bounds in the previous inequality for $\Delta$, we have
\[ \Delta \le O\left( 2^{-2k}\cdot  L_{2k}(H) N^{-k/2} p^{2k} +  \sum_{i=k+1}^{kN} N^{i/2} p^{2i} \right)   \]
In the summation $\sum_{i \ge k+1}N^{i/2} p^{2i} $, every successive term is smaller than the previous by a factor of at least $1/4$. This is because the assumption $p\le \frac{1}{2N}$ implies that $N^{1/2}p^2\le \frac{1}{4}$. Thus, we can bound this summation by twice the first term, which is  $O(N^{(k+1)/2}p^{2(k+1)} ).$ This implies that
\[\Delta \le O \left(  2^{-2k}\cdot L_{2k}(H) N^{-k/2} p^{2k} +N^{(k+1)/2}p^{2(k+1)} \right) \]
This completes the proof of \cref{mainlemma1}.
\end{proof}

\section{Single Step Analysis Away from the Origin}

\begin{lemma} \label{mainlemma2} Let $\mathcal{H}$ be a family of Boolean functions on $2kN$ variables, each of which maps $\{-1,1\}^{2kN}$ into $[-1,1]$. Assume that $\mathcal{H}$ is closed under restrictions. Let $p\le \frac{1}{4N}$  and $z_0\in [-1/2,1/2]^{2kN}$. Then, for all $H\in \mathcal{H}$,
\[ \Delta: = \left| \underset{z\sim p\cdot \mu_0^{(k)} }{\E}[H(z_0+z)]- \underset{z\sim p\cdot \mu_1^{(k)} }{\E} [H(z_0+ z) ]\right|  \le   O\left(2^{-2k}\cdot  \frac{L_{2k}(\mathcal{H})  (2p)^{2k}}{N^{k/2}}   +  (2p)^{2(k+1)} N^{(k+1)/2} \right)    \]
\end{lemma}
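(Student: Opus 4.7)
The plan is to reduce \cref{mainlemma2} to \cref{mainlemma1} by a CHHL-style random-restriction argument, which trades the shift by $z_0$ for a factor-of-two rescaling of $z$; this rescaling is precisely what produces the $(2p)^{2k}$ and $(2p)^{2(k+1)}$ in the stated bound.

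Concretely, I would introduce a random restriction $\rho\in\{-1,1,*\}^{2kN}$ defined as follows: independently for each coordinate $i\in[2kN]$, set $\rho_i=*$ with probability $1/2$ and otherwise set $\rho_i\in\{-1,1\}$ with $\Pr[\rho_i=1\mid\rho_i\neq *]=\tfrac{1}{2}+z_0^{(i)}$. This is a valid distribution because $z_0\in[-1/2,1/2]^{2kN}$, and it satisfies $\E[\rho_i\mid\rho_i\neq *]=2z_0^{(i)}$. The key identity I would then prove is
\[
H(z_0+z) \;=\; \E_\rho\bigl[H_\rho(2z)\bigr]\qquad\text{for all } z\in\mathbb{R}^{2kN},
\]
where $H_\rho:=H\circ\rho_\rho$ is the restriction of $H$ by $\rho$. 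This follows from \cref{multilinearity}: the product distribution over $\mathbb{R}^{2kN}$ that at coordinate $i$ outputs $2z_i$ if $\rho_i=*$ and outputs $\rho_i$ otherwise has independent coordinates, with $i$-th mean equal to $\tfrac{1}{2}(2z_i)+\tfrac{1}{2}(2z_0^{(i)})=z_0^{(i)}+z_i$, so applying the multilinear $H$ to it and taking expectation yields $H(z_0+z)$.

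With the identity in hand, I would swap the order of expectations and apply the triangle inequality to obtain, after the substitution $w=2z$,
\[
\Delta \;\le\; \E_\rho\Bigl|\E_{w\sim(2p)\cdot\mu_0^{(k)}}[H_\rho(w)]-\E_{w\sim(2p)\cdot\mu_1^{(k)}}[H_\rho(w)]\Bigr|.
\]
Since $\mathcal{H}$ is closed under restrictions, $H_\rho\in\mathcal{H}$, hence $H_\rho$ maps $\{-1,1\}^{2kN}$ into $[-1,1]$ and $L_{2k}(H_\rho)\le L_{2k}(\mathcal{H})$. The hypothesis $p\le 1/(4N)$ gives $2p\le 1/(2N)$, so \cref{mainlemma1} applies to $H_\rho$ with $P=2p\cdot\mathbb{1}\in[-2p,2p]^{2kN}$, bounding the inner absolute difference by $O\bigl(2^{-2k}L_{2k}(\mathcal{H})(2p)^{2k}/N^{k/2}+(2p)^{2(k+1)}N^{(k+1)/2}\bigr)$. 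Taking $\E_\rho$ preserves this bound, which is exactly what the lemma claims.

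The main obstacle is verifying the identity $H(z_0+z)=\E_\rho[H_\rho(2z)]$: the restriction probability ($1/2$) and the bias of the frozen values (so that the conditional mean is $2z_0^{(i)}$) must balance exactly so that averaging exactly reconstructs the shift $z_0+z$ coordinate by coordinate. This is also where the hypothesis $z_0\in[-1/2,1/2]^{2kN}$ is essential: any larger coordinate would push the bias outside $[0,1]$ and break the construction. Everything after the identity is bookkeeping and a direct invocation of \cref{mainlemma1}.
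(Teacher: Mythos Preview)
Your proposal is correct and follows essentially the same random-restriction reduction to \cref{mainlemma1} as the paper. The only cosmetic difference is in the parameterization of the restriction: the paper freezes coordinate $i$ with probability $|z_0(i)|$ (to $\mathrm{sign}(z_0(i))$), yielding a coordinate-dependent rescaling $P_i=1/(1-|z_0(i)|)\in[1,2]$, whereas you freeze each coordinate with probability $1/2$ (biased so the conditional mean is $2z_0^{(i)}$), yielding the uniform rescaling factor $2$; both variants land on the same invocation of \cref{mainlemma1} with effective parameter $2p$.
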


Let $\mathcal{O}$ denote the distribution on $\mathbb{R}^{2N}$ whose support is $\{0\}$ (i.e, the distribution that puts all its mass on the zero vector in $\mathbb{R}^{2N}$).

\begin{corollary} \label{maincorollary2} Under the same hypothesis as \cref{mainlemma2}, for all $H\in \mathcal{H}$,
\[ \Delta: = \frac{1}{2^{k-1}}\left| \sum_{S\subseteq [k]}(-1)^{|S|} \underset{\substack{z\sim z_0+\\ p\cdot \mathcal{G}^S\mathcal{O}^{\bar{S}}}}{\E}  \left[ H(z) \right]   \right| \le   O\left(2^{-2k}\cdot  \frac{L_{2k}(\mathcal{H})  (2p)^{2k}}{N^{k/2}}   +  (2p)^{2(k+1)} N^{(k+1)/2} \right)    \]
\end{corollary}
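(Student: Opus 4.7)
The plan is to reduce \cref{maincorollary2} directly to \cref{mainlemma2}. First, by \cref{harddistributions},
\begin{align*}
\underset{z\sim p\cdot \mu_0^{(k)}}{\E}[H(z_0+z)] - \underset{z\sim p\cdot \mu_1^{(k)}}{\E}[H(z_0+z)] = \frac{1}{2^{k-1}}\sum_{S\subseteq [k]} (-1)^{|S|} \underset{z\sim p\cdot \mathcal{G}^S\mathcal{U}^{\bar S}}{\E}[H(z_0+z)],
\end{align*}
since the even-$|S|$ summands (contributing to $\mu_0^{(k)}$) come with sign $+1$ and the odd-$|S|$ summands (contributing to $\mu_1^{(k)}$) with sign $-1$. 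Thus \cref{mainlemma2} already bounds exactly the sign-weighted sum on the right, and the corollary will follow if each term is unchanged after replacing $p\cdot \mathcal{U}^{\bar S}$ by $p\cdot \mathcal{O}^{\bar S}$.

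That replacement is justified by multilinearity. Fix $S\subseteq [k]$ and condition on the coordinates of $z$ indexed by $S$. The remaining $2(k-|S|)N$ coordinates are drawn from the product distribution $p\cdot \mathcal{U}^{\bar S}$, whose coordinates are independent and each have mean zero. Since we identify $H$ with its multilinear extension, the map $z_{\bar S}\mapsto H(z_0+(z_S,z_{\bar S}))$ is multilinear in $z_{\bar S}$, so \cref{multilinearity} yields
\begin{align*}
\underset{z_{\bar S}\sim p\cdot \mathcal{U}^{\bar S}}{\E}\bigl[H(z_0+(z_S,z_{\bar S}))\bigr] = H(z_0+(z_S,0)) = \underset{z_{\bar S}\sim p\cdot \mathcal{O}^{\bar S}}{\E}\bigl[H(z_0+(z_S,z_{\bar S}))\bigr].
\end{align*}
Taking the outer expectation over $z_S\sim p\cdot \mathcal{G}^S$, combining with the signs $(-1)^{|S|}$, and dividing by $2^{k-1}$ shows that the quantity $\Delta$ of \cref{maincorollary2} equals the quantity $\Delta$ of \cref{mainlemma2}. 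The claimed bound is then immediate from \cref{mainlemma2}.

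There is no real obstacle here: the corollary is essentially a rewriting of the lemma, obtained by using multilinearity to collapse the uniform-coordinate blocks to the deterministic zero vector. The only subtleties worth flagging are that the multilinear-extension convention must be in force (so that \cref{multilinearity} applies to $H$), that $p\cdot \mathcal{U}$ indeed has coordinatewise mean zero, and that the sign combinatorics in \cref{harddistributions} correctly collapse to the $(-1)^{|S|}$ pattern appearing in the statement.
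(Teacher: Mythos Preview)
Your proof is correct and follows essentially the same approach as the paper: expand the difference $\E_{p\cdot\mu_0^{(k)}}[H(z_0+z)]-\E_{p\cdot\mu_1^{(k)}}[H(z_0+z)]$ via \cref{harddistributions} into the $(-1)^{|S|}$-weighted sum over $S\subseteq[k]$, then use multilinearity (\cref{multilinearity}) to replace the $p\cdot\mathcal{U}$ blocks by $p\cdot\mathcal{O}$ and invoke \cref{mainlemma2}. The paper presents the argument in the reverse direction (starting from the lemma's $\Gamma$ and showing it equals the corollary's $\Delta$), but the content is identical.
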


\begin{proof}[Proof of \cref{maincorollary2} from \cref{mainlemma2}] We show that the expressions for $\Delta$ in \cref{maincorollary2} and \cref{mainlemma2} are identical. Let $ \Gamma:=   \left| \underset{z\sim p\cdot \mu_0^{(k)} }{\E}[H(z_0+z)]- \underset{z\sim  p\cdot \mu_1^{(k)} }{\E}[ H( z_0+z)]\right| $ be the expression for $\Delta$ in \cref{mainlemma2}. By the definition of $\mu_0^{(k)},  \mu_1^{(k)}$ as in \cref{harddistributions}, we have
\begin{equation}\label{maincorollary2eqn} \Gamma =\frac{1}{2^{k-1}}\left|   \sum_{S\subseteq [k] } (-1)^{|S|} \underset{z\sim p\cdot \mathcal{G}^S\mathcal{U}^{\bar{S}}}{\E} \left[H(z_0+z )  \right]\right| =\frac{1}{2^{k-1}}\left|   \sum_{S\subseteq [k] } (-1)^{|S|} \underset{z\sim  \mathcal{G}^S\mathcal{U}^{\bar{S}}}{\E} \left[H(z_0+pz )  \right]\right| \end{equation}
Let $S\subseteq[k]$. We now show that $\E_{z\sim \mathcal{G}^S\mathcal{U}^{\bar{S}}} [H(z_0+pz)]=\E_{z\sim \mathcal{G}^S\mathcal{O}^{\bar{S}} } [H(z_0+pz)].$ Substituting this in the above equation would complete the proof. Let $z_1\sim \mathcal{G}^S\mathcal{O}^{\bar{S}}$ and $z_2\sim \mathcal{O}^S\mathcal{U}^{\bar{S}}$. Note that $z_1+z_2\sim \mathcal{G}^S\mathcal{U}^{\bar{S}}$. Fix $z_1\in \mathbb{R}^{2kN}$. Note that the multilinear polynomial $H(z_0+pz_1+pz_2)$ over $z_2$ and the distribution $ \mathcal{O}^S\mathcal{U}^{\bar{S}}$ satisfies the hypothesis in \cref{multilinearity} for $a=0$. \cref{multilinearity} implies that for all $z_1\in \mathbb{R}^{2kN}$, we have $\E_{z_2\sim \mathcal{O}^S\mathcal{U}^{\bar{S}}}[H(z_0+pz_1+pz_2)\mid z_1]=H(z_0+pz_1).$ It then follows that
\[ \underset{z\sim \mathcal{G}^S\mathcal{U}^{\bar{S}}}{\E} [H(z_0+pz)] = \underset{\substack{z_1\sim \mathcal{G}^S\mathcal{O}^{\bar{S}}\\ z_2\sim \mathcal{O}^S\mathcal{U}^{\bar{S}}}}{\E}[H(z_0+pz_1+pz_2)]= \underset{z_1\sim \mathcal{G}^S\mathcal{O}^{\bar{S}}}{\E} [H(z_0+pz_1) ]=\underset{\substack{z\sim z_0+\\ p\cdot \mathcal{G}^S\mathcal{O}^{\bar{S}}}}{\E}  \left[ H(z) \right]  \]
Substituting the above in \cref{maincorollary2eqn} implies that $\Delta=\Gamma$. This, along with \cref{mainlemma2} completes the proof of \cref{maincorollary2}.
\end{proof}

\begin{proof}[Proof of \cref{mainlemma2}]
Let $v\in \{-1,1,0\}^{2kN}$ be obtained by the following process, which we denote by $v\sim z_0$. For every $i\in [2kN]$, independently, set
\[ v(i) :=\begin{cases} sign(z_0(i)) &\text{ with probability } |z_0(i)| \\
 0& \text{ with probability } 1-|z_0(i)| \end{cases} \]
Let $\rho_v$ be a restriction as in \cref{restriction}. For $i\in [2kN]$ define $P_i$ by $\frac{1}{1-|z_0(i)|}$. Since $z_0\in [-1/2,1/2]^{2kN}$, we have $P\in [1,2]^{2kN}$. Note that for every $i\in [2kN]$ and $z\in \{-1,1\}^{2kN}$,
\[ \underset{v\sim z_0}{\E}[(\rho_v(z))(i)]=|z_0(i)|sign(z_0(i)) + (1-|z_0(i)|) z(i)= z_0(i) + P_i^{-1}z(i)\]
This implies that $\underset{v\sim z_0}{\E}[\rho_v( z)]=z_0+P^{-1}\cdot z$ for all $z\in \{-1,1\}^{2kN}$. Note that for every $z\in \{-1,1\}^{2kN}$, the multilinear polynomial $H$ and the random variable $\rho_v(z)$ satisfy the hypothesis of \cref{multilinearity} with $a=z_0+P^{-1}\cdot z$. \cref{multilinearity} implies that for all $z\in \{-1,1\}^{2kN}$,
\[\underset{v\sim z_0}{\E}[H(\rho_v( z))] = H(z_0 + P^{-1}\cdot z)\]
Consider the restricted function $H\circ \rho_v$. For every $z\in \{-1,1\}^{2kN}$ and $v\in \{-1,1,0\}^{2kN}$, by definition, $(H\circ \rho_v)(z) =H(\rho_v(z))$. This, along with the previous equality implies that for all $z\in \{-1,1\}^{2kN}$,
\[  \underset{v\sim z_0}{\E}[(H\circ\rho_v)( z)]= H(z_0 + P^{-1}\cdot z)
\]
Note that both the L.H.S. and the R.H.S. of the above equation are multilinear polynomials in $z$ (since we identify $H\circ \rho_v$ with its multilinear extension). Thus, the above equation holds for all $z\in \mathbb{R}^{2kN}$. In particular, for all distributions $D$ over $\mathbb{R}^{2kN}$, it holds that
\begin{equation} \label{expectrestrict}
\underset{z\sim D}{\E} \underset{v\sim z_0}{\E}[(H\circ\rho_v)( z)]= \underset{z\sim D}{\E} [H(z_0 + P^{-1}\cdot z)]
\end{equation}
This implies that $\Delta$ can be expressed as follows.
\begin{align*}\begin{split} \Delta&\triangleq\left|  \underset{z\sim p\cdot \mu_0^{(k)} }{\E}\left[ H(z_0 +  z) \right]-  \underset{z\sim p \cdot \mu_1^{(k)} }{\E}\left[ H(z_0+ z) \right] \right| \\
&=\left|  \underset{z\sim pP\cdot \mu_0^{(k)} }{\E} \left[H(z_0 + P^{-1}\cdot z) \right]-  \underset{z\sim p P\cdot \mu_1^{(k)} }{\E} \left[H(z_0+P^{-1}\cdot z) \right] \right| \\
&= \left| \underset{v\sim z_0}{\E} \left[ \underset{z\sim pP\cdot \mu_0^{(k)} }{\E}[(H\circ\rho_v)(z)]- \underset{z\sim pP\cdot \mu_1^{(k)} }{\E}[ (H\circ\rho_v)( z)]\right] \right|  \quad \text{ \ldots due to \cref{expectrestrict}}\\
&\le \max_{v\sim z_0} \left| \underset{z\sim pP\cdot \mu_0^{(k)} }{\E}[(H\circ\rho_v)(z)]- \underset{z\sim pP\cdot \mu_1^{(k)} }{\E} [(H\circ\rho_v)( z)]\right|  \quad \text{ \ldots Triangle-Inequality}\\
\end{split}\end{align*}
Fix any $v\in \{-1,1,0\}^{2kN}$. We now apply \cref{mainlemma1} on the function $H\circ\rho_v$ with the parameters $2p$ and $pP$. Since $\mathcal{H}$ is closed under restrictions, $H\circ \rho_v\in \mathcal{H}$.  Note that the assumption $p\le \frac{1}{4N}$ and $P\in [1,2]^{2kN}$ implies that $2p\le \frac{1}{2N}$ and $pP\in [-2p,2p]^{2kN}$ and thus, the hypothesis of \cref{mainlemma1} is satisfied. Furthermore, we can bound $L_{2k}(H\circ\rho_v)$ by $L_{2k}(\mathcal{H})$, by definition of the latter. \cref{mainlemma1} implies that
\[\Delta \le   O\left(2^{-2k}\cdot  \frac{L_{2k}(\mathcal{H})  (2p)^{2k}}{N^{k/2}}   +  (2p)^{2(k+1)} N^{(k+1)/2} \right)    \]
This completes the proof of \cref{mainlemma2}.
\end{proof}

\section{Proof of Main Theorem}

For $u,v\in \mathbb{N}^k$, let $\mathbbm{1}_{u=v} \in \{0,1\}$ be the indicator function that is 1 if and only if $u=v$. As mentioned in the preliminaries, we identify sets $S\subseteq[k]$ with their indicator vectors in $\{0,1\}^k$.

\subsection{Proof of \cref{theorem1}}
Let $ \Delta:=   \underset{z\sim \tilde{\mu}_0^{(k)} }{\E}[H(z)]- \underset{z\sim  \tilde{\mu}_1^{(k)} }{\E}[ H( z)] $ be the quantity that we wish to bound in \cref{theorem1}. By the definition of $\tilde{\mu}_0^{(k)}, \tilde{\mu}_1^{(k)}$ as in \cref{roundingharddistributions}, we have
\[ \Delta =\frac{1}{2^{k-1}}   \sum_{S\subseteq [k] } (-1)^{|S|} \underset{z\sim  \tilde{\mathcal{G}}^S\mathcal{U}^{\bar{S}}}{\E} \left[H(z ) \right] \]
Let $S\subseteq [k]$. Note the distribution $\tilde{\mathcal{G}}^S \mathcal{U}^{\bar{S}}$ is obtained by rounding the distribution $\mathcal{G}^S \mathcal{O}^{\bar{S}}$ as in \cref{rounding}. We can thus apply \cref{claim2} to the multilinear polynomial $H(z)$ for the distribution $\mathcal{G}^S\mathcal{O}^{\bar{S}}$ to obtain that $\underset{z\sim  \tilde{\mathcal{G}}^S\mathcal{U}^{\bar{S}}}{\E} \left[H(z ) \right]=\underset{z\sim  \mathcal{G}^S\mathcal{O}^{\bar{S}}}{\E} \left[H(trnc(z) ) \right]. $  This along with the above expression for $\Delta$ implies that
\begin{equation} \label{hequation2}  \Delta =\frac{1}{2^{k-1}}   \sum_{S\subseteq [k] } (-1)^{|S|} \underset{z\sim  \mathcal{G}^S\mathcal{O}^{\bar{S}}}{\E} \left[H(trnc(z) ) \right]    \end{equation}

Let $T=16N^{2k},p=\frac{1}{\sqrt{T}}=\frac{1}{4N^k}$. For each $t\in [T]$ and $j\in [k]$, let $z^{(t)}_j\sim p\cdot \mathcal{G}$ be an independent sample. By convention, $z^{(0)}_j:=0$ for all $j\in[k]$. Let $Z$ refer to the collection $\{z_j^{(t)}\}_{t\in \{0,\ldots,T\},j\in [k]}$ of random variables. For $t\in\{0,\ldots, T\}$ and $j\in [k]$, define $z^{\le (t)}_j := z^{(0)}_j +\ldots +z^{(t)}_j$. Note that the random variable $z_j^{\le(t)}$ has a Gaussian distribution with mean 0 and covariance matrix as $p^2t$ times that of $\mathcal{G}$ for all $j\in [k]$. In particular, $z_j^{\le(T)}$ is distributed according to $\mathcal{G}$ for all $j\in [k]$.

Let $a=(a_1,\ldots,a_k)$ for $a_1,\ldots,a_k\in \{0,\ldots,T\}$. Let $a-1$ denote the vector $(a_1-1,\ldots,a_k-1)$. Let $z^{ (a)}:=(z^{ (a_1)}_1,\ldots,z^{ (a_k)}_k)$ and define $z^{\le (a)}:=(z^{\le (a_1)}_1,\ldots,z^{\le (a_k)}_k)$. Note that $z^{(a)}$ is distributed according to $p\cdot \mathcal{G}^k$ for all $a\in [T]^k$. Also note that $z^{\le(a)}$ is distributed according to $(p\sqrt{a_1}\cdot \mathcal{G}) \times \ldots \times (p\sqrt{a_k}\cdot \mathcal{G})$ for all $a\in \{0,\ldots,T\}^k$. In particular, for every $S\subseteq [k]$, the random variable $z^{\le(T\cdot S)}$ is distributed according to $\mathcal{G}^S \mathcal{O}^{\bar{S}}$. Using this observation in \cref{hequation2}, we have
\begin{equation} \label{hequation} \Delta =\frac{1}{2^{k-1}}   \sum_{S\subseteq[k] } (-1)^{|S|} \underset{Z}{\E} \left[H(trnc(z^{\le (T\cdot S)})) \right] \end{equation}

\begin{claim} \label{telescopic} For $a\in [T]^k$, let $\Delta_a$ be as follows.
\[\Delta_a := \frac{1}{2^{k-1}}  \sum_{S\subseteq [k]}(-1)^{|S|} \underset{Z}{\E}\left[H( trnc({z}^{\le( a-1+S)}) ) \right]     \]
Then, $\underset{a\in [T]^k}{\sum} \Delta_a = \Delta$.
\end{claim}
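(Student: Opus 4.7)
The plan is to recognize $\Delta_a$ as a $k$-dimensional discrete finite difference of the function $\phi(a) := \E_Z[H(trnc(z^{\le(a)}))]$, and then use iterated one-dimensional telescoping to collapse $\sum_{a\in[T]^k} \Delta_a$ down to the signed sum on the corner points $T\cdot S$, which is exactly the expression \eqref{hequation} for $\Delta$.

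Concretely, by linearity of expectation and the fact that all sums are finite, it suffices to prove the deterministic identity
\[ \sum_{a\in [T]^k} \sum_{S\subseteq [k]} (-1)^{|S|} \phi(a-\mathbf{1}+S) \;=\; \sum_{S\subseteq [k]} (-1)^{|S|} \phi(T\cdot S), \]
for an arbitrary function $\phi$ on $\{0,\ldots,T\}^k$; then setting $\phi(a)=\E_Z[H(trnc(z^{\le(a)}))]$ and dividing by $2^{k-1}$ yields $\sum_a \Delta_a = \Delta$. For the inner signed sum, observe that the map $S\mapsto a-\mathbf{1}+S$ takes each coordinate $j$ to either $a_j-1$ or $a_j$, so
\[ \sum_{S\subseteq [k]} (-1)^{|S|} \phi(a-\mathbf{1}+S) \;=\; (-1)^k \prod_{j=1}^{k} \Delta_j \; \phi(a-\mathbf{1}), \]
where $\Delta_j$ denotes the forward difference operator in the $j$-th coordinate, defined by $(\Delta_j \psi)(x) = \psi(x+e_j)-\psi(x)$. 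This is a direct expansion of the product of binomials $\prod_j (\Delta_j) = \prod_j(T_j - I)$, where $T_j$ shifts the $j$-th coordinate by $1$.

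Next I apply the standard one-dimensional telescoping identity $\sum_{a_j=1}^T (\Delta_j \psi)(\ldots,a_j-1,\ldots) = \psi(\ldots,T,\ldots)-\psi(\ldots,0,\ldots)$ in each coordinate in turn (the operators commute since they act on disjoint variables), obtaining
\[ \sum_{a\in [T]^k} \prod_{j=1}^{k} \Delta_j\; \phi(a-\mathbf{1}) \;=\; \sum_{S\subseteq [k]} (-1)^{k-|S|} \phi(T\cdot S). \]
Multiplying by $(-1)^k$ gives exactly $\sum_{S} (-1)^{|S|} \phi(T\cdot S)$, proving the deterministic identity above. I would verify this explicitly in small cases ($k=1$ and $k=2$) to orient the reader, since the $k=2$ case is precisely the two-dimensional cancellation depicted in Figure~1.

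There is no real obstacle here; the only thing to be careful about is the bookkeeping of signs (the $(-1)^k$ factor from expanding $\prod(T_j-I)$ cancels the $(-1)^{k-|S|}$ emerging from telescoping, leaving $(-1)^{|S|}$), and justifying the exchange of expectation and summation, which is immediate since we are summing finitely many terms. Finally, comparing with the expression for $\Delta$ established in \eqref{hequation}, we conclude $\sum_{a\in [T]^k} \Delta_a = \Delta$ as claimed.
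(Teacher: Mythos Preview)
Your proof is correct and follows essentially the same approach as the paper's: both recognize the sum as a $k$-fold telescoping identity and exploit the product structure over coordinates. The only difference is presentational---you phrase it via commuting forward-difference operators $\Delta_j = T_j - I$ and iterated one-dimensional telescoping, whereas the paper expands the double sum, collects the coefficient of each $\E_Z[H(trnc(z^{\le(b)}))]$, and computes that coefficient directly as $\prod_{j}(\mathbbm{1}_{b_j=0}-\mathbbm{1}_{b_j=T})$; both routes hinge on the same factorization over $j\in[k]$.
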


\begin{proof}[Proof of \cref{telescopic}]
By definition of $\Delta_a$, we have
\[ \label{delta1} 2^{k-1}  \underset{a\in [T]^k}{\sum} \Delta_a = \underset{a\in [T]^k}{\sum} \sum_{S\subseteq [k]} (-1)^{|S|} \underset{Z}{\E}\left[ H(trnc(z^{\le (a-1+S)})) \right]  \]
For every $a\in [T]^k$ and $S\subseteq [k]$, note that $a-1+S\in \{0,\ldots,T\}^k$. Thus, the R.H.S. of the above equation is a linear combination of terms $\underset{Z}{\E}[ H(trnc({z}^{\le(b)}) )]$ for $b\in \{0,\ldots,T\}^k$. That is,
\begin{equation} \label{delta} 2^{k-1} \underset{a\in [T]^k}{\sum} \Delta_a =\underset{b\in \{0,\ldots,T\}^k}{\sum}   \left( \sum_{a\in [T]^k}  \sum_{S\subseteq [k]} \mathbbm{1}_{a-1+S=b}\cdot (-1)^{|S|} \right)   \underset{Z}{\E}[H(trnc({z}^{\le (b)}))]    \end{equation}
We now study the coefficient of $ \underset{Z}{\E}[H(trnc({z}^{\le (b)}))] $ in the R.H.S. of the above expression. Note that $(-1)^{|S|}$ is exactly $\prod_{j=1}^k (1-2S_j)$. For $a\in [T]^k$, let  $a=(a_1,\ldots,a_k)$ for $a_1,\ldots,a_k\in [T]$. Using this notation, the coefficient of $\underset{Z}{\E}[H(trnc(z^{\le(b)}))]$ in \cref{delta} is
\begin{align*}\begin{split}
&\sum_{a\in [T]^k}  \sum_{S\subseteq [k]}   \mathbbm{1}_{a-1+S=b} \cdot (-1)^{|S|}\\
&= \sum_{a\in [T]^k}  \sum_{S\subseteq [k]}   \underset{j\in [k]}{\prod} \left( S_j\cdot \mathbbm{1}_{a_j=b_j} +  (1-S_j) \cdot \mathbbm{1}_{a_j-1=b_j} \right) \cdot (-1)^{|S|} \\
& =  \sum_{a\in[T]^k}  \sum_{S\subseteq [k]} \underset{j\in [k]}{\prod} \left( S_j\cdot \mathbbm{1}_{a_j=b_j} +  (1-S_j) \cdot \mathbbm{1}_{a_j-1=b_j} \right)\cdot  \prod_{j\in [k]} (1-2S_j)  \\
& =  \sum_{a\in[T]^k}  \sum_{S\subseteq [k]}  \underset{j\in [k]}{\prod} \left( S_j (1-2S_j) \cdot \mathbbm{1}_{a_j=b_j} +  (1-S_j)(1-2S_j)  \cdot \mathbbm{1}_{a_j-1=b_j} \right) \\
& =  \sum_{a\in[T]^k}  \sum_{S\subseteq [k]}  \underset{j\in [k]}{\prod} \left( -S_j \cdot \mathbbm{1}_{a_j=b_j} +  (1-S_j) \cdot \mathbbm{1}_{a_j-1=b_j} \right)  \quad\quad \ldots\text{ since } S_j^2=S_j\text{ for all }j\in [k] \\
& =  \sum_{a\in[T]^k}  \underset{j\in [k]}{\prod} \sum_{S_j\in \{0,1\}}\left( -S_j \cdot \mathbbm{1}_{a_j=b_j} +  (1-S_j) \cdot \mathbbm{1}_{a_j-1=b_j} \right)  \\
& =  \sum_{a\in[T]^k}  \underset{j\in [k]}{\prod} \left(  -\mathbbm{1}_{a_j=b_j} +  \mathbbm{1}_{a_j-1=b_j}   \right)  \\
&=   \underset{j\in [k]}{\prod}\sum_{a_j\in[T]}  \left(  -\mathbbm{1}_{a_j=b_j} +  \mathbbm{1}_{a_j-1=b_j} \right)  \\
& =   \underset{j\in [k]}{\prod} \left(  \mathbbm{1}_{0=b_j} -\mathbbm{1}_{T=b_j}  \right)  \\
\end{split}\end{align*}
Note that $\prod_{j\in [k]} \left(  \mathbbm{1}_{0=b_j} -\mathbbm{1}_{T=b_j}  \right) $ is non zero if and only if each coordinate of $b$ is in $\{0,T\}$. For $b\in \{0,T\}^k$, let $B:=\{j \in [k] : b_j=T\}$. Note that $ \underset{j\in [k]}{\prod}\left(  \mathbbm{1}_{0=b_j} -  \mathbbm{1}_{T=b_j} \right)=(-1)^{|B|}$. This, along with the above calculation implies that the coefficient of $\underset{Z}{\E}[H(trnc(z^{\le(b)}))]$ in the R.H.S. of \cref{delta} is precisely $(-1)^{|B|}$. Furthermore, note that $z^{\le(b)}=z^{\le(T\cdot B)}$. We substitute this in \cref{delta} to obtain
\[ 2^{k-1}\underset{a\in [T]^k}{\sum} \Delta_a  =  \underset{Z}{\E}\left[ \underset{B \subseteq [k] }{\sum}(-1)^{|B|}   H(trnc(z^{\le(T\cdot B)}))  \right] \]
This, along with \cref{hequation} completes the proof of \cref{telescopic}.
\end{proof}

Let $a\in [T]^k$. We now show how to bound $\Delta_a$. Let $E_a$ denote the event that $z^{\le (a-1)}\notin [-1/2,1/2]^{2kN}$. We show that $E_a$ is a low probability event. Recall that for $j\in [k],i\in [2N]$, the $(j,i)$-th coordinate of $z^{\le(a-1)}$ is distributed according to $\mathcal{N}(0,p^2 (a_j-1)\epsilon)$, where $p^2a_j\le 1 $ and $\epsilon=1/(60k^2\ln N)$. This implies that for every $i\in [2kN]$,
\[ \p[z^{\le (a-1)}(i)\notin [-1/2,1/2]]\le  \p[ |\mathcal{N}(0,\epsilon)| \ge 1/2 ] \le \exp(-1/(8\epsilon))\le \exp(-7k^2\ln N)\le \frac{1}{N^{7k^2}} \]
Applying a Union bound over coordinates $i\in [2kN]$, we have that for each $a\in [T]^k$,
\begin{equation}\label{equation1} \p[E_a] \triangleq \p[z^{\le (a-1)}\notin [-1/2,1/2]^{2kN}] \le 2kN\cdot \frac{1}{N^{7k^2}} \le \frac{2k}{N^{6k^2}} \end{equation}

\begin{definition}\label{deltaa} For $a\in [T]^k$, let
\[ \Delta_{\neg E_a} : = \frac{1}{2^{k-1}} \sum_{S\subseteq [k]} (-1)^{|S|}  \E_{Z} \left[H(trnc({z}^{\le(a-1+S)})) \mid \neg E_a \right] \]
\[ \Delta_{E_a} : =  \frac{1}{2^{k-1}}\sum_{S\subseteq [k]} (-1)^{|S|} \E_{Z} \left[  H(trnc({z}^{\le (a-1+S)})) \mid   E_a \right] \]
\end{definition}

We bound $\Delta_{\neg E_a}$ as follows. Fix any $z_0:=z^{\le(a-1)}$ such that $E_a$ does not occur. Let $S\subseteq[k]$. Note that by definition, for every fixed $z_0$, the random variable $z^{\le(a-1+S)}$ is distributed according to $z_0+p\cdot \mathcal{G}^S\mathcal{O}^{\bar{S}}$. We now apply \cref{maincorollary2} to the polynomial $H$ with parameters $p$ and $z_0=z^{\le(a-1)}$. The conditions of \cref{maincorollary2} are satisfied, since $z_0\in [-1/2,1/2]^{2kN}$, $p\le \frac{1}{4N^k}\le\frac{1}{4N}$, and for every $S\subseteq[k]$, the random variable $z^{\le(a-1+S)}$ is distributed according to $z_0+ p \cdot \mathcal{G}^S \mathcal{O}^{\bar{S}}$. \cref{maincorollary2} implies that
\begin{equation}\label{equation2} \frac{1}{2^{k-1}}\left|\sum_{S\subseteq [k]} (-1)^{|S|}  \E_{Z} \left[  H(z^{\le(a-1+S)}) \mid \neg E_a \right]  \right|  \le  O\left( 2^{-2k}\cdot \frac{L_{2k}(\mathcal{H}) (2 p)^{2k}}{N^{k/2}}   +  (2p)^{2(k+1)} N^{(k+1)/2} \right)  \end{equation}

Fix any $S\subseteq[k]$. Let $P\in \{0,p\}^{2kN}$ be such that for all $i\in [2N]$ and $j\in [k]$, we have $P_{j,i}= p$ if and only if $j\in S$. Using this notation, observe that for every fixed $z_0$, the random variable $z^{\le(a-1+S)}$ is distributed according to $z_0+P\cdot \mathcal{G}^k$. We now apply \cref{claim5} to the multilinear polynomial $H$ with $z_0={z}^{\le(a-1)}$ and $P$ as defined above. The conditions of this claim are satisfied since $z_0\in [-1/2,1/2]^{2kN}$ (since $E_a$ does not occur), $p\le \frac{1}{4N^k} \le \frac{1}{2}$ and $P\in \left[-p,p\right]^{2kN} \subseteq \left[- \frac{1}{2},\frac{1}{2}\right]^{2kN}$ and $H$ maps $\{-1,1\}^{2kN}$ into $[-1,1]$. Since $z^{\le(a-1+S)}$ is distributed according to $z_0+P\cdot \mathcal{G}^k$, \cref{claim5} implies that for all $S\subseteq [k]$,
\[  \E_{Z} \left[  H(z^{\le(a-1+S)} )-H(trnc({z}^{\le(a-1+S)})) \mid \neg E_a \right] \le O\left(\frac{1}{N^{5k^2}} \right)  \]
This inequality, along with Triangle-Inequality implies that
\begin{align}\begin{split}
\label{equation3}
\frac{1}{2^{k-1}} \left|   \sum_{S\subseteq [k]} (-1)^{|S|}   \E_{Z} \left[\left( H(z^{\le(a-1+S)} )-H(trnc({z}^{\le(a-1+S)}) ) \right) \mid \neg E_a \right]\right| \le O\left(\frac{1}{N^{5k^2}}\right)
\end{split}\end{align}
Combining \cref{equation2} and \cref{equation3} and applying Triangle-Inequality, we have
\begin{align}\begin{split}\label{equation4} |\Delta_{\neg E_a}|&\triangleq\frac{1}{2^{k-1}} \left|  \E_{Z} \left[ \sum_{S\subseteq [k]} (-1)^{|S|} H(trnc({z}^{\le (a-1+S)})) \mid \neg E_a \right]  \right|
\\
&\le  O\left(2^{-2k}\cdot  \frac{L_{2k}(\mathcal{H})  (2p)^{2k}}{N^{k/2}}   +  (2p)^{2(k+1)} N^{(k+1)/2} +\frac{1}{N^{5k^2}} \right)   \end{split}\end{align}
We now bound $\Delta_{E_a}$. For all $a\in[T]^k$ and $S\subseteq[k]$, since $trnc({z}^{\le(a-1+S)})\in [-1,1]^{2kN}$, and $H$ maps $[-1,1]^{2kN}$ to $[-1,1]$, we have $H(trnc({z}^{\le(a-1+S)}))\in [-1,1]$. This, along with the definition of $\Delta_{E_a}$ as in \cref{deltaa} implies that $|\Delta_{E_a}|\le 2$. By the definition of $\Delta_a$ and \cref{deltaa}, we have
\[  |\Delta_a| \le \p[E_a]\cdot |\Delta_{E_a}| + \p[\neg E_a]\cdot |\Delta_{\neg E_a}| \le \p[E_a]\cdot |\Delta_{E_a}| +  |\Delta_{\neg E_a}|\]
Using \cref{equation1}, \cref{equation4}, along with the inequality $|\Delta_{E_a}|\le 2$, we have
\begin{align}\begin{split}
\label{equation6}
|\Delta_a| &\le  O\left( \frac{2k}{N^{6k^2}} +  2^{-2k}\cdot \frac{L_{2k}(\mathcal{H})  (2p)^{2k}}{N^{k/2}}   +  (2p)^{2(k+1)} N^{(k+1)/2} +\frac{1}{N^{5k^2}}  \right) \\
&= O\left(   \frac{L_{2k}(\mathcal{H})  p^{2k}}{N^{k/2}}   +  (2p)^{2(k+1)} N^{(k+1)/2} + \frac{k}{N^{5k^2}}  \right)
\end{split}\end{align}
This establishes a bound on $\Delta_a$. Using \cref{telescopic} and Triangle-Inequality, we have $ |\Delta|\le \underset{a\in [T]^k}{\sum} |\Delta_a|$. Substituting the bound from \cref{equation6} for $\Delta_a$ in this, we have
\begin{align*}\begin{split}
 |\Delta| &\le \underset{a\in [T]^k}{\sum} O\left(   \frac{L_{2k}(\mathcal{H}) p^{2k}}{N^{k/2}}   +  (2p)^{2(k+1)} N^{(k+1)/2} + \frac{k}{N^{5k^2}}  \right) \\
&\le O\left( T^k\cdot \frac{L_{2k}(\mathcal{H})  p^{2k}}{N^{k/2}}   +  T^k\cdot (2p)^{2(k+1)} N^{(k+1)/2} + T^k\cdot \frac{k}{N^{5k^2}}  \right) \\
\end{split}\end{align*}
By our choice of $T=16 N^{2k}$ and $p=\frac{1}{\sqrt{T}}=\frac{1}{4N^k}$, we have the following inequality.
\begin{align*}\begin{split}
 |\Delta| &\le O\left(  \frac{L_{2k}(\mathcal{H}) }{N^{k/2}}   +  16^k N^{2k^2}\cdot \frac{1}{2^{2(k+1)}N^{2k(k+1)}}\cdot  N^{(k+1)/2} + 16^kN^{2k^2}\cdot \frac{k}{N^{5k^2}}  \right) \\
 &\le O\left( \frac{L_{2k}(\mathcal{H}) }{N^{k/2}}   + \frac{4^k}{ N^{2k}} \cdot  N^{(k+1)/2} + \frac{k\cdot 16^k}{N^{3k^2}}  \right) \\
 & \le O\left( \frac{L_{2k}(\mathcal{H}) }{N^{k/2}}   + \frac{4^k}{ N^{\frac{3k-1}{2}}} + \frac{k}{N^{3k^2-k}} \right)
 \end{split}\end{align*}
A small calculation then shows that
\[|\Delta| \le  O\left( \frac{L_{2k}(\mathcal{H}) }{N^{k/2}} \right)  + o\left(\frac{1}{ N^{k/2}} \right) \]
This completes the proof of \cref{theorem1}.

\subsection{Proof of \cref{restatetheorem1}}
\cref{restatetheorem1} essentially follows from the fact that functions in $\mathcal{H}$ are bounded over $\{-1,1\}^{N}$ and the fact that for $i\in \{0,1\}$ the distributions $\sigma_i^{(k)}$ and $\tilde{\mu}_i^{(k)}$ are nearly identical. Let $H\in \mathcal{H}$. Define distributions $\pi_0^{(k)}$ (respectively $\pi_1^{(k)}$) obtained by conditioning $\tilde{\mu}_0^{(k)}$ on $F^{(k)}(z)=-1$ (respectively conditioning $\tilde{\mu}_1^{(k)}$ on $F^{(k)}(z)=+1$).
\cref{concentrationcorollary2} implies for $\delta_0,\delta_1 = O\left(\frac{k}{N^{6k^2}}\right)$, we have $\tilde{\mu}_0^{(k)} = (1-\delta_0) \sigma_0^{(k)} +\delta_0 \pi_0^{(k)}$ and $ \tilde{\mu}_1^{(k)} = (1-\delta_1) \sigma_1^{(k)} + \delta_1 \pi_1^{(k)}$. Thus, for $i\in \{0,1\}$, we have
\[  \underset{ z\sim \tilde{\mu}_i^{(k)} }{\E} [H(z)] = (1-\delta_i)\underset{ z\sim \sigma_i^{(k)}}{\E}[H(z)]  +\delta_i \underset{z\sim \pi_i^{(k)}}{\E}[H(z) ]  \]
Let $\delta=\max(\delta_0,\delta_1)=O\left(\frac{k}{N^{6k^2}}\right)$. Since $H$ maps $\{-1,1\}^{2kN}$ to $[-1,1]$, we may bound $|\E_{ z\sim \sigma_i^{(k)}}[H(z)]| $ and $|\E_{z\sim \pi_i^{(k)}}[H(z) ]|$ by $1$. We subtract the equation for $i=1$ from that for $i=0$ and apply Triangle-inequality to obtain
\[ \left| \underset{ z\sim \tilde{\mu}_0^{(k)} }{\E} [H(z)] - \underset{ z\sim \tilde{\mu}_1^{(k)} }{\E} [H(z)]\right| \ge  \left| \underset{ z\sim \sigma_0^{(k)}}{\E}[H(z)]-\underset{ z\sim \sigma_1^{(k)}}{\E}[H(z)]\right|-3\delta
\]
Rearranging this, we have
\[(*):=\left| \underset{ z\sim \sigma_0^{(k)}}{\E}[H(z)]-\underset{ z\sim \sigma_1^{(k)}}{\E}[H(z)]\right| \le O\left(  \left| \underset{ z\sim \tilde{\mu}_0^{(k)} }{\E} [H(z)] - \underset{ z\sim \tilde{\mu}_1^{(k)} }{\E} [H(z)]\right|  + \delta\right) \]
We use \cref{theorem1} to bound the first term in the R.H.S. Furthermore, we use the fact that $\delta =O\left(\frac{k}{N^{6k^2}}\right)=o\left(\frac{1}{N^{k/2}}\right)$ to obtain that $ (*) \le O\left( \frac{L_{2k}(\mathcal{H})}{N^{k/2}}\right)  +o\left( \frac{1}{N^{k/2}} \right)$. This completes the proof of \cref{restatetheorem1}.

\section{Applications}
\paragraph*{Quantum Upper Bound} The quantum query algorithm for $F^{(k)}$ is derived from \cite{aaronson10,aaronsonambainis}. These papers provide a quantum query algorithm $Q(z)$ which makes one quantum query to the input $z\in \{-1,1\}^{2N}$ and returns a (probabilistic) $b\in \{0,1\}$, with the property that $\p[b=1]=\frac{1+forr(z)}{2}$. Given input $z=(z_1,\ldots,z_k)$ where $z_1,\ldots,z_k\in \{-1,1\}^{2N}$, we are promised that for each $j\in [k]$, either $forr(z_j)\ge \epsilon/2$ or $forr(z_j)\le \epsilon/4$. This implies that for all $j\in [k]$, the probability that $Q(z_j)$ returns 1 is either at least $\frac{1+\epsilon/2}{2}$ or at most $\frac{1+\epsilon/4}{2}$. By repeating the algorithm $O\left(\frac{\log k}{\epsilon^2}\right)$ times and taking the threshold, we can produce an algorithm that for each $j\in [k]$, distinguishes between $F(z_j)=1$ and $F(z_j)=-1$ with probability at least $1-\frac{1}{10k}$. By a Union-bound over $j\in [k]$, with probability at least $9/10$, this algorithm computes $F(z_j)$ for all $j\in [k]$. In particular, it can compute $F^{(k)}(z)=\prod_{j=1}^k F(z_j)$ with probability at least $9/10$. Observe that the number of queries made by this algorithm is $k\times \log k/\epsilon^2 = O\left(k^5\log k\log^2 N \right)$.

It follows that the above algorithm can distinguish the distributions $\sigma_0^{(k)}$ and $\sigma_1^{(k)}$ with at least $9/10$ advantage. A variant of this algorithm can be used to establish the quantum communication protocol in \cref{theorem4}. This step is identical to Theorem 3.3 from \cite{grt}, so we omit it. We now prove the classical lower bounds.

\subsection{Query Complexity Separations}

\begin{proof}[Proof of \cref{theorem4}]

Let $d=o\left(\frac{\sqrt{N}}{\log N}\right)$. Note that $\frac{d\log(kN)}{{\sqrt{N}}}=o(1)$.  \cref{corollary3oftheorem1} implies that every decision tree of depth at most $d$ can distinguish $\sigma_0^{(k)}$ and $\sigma_1^{(k)}$ with advantage at most $ \left( \frac{O(d\log(kN))}{N^{1/2}}
\right)^k\le \exp(-\Omega(k))$.  Note that $\sigma_1^{(k)}$ and $\sigma_0^{(k)}$ are distributions on the \textsc{yes} and \textsc{no} instances of $F^{(k)}$, respectively.
This implies that every randomized decision tree of depth $d=\tilde{o}(\sqrt{N})$ can solve $F^{(k)}$ with at most $\exp(-\Omega(k))$ advantage.
\end{proof}

\begin{proof}[Proof of \cref{corollary3oftheorem1}] Let $\mathcal{H}$ denote the set of Boolean functions on $2kN$ variables that are computed by deterministic decision trees of depth at most $d$. $\mathcal{H}$ is clearly closed under restrictions. We use the following lemma due to \cite{tal} which bounds the level $2k$ mass of $\mathcal{H}$.
\begin{lemma}[\cite{tal}] For all $k\in \mathbb{N}$,  we have $L_{2k}(\mathcal{H}) \le \left(O\left( \sqrt{d\log(kN)}
\right)\right)^{2k}$.
\end{lemma}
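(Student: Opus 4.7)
The plan is to obtain this as a direct consequence of Tal's Fourier-mass theorem for decision trees. Tal's result asserts that for any $f:\{-1,1\}^n\to\{-1,1\}$ computed by a depth-$d$ decision tree and any integer $\ell\ge 1$,
$$L_\ell(f) \le \bigl(O(\sqrt{d\log n})\bigr)^\ell.$$
Since $\mathcal{H}$ is the family of depth-$d$ trees on $n=2kN$ Boolean inputs, instantiating at $\ell=2k$ and absorbing $\log(2kN)=O(\log(kN))$ (which is immediate in the regime $k=o(N^{1/50})$) yields $L_{2k}(\mathcal{H})\le(O(\sqrt{d\log(kN)}))^{2k}$, exactly as claimed.

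For reproving Tal's theorem itself, I would proceed in three steps. First, decompose the tree into its leaf polynomials: $T=\sum_\ell v_\ell P_\ell$, where $v_\ell\in\{-1,1\}$ is the leaf value and $P_\ell$ is the indicator polynomial of the path to $\ell$, written as a product of literals $(1\pm x_i)/2$. Second, exploit the two global constraints that $\deg(T)\le d$ (confining Fourier support to levels at most $d$) and $\sum_S\widehat T(S)^2\le 1$ (Parseval), so that $W^{=k}(T)\le 1$ at each level $k$. Third, replace the naive Cauchy--Schwarz bound $L_k(T)\le\sqrt{\binom{n}{k}\,W^{=k}(T)}$, which scales as $n^{k/2}$ and is far too weak, with a hypercontractive level-$k$ inequality tuned to depth. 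The implementation is via random restrictions keeping each variable independently with probability $p$: the identity $\E_\rho[L_k(T_\rho)]=p^k L_k(T)$ combined with the fact that $T_\rho$ is still a decision tree of expected depth $pd$ drives an induction on depth, and a Bonami--Beckner step converts an $L_2$ estimate on $T_\rho$ into an $L_1$ estimate at each scale. Optimizing $p\sim 1/\sqrt{d\log n}$ closes the recursion.

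The main obstacle is precisely this $L_2\to L_1$ conversion at a fixed Fourier level. For a generic bounded function, one pays a $\sqrt{\binom{n}{k}}$ factor, which is exponentially worse than the target $(\log n)^{k/2}$. Tal's key insight is that the low-depth structure reduces the \emph{effective} ambient dimension seen by hypercontractivity from $n$ down to $\mathrm{polylog}(n)$, essentially because a random restriction of a depth-$d$ tree shrinks the object geometrically while the level-$k$ mass loses only a factor of $p^k$. With the lemma established, substituting $L_{2k}(\mathcal{H})$ into \cref{restatetheorem1} gives the $(O(d\log(kN))/\sqrt{N})^k$ advantage bound used to complete the proof of \cref{corollary3oftheorem1}.
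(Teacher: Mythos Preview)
Your first paragraph is correct and is exactly what the paper does: the lemma is stated with attribution to \cite{tal} and used as a black box, with no proof given in the paper; instantiating Tal's bound at level $\ell=2k$ on $n=2kN$ variables is all that is needed. The remainder of your proposal --- a sketch of how to reprove Tal's theorem via random restrictions and induction on depth --- goes beyond anything the paper attempts, so there is nothing to compare it against here.

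One technical caution on that sketch: the identity $\E_\rho[L_k(T_\rho)]=p^k L_k(T)$ as written is not correct for the standard $p$-random restriction (keep each coordinate with probability $p$, fix the rest uniformly). Under such a restriction, conditioned on the live set $J$, one has $\E_z[\widehat{T_\rho}(S)]=\widehat{T}(S)$ for $S\subseteq J$, but the absolute values do not factor, and contributions from larger sets $T$ with $T\cap J=S$ enter $\widehat{T_\rho}(S)$. Tal's actual argument does use random restrictions and an inductive/recursive structure, but the bookkeeping is more delicate than the clean identity you wrote; if you intend to include a self-contained proof, that step needs to be reworked.
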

The above bound, along with \cref{restatetheorem1} implies that for all $H\in \mathcal{H}$,
\[ \left| \underset{ z\sim \sigma_0^{(k)}}{\E}\left[ H(z) \right]  -  \underset{ z\sim \sigma_1^{(k)}}{\E}\left[ H(z) \right] \right| \le  \left( \frac{O(d\log(kN))}{N^{1/2}}\right)^{k} + o\left(  \frac{1}{N^{k/2}}\right) =\left(   \frac{O(d\log(kN))}{N^{1/2}}\right)^{k}  \]
This completes the proof of \cref{corollary3oftheorem1}.
\end{proof}

\subsection{Circuit Complexity Separations}

\begin{proof}[Proof of \cref{theorem2}] Let $C$ be an AC0 circuit of depth $d$ and size $s=o\left( \exp\left( N^{\frac{1}{4(d-1)}}\right) \right)$. Note that $O\left( \log^{2d-2}(s) \right)=o(\sqrt{N})$. This, along with \cref{corollary1oftheorem1} implies that
\[ \left| \underset{z\sim \sigma_0^{(k)}}{\E}[C(z)] -  \underset{ z\sim \sigma_1^{(k)}}{\E}[C(z)]  \right| \le \left(\frac{O\left( \log^{2d-2}(s) \right)}{N^{1/2}} \right)^k \le \exp(-\Omega(k)) \]
Thus, we have produced distributions on \textsc{yes} and \textsc{no} instances of $F^{(k)}$ such that every depth $d$ AC0 circuit of size $o\left( \exp\left( N^{\frac{1}{4(d-1)}}\right) \right)$ can distinguish them with at most $\exp(-\Omega(k))$ advantage. This completes the proof of \cref{theorem2}.\end{proof}

\begin{proof}[Proof of \cref{corollary1oftheorem1}]
Let $\mathcal{H}$ denote the set of Boolean functions that are computed by AC0 circuits of depth at most $d$ and size at most $s$. Note that $\mathcal{H}$ is clearly closed under restrictions. We use the following lemma due to \cite{tal} which bounds the level $2k$ mass of $\mathcal{H}$.
\begin{lemma}[\cite{tal}] For all $k\in \mathbb{N}$,  we have $L_{2k}(\mathcal{H}) \le \left( O\left( \log^{d-1}
(s)\right)\right)^{2k}$.
\end{lemma}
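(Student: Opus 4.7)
The plan is to prove this bound by induction on the circuit depth $d$, following the strategy of Tal that combines H\aa stad's switching lemma with a restriction identity for Fourier coefficients. The key ingredient is the \emph{restriction inequality}: for a $p$-random restriction $\rho$ (each variable independently left free with probability $p$, otherwise set uniformly to $\pm 1$), one has $\hat{f}(S) = p^{-|S|} \cdot \E_\rho[\widehat{f_\rho}(S)\,\mathbbm{1}[S\subseteq\mathrm{free}(\rho)]]$, which via the triangle inequality yields $L_{2k}(f) \le p^{-2k}\cdot \E_\rho[L_{2k}(f_\rho)]$. I will iterate this inequality $d-1$ times, gaining a factor of $p^{-2k} = O(\log s)^{2k}$ at each step while peeling off one layer of the circuit.

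For the base case $d=1$, $f$ is a single AND/OR of literals (or a literal), and a direct Fourier computation gives $L_{2k}(f) = O(1) = (O(\log s))^{0}$, matching the claim. For the inductive step, apply a $p$-random restriction with $p = 1/(C\log s)$ for a sufficiently large absolute constant $C$. By H\aa stad's multi-switching lemma, with overwhelming probability all bottom-layer gates of $f_\rho$ can be simultaneously rewritten as decision trees of depth $O(\log s)$; these small trees can then be absorbed into the layer above, reducing the depth of $f_\rho$ to $d-1$ while keeping size at most $s$. On this ``good'' event, the inductive hypothesis gives $L_{2k}(f_\rho) \le (O(\log s))^{2k(d-2)}$. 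On the complementary ``bad'' event, I fall back to the trivial bound $L_{2k}(f_\rho) \le \sqrt{\binom{n}{2k}} \le n^k$ coming from Parseval and Cauchy--Schwarz. Tuning the switching-lemma parameters so that the bad event has probability at most $n^{-k}(\log s)^{2k(d-2)}$, which is well within the exponentially decaying failure probability of the multi-switching lemma, this contribution is absorbed. Multiplying through by $p^{-2k} = (C\log s)^{2k}$ yields $L_{2k}(f) \le (O(\log s))^{2k(d-1)}$, completing the induction.

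The main obstacle will be the careful coordination of parameters in the multi-switching lemma: the failure probability must decay fast enough in the decision-tree depth cutoff to dominate both the $p^{-2k}$ inflation factor and the trivial $n^k$ bound that applies on bad restrictions, while the depth cutoff must simultaneously be small enough that the collapsed bottom layers fold cleanly into depth-$(d-1)$ circuits of essentially the same size. A secondary subtlety is that the absorption step requires the \emph{common} decision tree version of the switching lemma (\`a la Rossman or Impagliazzo--Matthews--Paturi) rather than the classical single-DNF statement, since all bottom-layer gates must be simultaneously replaced by a single decision-tree structure before they can be merged into the layer above. Once these bookkeeping issues are resolved, the induction goes through cleanly and delivers the stated bound.
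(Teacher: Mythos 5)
The paper does not prove this lemma at all: it is imported as a black box from Tal's work on the Fourier spectrum of $\mathrm{AC}^0$ (\cite{talac}), so there is no in-paper argument to compare yours against. Your sketch does follow the broad strategy of that reference --- the restriction identity $L_{2k}(f)\le p^{-2k}\,\E_\rho[L_{2k}(f_\rho)]$ combined with depth reduction via the (multi-)switching lemma, iterated $d-1$ times --- and that part is sound.

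However, as a standalone proof your inductive step has a genuine gap in the bad-event handling, and it is not mere bookkeeping. You propose to tune the multi-switching lemma so that $\Pr[\text{bad}]\le n^{-k}(\log s)^{2k(d-2)}$, so that the trivial bound $L_{2k}(f_\rho)\le\sqrt{\binom{n}{2k}}\le n^{k}$ is absorbed. But the failure probability of the multi-switching lemma with restriction parameter $p=1/(C\log s)$ and common decision-tree depth cutoff $t$ is of the form $s\cdot 2^{-\Omega(t)}$, so pushing it below $n^{-k}$ forces $t=\Omega(k\log n+\log s)$. In this paper $k$ is not a constant ($k$ can be $\mathrm{polylog}(N)$ or even $N^{c}$, and the circuit has $n=2kN$ inputs), so the required $t$ vastly exceeds $O(\log s)$; absorbing common decision trees of that depth into the layer above blows up the size by $2^{t}$ and destroys the induction. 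The two constraints you flag as needing ``coordination'' are in fact irreconcilable in the regime this lemma is used. The way Tal's argument actually avoids this is to dispense with the good/bad dichotomy and the trivial $n^{k}$ fallback entirely: one integrates over the full distribution of the (common) decision-tree depth $t$, using that a depth-$t$ decision tree with leaves of bounded level-$2k$ mass contributes only $\binom{t}{2k}\cdot(\text{inductive bound})$, a polynomial in $t$, against a tail probability decaying like $2^{-t}$; the resulting expectation converges to $(O(\log s))^{2k}$ times the inductive bound with no dependence on $n$ or on a separate failure event. Without that change your induction does not close for the parameters relevant here.
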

The above bound, along with \cref{theorem1} implies that for all $H\in \mathcal{H}$,
\[ \left| \underset{ z\sim \tilde{\mu}_0^{(k)}}{\E}\left[ H(z) \right]  -  \underset{ z\sim \tilde{\mu}_0^{(k)}}{\E}\left[ H(z) \right] \right| \le  \frac{\left(O\left( \log^{2d-2}
(s)\right)\right)^k}{N^{k/2}} + o\left(  \frac{1}{N^{k/2}}\right) = \left( \frac{O\left(\log^{2d-2}
(s)\right)}{N^{1/2}}\right)^k  \]
This completes the proof of \cref{corollary1oftheorem1}.
\end{proof}

\subsection{Applications to Communication Complexity Separations}

\begin{proof}[Proof of \cref{theorem3}]
Let $c=\tilde{o}(N^{1/4})$. Note that for $k=\tilde{o}(N^{1/4})$, we have $\frac{(c+8k)^2}{\sqrt{N}}=o(1).$ For $i\in \{0,1\}$, let $\pi_i^{(k)}$ denote the distribution of $(x,x\cdot z )$ where $x\sim U_{2kN}$ and $z\sim\sigma_i^{(k)}$. Note that $\pi_0^{(k)}$ and $\pi_1^{(k)}$ are distributions on the \textsc{yes} and \textsc{no} instances of $F^{(k)}\circ \textsc{xor}$, respectively. \cref{corollary2oftheorem1} implies that every deterministic protocol of cost at most $c$ for $F^{(k)}\circ \textsc{xor}$ can distinguish $\pi_0^{(k)}$ and $\pi_1^{(k)}$ with at most $O\left( \frac{(c+8k)^{2k}}{N^{k/2}}\right)\le \exp(-\Omega(k))$ advantage. This implies that no randomized protocol of cost $o(N^{1/4})$ solves $F^{(k)}(\oplus)$ with more than $\exp(-\Omega(k))$ advantage. This completes the proof of \cref{theorem3}.\end{proof}

To prove \cref{corollary2oftheorem1}, the idea is to apply \cref{restatetheorem1} on the function family defined by $\E_{x\sim U_{2kN}}C(x,x\cdot z)$, where $C$ is a small cost protocol. However, to prove a suitable upper bound on the level $2k$ mass, we require that each rectangle in the protocol is small. To handle this, we define an {\it extended} protocol $ext^l(C)$, in which the players reveal $l$ additional junk bits and then proceed with the original protocol $C$. This modification is only a technicality and the rest of the arguments are similar to the ones in \cite{grt}.

\begin{definition}
\label{extension}
Let $C:\{-1,1\}^M\times \{-1,1\}^M\rightarrow \{-1,1\}$ be any deterministic protocol and $l\in \mathbb{N}$. An extension $ext^l(C):\{-1,1\}^{M+ l}\times \{-1,1\}^{M+l}\rightarrow \{-1,1\}$ is a protocol in which Alice and Bob declare the last $l$ bits of their inputs and then follow $C$ on the first $M$ bits of their inputs.
\end{definition}

\begin{definition}
\label{xorprotocol} For any protocol $C:\{-1,1\}^M\times \{-1,1\}^M\rightarrow \{-1,1\}$, let $H_C:\{-1,1\}^M\rightarrow \mathbb{R}$ be defined at every $z\in \{-1,1\}^M$ by $H_C(z):=\underset{x\sim U_{M}}{\E}[C(x,x\cdot z)]$. For any distribution $\mathcal{C}$ over protocols $C:\{-1,1\}^M\times \{-1,1\}^M\rightarrow \{-1,1\}$, let $H_{\mathcal{C}}$ be defined at every $z\in \{-1,1\}^M$ by $H_{\mathcal{C}}(z):=\underset{C\sim \mathcal{C}}{\E}[H_C(z)]$.
\end{definition}

\begin{lemma} \label{closedunderrestrictions}  Let $l,M\in \mathbb{N}$. Let $\mathcal{H}$ be the family of functions $H$ obtained as follows. Let $\mathcal{C}$ be an arbitrary distribution over deterministic protocols $C:\{-1,1\}^{M}\times \{-1,1\}^{M}\rightarrow \{-1,1\}$ of cost at most $c$. Let $H_{ext^l(\mathcal{C})}$ be as in \cref{xorprotocol}, and \cref{extension} and let $H:\{-1,1\}^{M}\rightarrow\mathbb{R}$ be defined at every $z\in \{-1,1\}^{M}$ by $ H(z):=\underset{z'\sim U_l}{\E} [H_{ext^l(\mathcal{C})}(z,z')]$. Then, $\mathcal{H}$ is closed under restrictions.
\end{lemma}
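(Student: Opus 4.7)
The plan is to peel off the extension, which plays no functional role, and then reduce the closure property to an explicit construction of a new distribution of protocols by hardcoding the restricted coordinates. First, I would note that $ext^l(C)((x,x'),(y,y'))=C(x,y)$ as a function of the inputs, so unfolding \cref{xorprotocol} yields $H_{ext^l(C)}(z,z')=H_C(z)$ independently of $z'$, and hence $H(z)=\E_{C\sim\mathcal{C}}[H_C(z)]$. The task then reduces to producing, for every distribution $\mathcal{C}$ over deterministic protocols of cost at most $c$, a new such distribution $\mathcal{C}'$ satisfying $\E_{C'\sim\mathcal{C}'}[H_{C'}(z)]=\E_{C\sim\mathcal{C}}[H_C(\rho_v(z))]$. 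By linearity of expectation over $C$, it suffices to handle the single-protocol case.

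Next, I would fix $C$ of cost at most $c$ and split coordinates according to the restriction: let $I:=\{i:v_i=0\}$ and $\bar{I}:=[M]\setminus I$, so that $v_{\bar{I}}\in\{-1,1\}^{|\bar{I}|}$. Unfolding $H_C(\rho_v(z))=\E_x[C(x,x\cdot\rho_v(z))]$ and observing that $(x\cdot\rho_v(z))_i$ equals $x_iz_i$ for $i\in I$ and $x_iv_i$ for $i\in\bar{I}$, the change of variables $w:=x_{\bar{I}}\cdot v_{\bar{I}}$ (which is uniform over $\{-1,1\}^{|\bar{I}|}$, since $v_{\bar{I}}$ is a fixed $\pm 1$ vector) gives
\[H_C(\rho_v(z))=\underset{w,\,x_I}{\E}\left[C\left((x_I,\,w\cdot v_{\bar{I}}),\,(x_I\cdot z_I,\,w)\right)\right].\]

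For each fixed $w\in\{-1,1\}^{|\bar{I}|}$, I would define the protocol $C'_w:\{-1,1\}^M\times\{-1,1\}^M\to\{-1,1\}$ by $C'_w(x,y):=C((x_I,\,w\cdot v_{\bar{I}}),(y_I,\,w))$; equivalently, $C'_w$ ignores the $\bar{I}$-coordinates of its inputs and runs $C$ with Alice's $\bar{I}$-input hardcoded to $w\cdot v_{\bar{I}}$ and Bob's to $w$. Since $v$ is public and $w$ is fixed within the specification of $C'_w$, both hardcoded values are known to the players from the outset without any communication, so $C'_w$ inherits cost at most $c$ from $C$. Unfolding $H_{C'_w}(z)=\E_x[C'_w(x,x\cdot z)]$ and noting that the expectation over $x_{\bar{I}}$ is trivial (since $C'_w$ ignores these coordinates) reproduces exactly the integrand above, so averaging over $w\sim U_{|\bar{I}|}$ recovers $H_C(\rho_v(z))$. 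Setting $\mathcal{C}'$ to be the distribution that samples $C\sim\mathcal{C}$ and $w\sim U_{|\bar{I}|}$ independently and outputs $C'_w$ then delivers the desired element of $\mathcal{H}$.

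The main subtlety, and the only place where one must be careful, is confirming that $C'_w$ really has cost at most $c$; this follows from the fact that hardcoding input bits to publicly known constants in a deterministic protocol can only restrict which branches of the protocol tree are reached, never increase communication. Everything else is routine bookkeeping: the change of variables is linear, and the extension $ext^l$ plays no role beyond the initial reduction.
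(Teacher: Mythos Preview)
Your proposal is correct and follows essentially the same approach as the paper's proof. Both arguments first observe that the extension $ext^l$ is functionally inert (the paper's \cref{usefulrestriction} is exactly your ``peeling off'' step), then build a new distribution over cost-$c$ protocols by having the players hardcode uniformly random values on the restricted coordinates $\bar I$; your pair $(w\cdot v_{\bar I},\,w)$ for Alice and Bob is the paper's $(a,\,a\cdot v)$ under the relabeling $a=w\cdot v_{\bar I}$, and the verification that the resulting averaged $H_{C'}$ equals $H_C\circ\rho_v$ is the same computation in both.
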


The proof of this is a simple unravelling of definitions and is deferred to the appendix.

\begin{lemma} \label{weightbound3} Let $l= \lceil 2k\log e\rceil$. Let $\mathcal{H}$ be the family as in \cref{closedunderrestrictions}. Then, $L_{2k}(\mathcal{H})\le O\left( \left(\frac{e}{k}\right)^{2k}\cdot (c+2l)^{2k} \right)$.
\end{lemma}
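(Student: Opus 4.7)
The plan is to bound $L_{2k}(H)$ for each $H\in\mathcal{H}$ by (i) reducing to a single deterministic protocol via the triangle inequality over $C\sim\mathcal{C}$, (ii) expanding through the rectangle decomposition of the extended protocol $ext^l(C)$, and (iii) applying the standard level-$k$ Fourier inequality to each rectangle. The point of the extension is to guarantee that every rectangle has small density, so that the level-$k$ inequality applies uniformly without case analysis.

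For step (i), I would write $H(z)=\E_{C\sim\mathcal{C},\,z'\sim U_l}[H_{ext^l(C)}(z,z')]$ and observe that $\widehat{H}(S)=\E_{C}[\widehat{H_{ext^l(C)}}(S\cup\emptyset_l)]$ for every $|S|=2k$, since averaging over $z'$ kills every Fourier coefficient whose support touches the last $l$ coordinates. By the triangle inequality, it suffices to bound $\sum_{|S|=2k}|\widehat{H_{ext^l(C)}}(S\cup\emptyset_l)|$ for an arbitrary fixed deterministic protocol $C$ of cost $c$. For step (ii), the extended protocol $ext^l(C)$ partitions $\{-1,1\}^{M+l}\times\{-1,1\}^{M+l}$ into at most $2^{c+2l}$ combinatorial rectangles $R=A_R\times B_R$, each of the form $A_R=A'_R\times\{a_R\}$ and $B_R=B'_R\times\{b_R\}$ for declared strings $a_R,b_R\in\{-1,1\}^l$. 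Writing $ext^l(C)=\sum_R v_R\mathbf{1}_{A_R}\mathbf{1}_{B_R}$ and applying the convolution identity $\widehat{f*g}=\widehat{f}\cdot\widehat{g}$ from the preliminaries yields $\widehat{H_{ext^l(C)}}(T)=\sum_R v_R\,\widehat{\mathbf{1}_{A_R}}(T)\,\widehat{\mathbf{1}_{B_R}}(T)$ for every $T\subseteq[M+l]$. Swapping sums and applying Cauchy-Schwarz per rectangle across the $\binom{M}{2k}$ indices $S$ gives
\[\sum_{|S|=2k}\bigl|\widehat{H_{ext^l(C)}}(S\cup\emptyset_l)\bigr|\le \sum_R \sqrt{W^{=2k}_{A_R}}\sqrt{W^{=2k}_{B_R}},\]
where $W^{=2k}_{A_R}$ is bounded by the full level-$2k$ Fourier weight $W^{=2k}(\mathbf{1}_{A_R})$ of $\mathbf{1}_{A_R}$ on $\{-1,1\}^{M+l}$.

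For step (iii), since $A_R\subseteq\{-1,1\}^M\times\{a_R\}$, its density satisfies $\alpha_R\le 2^{-l}$, and the choice $l=\lceil 2k\log e\rceil$ is precisely calibrated so that $\ln(1/\alpha_R)\ge l\ln 2\ge 2k$. The standard level-$k$ Fourier inequality therefore applies and yields $W^{=2k}(\mathbf{1}_{A_R})\le \alpha_R^2\bigl(e\ln(1/\alpha_R)/k\bigr)^{2k}$, and analogously for $B_R$ with density $\beta_R$. Writing $\alpha_R\le 2^{-c_{R,A}}$ and $\beta_R\le 2^{-c_{R,B}}$ in terms of the bit-depths $c_{R,A},c_{R,B}$ of Alice and Bob at the leaf $R$ in the protocol tree of $ext^l(C)$, AM-GM gives $\ln(1/\alpha_R)\ln(1/\beta_R)\le (c_R\ln 2/2)^2$ with $c_R:=c_{R,A}+c_{R,B}$. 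Summing using the partition identity $\sum_R\alpha_R\beta_R=1$ together with the uniform bound $c_R\le c+2l$ then yields
\[L_{2k}(H)\le \left(\frac{e\ln 2\,(c+2l)}{2k}\right)^{2k} = O\!\left(\left(\frac{e}{k}\right)^{2k}(c+2l)^{2k}\right),\]
as required.

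The main obstacle is ensuring the level-$k$ inequality applies uniformly to every rectangle. Without the extension, some rectangles of $C$ could have density close to $1$, where the inequality is vacuous and the Fourier-mass bound would collapse; the calibration $l=\lceil 2k\log e\rceil$ forces every extended rectangle to have density at most $e^{-2k}$, comfortably inside the regime of applicability. A secondary subtlety is relating the density and level-$k$ weight of $A_R$ viewed in $\{-1,1\}^{M+l}$ to the protocol tree depth $c_{R,A}$, where one must use the refined partition identity $\sum_R\alpha_R\beta_R=1$ (rather than the weaker Kraft inequality) to close the final summation cleanly.
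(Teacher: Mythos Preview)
Your steps (i), (ii), and the application of the level-$k$ inequality in (iii) match the paper's argument essentially verbatim. The gap is in your final summation. You write $\alpha_R\le 2^{-c_{R,A}}$ and then conclude $\ln(1/\alpha_R)\ln(1/\beta_R)\le(c_R\ln 2/2)^2$ with $c_R=c_{R,A}+c_{R,B}\le c+2l$. But $\alpha_R\le 2^{-c_{R,A}}$ yields $\ln(1/\alpha_R)\ge c_{R,A}\ln 2$, an inequality in the wrong direction; and in fact there is no upper bound on $\ln(1/\alpha_R)$ in terms of the number of bits Alice has sent, since a single protocol bit can shrink Alice's consistent set by an arbitrarily large factor. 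Equivalently, individual rectangles in a cost-$(c+2l)$ protocol can have measure far below $2^{-(c+2l)}$, so the uniform bound $\ln(1/(\alpha_R\beta_R))\le(c+2l)\ln 2$ that your argument relies on simply fails.

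The paper closes the sum differently. After the same Cauchy--Schwarz and level-$k$ steps it arrives at $L_{2k}(H)\le(e/k)^{2k}\sum_R p_R\bigl(\ln(1/p_R)\bigr)^{2k}$ with $p_R=\mu(A_R)\mu(B_R)$, using only the trivial monotonicity $\ln(1/\alpha_R)\ln(1/\beta_R)\le\bigl(\ln(1/p_R)\bigr)^2$. It then observes that $p\mapsto p(\ln(1/p))^{2k}$ is concave on $[0,e^{-(2k-1)}]$; the extension guarantees every $p_R\le e^{-4k}$ lies in this interval, and Jensen's inequality over the at most $2^{c+2l}$ rectangles (using $\sum_R p_R=1$) gives $\sum_R p_R(\ln(1/p_R))^{2k}\le\bigl((c+2l)\ln 2\bigr)^{2k}$. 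This concavity step is the missing idea in your proposal: it is precisely what handles the possibility of very small rectangles, which your uniform-depth argument cannot.
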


The proof of this is similar to that of Claim 1 in \cite{grt} and is deferred to the appendix.

\begin{proof}[Proof of \cref{corollary2oftheorem1}]

Let $l=\lceil 2k\log e\rceil$. Let $\mathcal{H}$ be the family of functions as in \cref{closedunderrestrictions}. \cref{closedunderrestrictions} implies that the family $\mathcal{H}$ is closed under restrictions. We now apply \cref{restatetheorem1} to $\mathcal{H}$ to obtain that for all $H\in \mathcal{H}$,
\begin{equation*} \label{maincorollary2eqn1} \left| \underset{ z\sim \sigma_0^{(k)}}{\E}\left[ H(z) \right]  -  \underset{ z\sim \sigma_1^{(k)}}{\E}\left[ H(z) \right] \right| \le O\left( \frac{L_{2k}(\mathcal{H})}{N^{k/2}}\right) + o\left(  \frac{1}{N^{k/2}}\right)  \end{equation*}
We use \cref{weightbound3} which upper bounds $L_{2k}(\mathcal{H})$. This, along with the previous inequality and the fact that $l=\lceil 2k\log e\rceil$ implies that
\begin{equation} \label{maincorollary2eqn2} \left| \underset{ z\sim \sigma_0^{(k)}}{\E}\left[ H(z) \right]  -  \underset{ z\sim \sigma_1^{(k)}}{\E}\left[ H(z) \right] \right| \le O\left( \frac{e^{2k}(c+2l)^{2k}}{k^{2k}N^{k/2}}\right) + o\left(  \frac{1}{N^{k/2}}\right)= O\left( \frac{(c+8k)^{2k}}{N^{k/2}}\right) + o\left(  \frac{1}{N^{k/2}}\right)  \end{equation}
Let $C$ refer to the given protocol of cost at most $c$. Let $H:\{-1,1\}^{2kN}\rightarrow[-1,1]$ be defined at $z\in \{-1,1\}^{2kN}$ by $H(z)=\E_{z'\sim U_l} [H_{ext^l(C)}(z,z')]$. By \cref{extension}, for all $x,z\in \{-1,1\}^{2kN}, x',z' \in\{-1,1\}^l$, we have that $C(x,x\cdot z) = ext^l(C)((x,x'),(x\cdot z, x'\cdot z'))$.  This implies that for all $z\in \{-1,1\}^{2kN}$, we have
\begin{align*}\begin{split}
H(z) &\triangleq \underset{z'\sim U_l}{\E} [H_{ext^l(C)}(z, z')] \triangleq \underset{\substack{x\sim U_{2kN}\\x',z'\sim U_l}}{\E} [ext^l(C)((x,x'),(x\cdot z, x'\cdot z'))] \quad\ldots\text{due to \cref{xorprotocol}} \\
& = \underset{x\sim U_{2kN}}{\E}[ C(x,x\cdot z) ]\quad\ldots\text{due to \cref{extension}}\\\end{split}\end{align*}
This, along with \cref{maincorollary2eqn2} implies that
\[ \left|  \underset{\substack{x\sim U_{2kN}\\ z\sim \sigma_0^{(k)}}}{\E}\left[ C(x,x\cdot z) \right]  - \underset{\substack{x\sim U_{2kN}\\ z\sim \sigma_1^{(k)}}}{\E}\left[ C(x,x\cdot z) \right]   \right|\le O\left( \frac{(c+8k)^{2k}}{N^{k/2}}\right)+ o\left(  \frac{1}{N^{k/2}}\right) = O\left( \frac{(c+8k)^{2k}}{N^{k/2}} \right)\]
This completes the proof of \cref{corollary2oftheorem1}.
\end{proof}

\section*{Acknowledgement}
We would like to thank Avishay Tal for very helpful conversations.

\bibliographystyle{alpha}

\appendix
\section{Output of $F^{(k)}$ on Distributions $\tilde{\mu}_0^{(k)}$ and $\tilde{\mu}_1^{(k)}$}

We use the following claims to prove \cref{concentrationcorollary2}.

\begin{claim} \label{concentrationlemma1} Let $z\sim \mathcal{G}$, where $\mathcal{G}$ is the distribution in \cref{gaussian}. Then, $ \underset{z\sim\mathcal{G}}{\p}[ forr(z)\le 3\epsilon/4  ] \le e^{-\Omega(N)}$.
\end{claim}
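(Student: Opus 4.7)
The plan is to exploit the explicit structure of $\mathcal{G}$: since $y = H_N x$ by construction and the normalized Hadamard $H_N$ is both symmetric and orthogonal (so $H_N^2 = \mathbb{I}_N$), we have
\[ forr(z) = \frac{1}{N}\langle x, H_N y\rangle = \frac{1}{N}\langle x, H_N H_N x\rangle = \frac{1}{N}\|x\|_2^2. \]
Thus the event $\{forr(z) \le 3\epsilon/4\}$ is simply the event that $\|x\|_2^2 \le \tfrac{3}{4}\epsilon N$, and the randomness reduces to the i.i.d.\ Gaussian coordinates of $x$.

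Next, I would normalize. Since each $x_i \sim \mathcal{N}(0,\epsilon)$ independently, the variables $w_i := x_i/\sqrt{\epsilon} \sim \mathcal{N}(0,1)$ are i.i.d.\ standard normals, and $\tfrac{1}{N}\|x\|_2^2 = \tfrac{\epsilon}{N}\sum_{i=1}^N w_i^2$. So
\[ \mathbb{P}_{z\sim\mathcal{G}}\left[forr(z) \le \tfrac{3}{4}\epsilon\right] = \mathbb{P}\!\left[\tfrac{1}{N}\sum_{i=1}^N w_i^2 \le \tfrac{3}{4}\right] \le \mathbb{P}\!\left[\left|\tfrac{1}{N}\sum_{i=1}^N w_i^2 - 1\right| \ge \tfrac{1}{4}\right]. \]

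Finally, I would apply the chi-squared concentration inequality stated in the preliminaries (with $t = 1/4 \in (0,1)$), which gives an upper bound of $2 e^{-N/128} = e^{-\Omega(N)}$. This completes the proof.

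There is no real obstacle here; the whole argument hinges on the single algebraic identity $H_N^2 = \mathbb{I}_N$ that collapses $forr(z)$ into a chi-squared statistic, after which the quoted tail bound finishes the job.
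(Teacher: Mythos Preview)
Your proof is correct and essentially identical to the paper's: both collapse $forr(z)$ to $\tfrac{1}{N}\|x\|_2^2$ via $H_N^2=\mathbb{I}_N$ and then apply the chi-squared tail bound from the preliminaries with $t=1/4$. The only cosmetic difference is that you normalize to standard normals $w_i=x_i/\sqrt{\epsilon}$ before invoking the bound, whereas the paper applies it directly to the $\mathcal{N}(0,\epsilon)$ variables.
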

\begin{claim} \label{concentrationlemma2} Let $z_0\in [-1/2,1/2]^{2N}$ and $z\sim\tilde{z_0}$ be the random variable obtained by rounding $z_0$ as in \cref{rounding}. Then, $ \p[ \left| forr(z)-forr(z_0)\right|\ge \epsilon/4  ] \le e^{-\Omega(N^{1/4})}$.
\end{claim}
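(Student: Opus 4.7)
The plan is to reduce the claim to standard Hoeffding-style concentration, using that $forr(z) = \tfrac{1}{N}\langle x, H_N y\rangle$ is a multilinear polynomial of total degree $2$ in the $2N$ coordinates of $z = (x,y)$ and that the assumption $z_0 \in [-1/2, 1/2]^{2N}$ ensures $trnc(z_0) = z_0$. By \cref{multilinearity}, the rounding $\tilde{z}_0$ samples each coordinate of $z$ independently in $\{-1,1\}$ with mean exactly $(z_0)_i$, so $\E_{z \sim \tilde{z}_0}[forr(z)] = forr(z_0)$, and it suffices to control the deviation of $forr(z)$ from its mean.

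Writing $z_0 = (\mu,\nu)$ and $z = (\mu + \tilde{x},\, \nu + \tilde{y})$, where $\tilde{x}_i, \tilde{y}_j$ are independent, centered, and bounded by $2$ in absolute value, the Forrelation decomposes as
\[
forr(z) - forr(z_0) \;=\; \tfrac{1}{N}\langle \tilde{x}, H_N \nu\rangle + \tfrac{1}{N}\langle \mu, H_N \tilde{y}\rangle + \tfrac{1}{N}\langle \tilde{x}, H_N \tilde{y}\rangle.
\]
The first two summands are linear combinations of the independent $\tilde{x}_i$ (resp.\ $\tilde{y}_j$) with coefficient vectors $H_N\nu/N$ and $H_N\mu/N$. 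Since $H_N$ is unitary and $\|\mu\|_2, \|\nu\|_2 \le \tfrac{\sqrt{N}}{2}$, both coefficient vectors have squared $\ell_2$-norm at most $1/(4N)$. Hoeffding's inequality therefore bounds each linear term above by $t$ except with probability $2\exp(-\Omega(t^2 N))$.

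For the bilinear cross term I would condition on $\tilde{y}$: it becomes a linear combination of the independent $\tilde{x}_i$ with coefficient vector $H_N\tilde{y}/N$, whose squared $\ell_2$-norm is at most $\|\tilde{y}\|_2^2/N^2 \le 4/N$ deterministically (since $|\tilde{y}_j| \le 2$). A conditional application of Hoeffding yields the same tail $2\exp(-\Omega(t^2 N))$ uniformly in $\tilde{y}$. A union bound at $t = \epsilon/12$ then gives an overall failure probability $\exp(-\Omega(\epsilon^2 N))$, and substituting $\epsilon = 1/(60 k^2 \ln N)$ with $k = o(N^{1/50})$ yields $\exp(-\Omega(N^{1-4/50}/\log^2 N))$, comfortably below the target $\exp(-\Omega(N^{1/4}))$. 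The only step that is not entirely routine is the bilinear cross term, but the conditioning trick above dispatches it without any appeal to Hanson--Wright or hypercontractivity.
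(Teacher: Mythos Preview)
Your proof is correct and uses the same three-term decomposition as the paper: the two linear pieces $\tfrac{1}{N}\langle \tilde{x}, H_N\nu\rangle$, $\tfrac{1}{N}\langle \mu, H_N\tilde{y}\rangle$ and the bilinear piece $\tfrac{1}{N}\langle \tilde{x}, H_N\tilde{y}\rangle$ are exactly the paper's $forr(x,y_0)-forr(z_0)$, $forr(x_0,y)-forr(z_0)$, and $forr(z-z_0)$. The difference is in the concentration tool. The paper invokes hypercontractivity on biased product spaces (Theorem~10.24 in O'Donnell) separately for each piece; for the degree-$2$ bilinear term this yields only a tail of the form $\exp(-\Omega(t))$ with $t=\epsilon\sqrt{N}/12$, which is what forces the stated bound $e^{-\Omega(N^{1/4})}$. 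Your route is more elementary: Hoeffding handles the two linear terms directly, and the conditioning trick on $\tilde{y}$ reduces the bilinear term to another Hoeffding application, giving a tail $\exp(-\Omega(\epsilon^2 N))$ for all three pieces. This is strictly stronger than what the paper obtains (roughly $e^{-\Omega(N^{23/25}/\log^2 N)}$ versus $e^{-\Omega(N^{1/4})}$), and avoids hypercontractivity altogether. The paper's approach would generalize more readily to higher-degree polynomials under biased product measures, but for this degree-$2$ bilinear form your argument is both simpler and quantitatively sharper.
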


\begin{corollary}\label{concentrationcorollary1} Let $\mathcal{U}$ be the uniform distribution on $\{-1,1\}^{2N}$ and $\tilde{\mathcal{G}}$ be the distribution on $\{-1,1\}^{2N}$ as in \cref{roundingharddistributions}. Then,
\[ \underset{z\sim \mathcal{U}}{\p}[ forr(z)\le \epsilon/4 ] \ge 1 - e^{-\Omega(N^{1/4})} \quad\text{ and }\quad \underset{z\sim \tilde{\mathcal{G}} }{\p}[ forr(z)\ge \epsilon/2 ] \ge 1-  O\left( \frac{1}{N^{6k^2}} \right) \]
\end{corollary}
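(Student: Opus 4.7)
The plan is to handle the two parts of the corollary independently, each as a short combination of the already-stated preliminaries.

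For the first part, concerning the uniform distribution $\mathcal{U}$, I would observe that $forr(z) = \frac{1}{N}\langle x,H_N y\rangle$ is a bilinear (hence degree-$2$) multilinear form in the $2N$ uniform $\pm 1$ variables, with $\E_{\mathcal{U}}[forr(z)]=0$ and $\E_{\mathcal{U}}[forr(z)^2]=\frac{1}{N^2}\sum_{i,j}H_N(i,j)^2=\frac{1}{N}$. The $(2,q)$-hypercontractive tail bound for degree-$2$ multilinear polynomials on the Boolean hypercube then gives $\p_{z\sim\mathcal{U}}[|forr(z)|\ge t]\le \exp(-\Omega(t\sqrt{N}))$ for all $t=O(1)$. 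Plugging in $t=\epsilon/4=\Theta(1/(k^2\ln N))$ and using $k=o(N^{1/50})$ (so that $k^2\ln N=o(N^{1/4})$) yields a tail bound of $\exp(-\Omega(\sqrt{N}/(k^2\ln N)))=\exp(-\Omega(N^{1/4}))$, which is exactly what is claimed.

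For the second part I would combine Claims~\ref{concentrationlemma1} and~\ref{concentrationlemma2} with a short tail-plus-union-bound argument on the underlying Gaussian. Sample $z_0\sim\mathcal{G}$ and let $z\sim\tilde{z_0}$ be its coordinate-wise rounding, so that $z\sim\tilde{\mathcal{G}}$. Each coordinate of $z_0$ is a centered Gaussian of variance $\epsilon=1/(60k^2\ln N)$, so the standard Gaussian tail gives $\p[|z_0(i)|>1/2]\le 2e^{-1/(8\epsilon)}=2N^{-7.5k^2}$; a union bound over $2N$ coordinates then yields $\p[z_0\notin[-1/2,1/2]^{2N}]=O(N^{1-7.5k^2})$. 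By Claim~\ref{concentrationlemma1}, $\p[forr(z_0)<3\epsilon/4]\le e^{-\Omega(N)}$, and conditionally on $z_0\in[-1/2,1/2]^{2N}$ the hypothesis of Claim~\ref{concentrationlemma2} is met, so $\p[|forr(z)-forr(z_0)|\ge \epsilon/4\mid z_0]\le e^{-\Omega(N^{1/4})}$. Outside all three bad events, $forr(z)\ge 3\epsilon/4-\epsilon/4=\epsilon/2$, and a union bound gives $\p_{z\sim\tilde{\mathcal{G}}}[forr(z)\ge\epsilon/2]\ge 1-O(N^{1-7.5k^2})-e^{-\Omega(N)}-e^{-\Omega(N^{1/4})}$.

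The one thing to verify is that this right-hand side is indeed $1-O(N^{-6k^2})$: the inequality $7.5k^2-1\ge 6k^2$ holds for every $k\ge 1$, so the polynomial error term fits inside $O(N^{-6k^2})$, and since $k=o(N^{1/50})$ we have $6k^2\ln N=o(N^{1/25}\ln N)=o(N^{1/4})$, so the two exponential error terms $e^{-\Omega(N)}$ and $e^{-\Omega(N^{1/4})}$ are both subsumed into $O(N^{-6k^2})$ as well. I do not anticipate any real obstacle: the substantive content is already packaged in Claims~\ref{concentrationlemma1} and~\ref{concentrationlemma2}, and the remaining ingredients are a standard hypercontractive tail bound for a bilinear form (first part) and a one-dimensional Gaussian tail with a union bound (second part).
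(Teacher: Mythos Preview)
Your proposal is correct. The second part matches the paper's argument essentially line for line: sample $z_0\sim\mathcal{G}$, bound the bad event $z_0\notin[-1/2,1/2]^{2N}$ via a Gaussian tail and union bound, then combine Claims~\ref{concentrationlemma1} and~\ref{concentrationlemma2} by a union bound exactly as you describe.

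For the first part, your approach is correct but the paper takes a shorter route. Rather than computing $\E_{\mathcal{U}}[forr^2]$ and invoking hypercontractivity from scratch, the paper simply observes that the uniform distribution $\mathcal{U}$ is precisely the rounding $\tilde{z_0}$ of the zero vector $z_0=0$, which lies in $[-1/2,1/2]^{2N}$ and satisfies $forr(z_0)=0$. Claim~\ref{concentrationlemma2} then gives $\p_{z\sim\mathcal{U}}[|forr(z)|\ge\epsilon/4]\le e^{-\Omega(N^{1/4})}$ immediately. Your direct hypercontractivity argument is essentially re-deriving the $z_0=0$ special case of the proof of Claim~\ref{concentrationlemma2}; the paper's observation buys a cleaner reuse of an already-proved claim, while your version is self-contained and avoids the indirection through the rounding lemma.
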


\begin{proof}[Proof of \cref{concentrationcorollary2} from \cref{concentrationcorollary1}]
This follows from a simple Union-bound. Let $S\subseteq [k]$. Let $z\sim \tilde{\mathcal{G}}^S\mathcal{U}^{\bar{S}}$ and $z=(z_1,\ldots,z_k)$ for $z_1,\ldots,z_k\in \{-1,1\}^{2N}$. For $j\in S$, we have $z_j\sim \mathcal{G}$ and consequently, \cref{concentrationcorollary1} implies that with at least $1-  O\left( \frac{1}{N^{6k^2}} \right) $ probability, $F(z_j)=-1$. For $j\notin S$, we have $z_j\sim \mathcal{U}$ and consequently, \cref{concentrationcorollary1} implies that with at least $1- e^{-\Omega(N^{1/4})}\ge1- O\left( \frac{1}{N^{6k^2}} \right)$ probability\footnote{Here we use the fact that $k=o(N^{1/50})$.}, $F(z_j)=1$. A Union-bound over $j\in [k]$ implies that with probability at least $1-O\left( \frac{k}{N^{6k^2}} \right)$, we have that all these events occur, that is, $z$ is in the support of $F^{(k)}$ and $F^{(k)}(z)\triangleq\prod_{j=1}^k F(z_j)=(-1)^{|S|}$. Since $\mu_0^{(k)}$ (respectively $\mu_1^{(k)}$) is a mixture of distributions $\tilde{\mathcal{G}}^S\mathcal{U}^{\bar{S}}$ where $|S|$ is even (respectively $|S|$ is odd), it follows that with probability at least $1-O\left( \frac{k}{N^{6k^2}} \right)$, $F^{(k)}(z)=1$ (respectively $F^{(k)}(z)=-1$).
\end{proof}

\begin{proof}[Proof of \cref{concentrationcorollary1} from \cref{concentrationlemma1} and \cref{concentrationlemma2}]
We set $z_0$ to be the zero vector in $\mathbb{R}^{2N}$ and apply \cref{concentrationlemma2}. Since the distribution obtained by rounding $z_0$ is $U_{2N}$ and $forr(z_0)=0$, we have
\[ \underset{z\sim U_{2N} }{\p}\left[ forr(z)\ge \frac{\epsilon}{4}\right] \triangleq\underset{z\sim \tilde{z_0} }{\p}\left[ forr(z)\ge \frac{\epsilon}{4}\right] \le e^{-\Omega(N^{1/4})} \]
This proves the first part of \cref{concentrationcorollary1}. To prove the second part, let $z_0\sim \mathcal{G}$. Let $E$ denote the event that $z_0\notin[-1/2,1/2]^{2N}$. We first show that $E$ is a low probability event. Recall that each coordinate of $z_0$ is distributed as $\mathcal{N}(0,\epsilon)$ where $\epsilon=1/(60k^2\ln N)$. This, along with a Union bound over coordinates $i\in [2N]$ implies that
\begin{align}\begin{split}\label{bound1}
\p[E] &\le 2N\cdot \p[z_0(i)\notin [-1/2,1/2]] \le 2N\cdot \p[ |\mathcal{N}(0,\epsilon)| \ge 1/2 ]\\
& \le2N \exp(-1/(8\epsilon))\le 2N\cdot \exp(-7k^2\ln N)= \frac{2N}{N^{7k^2}} \end{split}\end{align}
Let $z\sim \tilde{z}_0$ be obtained by rounding $z_0$ as in \cref{rounding}. If $forr(z)\le \epsilon/2$, then we must either have $forr(z_0)\le 3\epsilon/4$ or $|forr(z)-forr(z_0)|\ge \epsilon/4$. For the latter event, we split it into cases conditioned on whether $E$ occurs or not. A Union bound implies that
\begin{align}\begin{split} \label{unionbound}
& \underset{\substack{z_0\sim \mathcal{G}\\z\sim\tilde{z}_0}}{\p} [  forr(z) \le \epsilon/2]   \le \underset{z_0\sim \mathcal{G}}{\p}[forr(z_0)\le 3\epsilon/4] +  \underset{\substack{z_0\sim\mathcal{G}\\z\sim \tilde{z}_0}}{\p} [ | forr(z) - forr(z_0)| \ge \epsilon/4] \\
&\le  \underset{z_0\sim \mathcal{G}}{\p}[forr(z_0)\le 3\epsilon/4] + \p[E]+ \underset{\substack{z_0\sim\mathcal{G}\\z\sim \tilde{z}_0}}{\p} [| forr(z) - forr(z_0)| \ge \epsilon/4 \mid \neg E]
 \end{split}\end{align}
\cref{concentrationlemma1} implies that with all but $e^{-\Omega(N)}$ probability, for $z_0\sim \mathcal{G}$, we have $forr(z_0)> 3\epsilon/4$. Thus, the first term in the R.H.S. of \cref{unionbound} can be upper bounded by $e^{-\Omega(N)}.$ The second term can be bounded by $\frac{2N}{N^{7k^2}}$ due to \cref{bound1}. For the third term, note that whenever $E$ does not occur, we can apply \cref{concentrationlemma2} to obtain that
\[ \underset{z\sim \tilde{z}_0}{\p} [ |forr(z) - forr(z_0)| \ge \epsilon/4  \mid \neg E,z_0] \le e^{-\Omega(N^{1/4})} \]
These observations along with \cref{unionbound} imply that
\[  \underset{z\sim\tilde{\mathcal{G}}}{\p} [  forr(z) \le \epsilon/2] \triangleq \underset{\substack{z_0\sim \mathcal{G}\\z\sim\tilde{z}_0}}{\p} [  forr(z) \le \epsilon/2]  \le e^{-\Omega(N)} + \frac{2N}{N^{7k^2}} + e^{-\Omega(N^{1/4})} =O\left( \frac{1}{N^{6k^2}} \right) \]
\end{proof}

\begin{proof}[Proof of \cref{concentrationlemma1}]
This follows from a simple concentration inequality for Chi-Squared random variables. Note that a random sample $z\sim \mathcal{G}$ is equivalent to a sample $z=(x,y)$, where $x\sim \mathcal{N}(0,\epsilon \mathbb{I}_N)$ and $y=H_Nx$. This implies that $ forr(z) =\frac{1}{N} \left< x, H_N y\right> = \frac{1}{N}\langle x , H_N^2 x\rangle =\frac{1}{N}\cdot \|x\|^2 $. The random variable $\|x\|^2$ has a Chi-Squared distribution, defined by the sum of squares of $N$ random variables, each of which is distributed according to $\mathcal{N}(0,\epsilon)$. Using the concentration inequality for the Chi-Squared distribution from the preliminaries, we have that for all $t\in (0,1)$,
\[ \p\left[ \left| \frac{1}{N} \sum_{i=1}^N x_i^2 - \epsilon \right| \ge t\epsilon \right] \le \exp(-\Omega(Nt^2)) \]
Substituting $t=1/4$, we obtain  $\p[|forr(z)-\epsilon|\ge \epsilon/4]= \p\left[ \left| \frac{1}{N} \sum_{i=1}^N x_i^2 - \epsilon \right| \ge \frac{\epsilon}{4} \right] \le e^{-\Omega(N)}$. This implies the desired conclusion in \cref{concentrationlemma1}.
\end{proof}

\begin{proof}[Proof of \cref{concentrationlemma2}]
We make use of the following concentration inequality. It appears as Theorem 10.24 in Ryan Odonnell's book on Boolean functions~\cite{odonnell} as an application of the general hypercontractivity theorem on product spaces. We state it in the context of biased product distributions on the Boolean hypercube.
\begin{lemma} \label{hypercontractivity1} Let $\pi_1,\ldots,\pi_M$ be probability distributions on $\{-1,1\}$ such that for every $i\in [M]$, every outcome in $\pi_i$ has probability at least $\lambda$. Let $\Omega=\{-1,1\}^M$ and $\pi= \pi_1\times\ldots\times \pi_M$. Let $f:\Omega\rightarrow \mathbb{R}$ be a Boolean function of total degree at most $d$ and let $\|f\|_2:=\sqrt{\E_{x\sim \pi}[f(x)^2]}$ denote the $l_2$ norm of $f$. Then, for any $t\ge \sqrt{2e/\lambda}^d$, we have $\underset{x\sim \pi}{\p}[ |f(x)|\ge t\|f\|_2 ] \le \lambda^d \exp\left(-\frac{d}{2e}\cdot \lambda t^{2/d}\right)$.
\end{lemma}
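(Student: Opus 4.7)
My plan is to derive this tail inequality in three stages: a hypercontractive moment bound, an application of Markov's inequality, and an optimization over the moment order. The key input is the Bonami--Beckner $(2,q)$-hypercontractive inequality adapted to $\lambda$-reasonable product measures, which asserts that for every $q \ge 2$ and every polynomial $f$ of degree at most $d$ on $(\Omega,\pi)$,
$$\|f\|_q \;\le\; C_{q,\lambda}^{\,d}\, \|f\|_2, \qquad C_{q,\lambda}^{\,2} \;\asymp\; \frac{q-1}{\lambda^{\,1-2/q}}.$$
I would prove this by first establishing the two-point (one-variable) inequality for a $\lambda$-reasonable coordinate---the content of the classical Bonami lemma in the biased setting, obtained by a short but careful calculation on an affine function $a + bx$ under a single $\lambda$-reasonable distribution---and then tensoring across the $M$ coordinates, using the fact that the optimal noise operator behaves multiplicatively with respect to product spaces and that the $L_q \to L_2$ norm bound for degree-$d$ polynomials is preserved under tensorization.

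With the moment bound in hand, Markov's inequality applied to the nonnegative random variable $|f|^q$ yields
$$\underset{x\sim\pi}{\p}\bigl[|f(x)| \ge t\|f\|_2\bigr] \;\le\; \frac{\E_{x\sim\pi}[|f(x)|^q]}{t^q\|f\|_2^q} \;=\; \left(\frac{\|f\|_q}{t\|f\|_2}\right)^{q} \;\le\; t^{-q}\, C_{q,\lambda}^{\,dq}.$$
This holds for every $q \ge 2$ and every $t > 0$, so the remaining task is to choose the best $q$ for the given $t$.

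Finally, I would optimize over $q$. The two factors $t^{-q}$ and $C_{q,\lambda}^{\,dq}$ balance when $(q-1)/\lambda^{\,1-2/q}$ is a fixed multiple of $t^{2/d}/e$; the hypothesis $t \ge (2e/\lambda)^{d/2}$ is precisely what guarantees that the corresponding $q_* \asymp \lambda t^{2/d}/e$ is at least $2$, so the hypercontractive bound legitimately applies. Substituting $q_*$ back into the Markov estimate and simplifying yields a tail of the form $\lambda^d \exp\!\bigl(-d\lambda t^{2/d}/(2e)\bigr)$. The main obstacle I expect is pinning down the constants precisely enough to obtain the prefactor $\lambda^d$ (rather than merely an absolute constant to the $d$th power); this forces one to track the $\lambda^{\,1-2/q}$ factor in $C_{q,\lambda}$ faithfully through the optimization, rather than approximating it by $\lambda$ in the large-$q$ regime. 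Once that bookkeeping is done carefully, the stated inequality follows directly.
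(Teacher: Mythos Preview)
The paper does not actually prove this lemma: it is quoted verbatim as Theorem~10.24 from O'Donnell's \emph{Analysis of Boolean Functions} and used as a black box. Your proposal sketches precisely the standard proof of that theorem---hypercontractivity for $\lambda$-reasonable product spaces to get a $(2,q)$ moment comparison, then Markov at the $q$th moment, then optimization in $q$---so your approach is correct and coincides with the argument in the cited reference, even though the present paper provides no proof of its own to compare against.
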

Note that the distribution $\tilde{z}_0$ on $\{-1,1\}^{2N}$ satisfies the hypothesis in \cref{hypercontractivity1} with $\lambda=\frac{1}{4}$ because of the assumption that $z_0\in [1/2,1/2]^{2N}$. \cref{concentrationlemma2} essentially follows by considering the degree-2 Boolean function $forr(z)-forr(z_0)$, bounding its $l_2$ norm and applying \cref{hypercontractivity1}. However, to simplify the calculation we instead consider $f:\mathbb{R}^{2N}\rightarrow \mathbb{R}$ defined by $ f(z):= forr(z-z_0) \triangleq N^{-1}\cdot \left< x-x_0 , H_N( y-y_0)\right>$ where $z=(x,y)$ for $x,y\in\mathbb{R}^N$ and $z_0=(x_0,y_0)$ for $x_0,y_0\in [-1/2,1/2]^{N}$. Note that we have the identity $f(z)\triangleq forr(z-z_0)=forr(z) -forr(x_0,y) -forr(x,y_0)+forr(z_0)$. We now show that when $z\sim \tilde{z}_0$, the random variables $f(z), forr(x,y_0)$ and $forr(x_0,y)$ are concentrated around their mean. From the above identity, it will follow that $forr(z)$ is also concentrated around its mean. We first show a concentration inequality for $f$. Since each coordinate of $(x,y)$ is sampled independently so that $\E[(x,y)]=(x_0,y_0)$, we have
\begin{align*}\begin{split}
\E[f^2]&\triangleq N^{-3}\cdot \E \left[ \left(\sum_{i,j\in [N]}(x(i)-x_0(i))(y(j)-y_0(j))(-1)^{\left<i,j\right>_2}\right)^2\right]\\
&=N^{-3}\cdot\sum_{i,j\in [N]} \E\left[ (x(i)-x_0(i))^2(y(j)-y_0(j))^2\right] \quad\ldots\text{ since the cross terms are 0.}\\
&\le N^{-3}\cdot 16N^2  \quad\quad\ldots \text{since } x,x_0,y,y_0\in [-1,1]^{2N}.
\end{split}\end{align*}
Thus, $\|f\|_2\le \frac{4}{\sqrt{N}}$. Note that $f$ is of degree $2$. We now apply \cref{hypercontractivity1} to the function $f$ for the distribution $\tilde{z}_0$. Let $t$ be a parameter. Since $\lambda=\frac{1}{4}$ and $d=2$, we have $\sqrt{2e/\lambda}^d=O(1)$ and $\lambda^d \exp\left(-\frac{d}{2e}\cdot \lambda t^{2/d}\right)= \exp(-\Omega(t))$. \cref{hypercontractivity1}, along with the above calculation implies that for all $t\ge O(1)$, we have $ \p_{z\sim\tilde{z}_0} \left[ |f(z)| \ge \frac{t}{N^{1/2}} \right] \le   \p_{z\sim\tilde{z}_0} \left[ |f(z)| \ge \Omega(t)\cdot \|f\|_2 \right] \le \exp(-\Omega(t))$. We now set $t=\frac{N^{1/2}\epsilon}{12}=\frac{N^{1/2}}{720k^2\ln N}$. This is larger than $N^{1/4}$ for sufficiently large $N$ and $k=o(N^{1/50})$. This implies that
\begin{equation}\label{concentrationlemma2eqn1}  \p_{z\sim\tilde{z}_0} \left[ |forr(z-z_0)|\ge \frac{\epsilon}{12} \right] \triangleq \p_{z\sim\tilde{z}_0} \left[ |f(z)| \ge \frac{\epsilon}{12} \right]\le \exp(-\Omega(N^{1/4}))  \end{equation}
We now show a similar concentration inequality for $forr(x,y_0)$. Let $g:\mathbb{R}^N\rightarrow\mathbb{R}$ be defined at $x\in \mathbb{R}^N$ by $g(x):=forr(x,y_0)-forr(z_0)\triangleq N^{-1}\cdot \left<x-x_0,H_Ny_0\right> $. Since each coordinate of $x$ is sampled independently so that $\E[x]=x_0$, we have
\begin{align*}\begin{split}
N^2\cdot \E[g^2]&\triangleq\E\left[ \left(\sum_{i\in [N]}(x(i)-x_0(i))(H_Ny_0)(i)\right)^2 \right]  \\
&= \sum_{i\in [N]} \E\left[ (x(i)-x_0(i))^2(H_Ny_0)(i)^2 \right] \quad\ldots \text{since the cross terms are 0.}\\
&\le \sum_{i\in [N]} 4\cdot  (H_Ny_0)(i)^2  \quad\quad\ldots \text{since }x,x_0\in [-1,1]^{N}. \\
&= 4\|H_N y_0\|_2^2 = 4\|y_0\|_2^2 \le 4N
\end{split}\end{align*}
Thus, $\|g\|_2\le \frac{2}{\sqrt{N}}$. We now apply \cref{hypercontractivity1} to the degree-1 polynomial $g$ for the distribution $\tilde{x}_0$ on $\{-1,1\}^{N}$. Let $t$ be a parameter. Since $\lambda=\frac{1}{4}$ and $d=1$, we have $\lambda^d \exp\left(-\frac{d}{2e}\cdot \lambda t^{2/d}\right)= \exp(-\Omega(t^2))$ and $\sqrt{2e/\lambda}^d=O(1)$. \cref{hypercontractivity1}, along with the above calculation implies that for all $t\ge O(1)$, we have
$ \p_{x\sim\tilde{x}_0} \left[ |g(x)| \ge \frac{t}{N^{1/2}} \right] \le   \p_{x\sim \tilde{x}_0} \left[ |g(x)| \ge \Omega(t)\cdot \|g\|_2 \right] \le \exp(-\Omega(t^2))$. We now set $t=\frac{N^{1/2}\epsilon}{12}=\frac{N^{1/2}}{720k^2\ln N}$. This is larger than $N^{1/4}$ for sufficiently large $N$ and $k=o(N^{1/50})$. This implies that
\begin{equation} \label{concentrationlemma2eqn2} \p_{x\sim \tilde{x}_0}\left[|forr(x,y_0)-forr(z_0)| \ge \epsilon/12 \right]\triangleq \p_{x\sim\tilde{x}_0} \left[ |g(x)| \ge \frac{\epsilon}{12} \right] \le \exp(-\Omega(N^{1/2})) \end{equation}
An identical calculation implies that
\begin{equation} \label{concentrationlemma2eqn3} \p_{y\sim \tilde{y}_0}\left[|forr(x_0,y)-forr(z_0)| \ge \epsilon/12 \right]\le \exp(-\Omega(N^{1/2})) \end{equation}
Recall that we have the identity $forr(z)= forr(z-z_0)+forr(x_0,y) +forr(x,y_0)-forr(z_0)$. Suppose $|forr(z)-forr(z_0)|\ge \epsilon/4$, then either $|forr(x,y_0)-forr(z_0)|\ge \epsilon/12$, or $|forr(x_0,y)-forr(z_0)|\ge\epsilon/12$ or  $|forr(z-z_0)|\ge \epsilon/12$. This, along with \cref{concentrationlemma2eqn1}, \cref{concentrationlemma2eqn2}, \cref{concentrationlemma2eqn3} and a Union-Bound implies that
\[ \p\left[ |forr(z)-forr(z_0)|\ge \epsilon/4 \right] \le 2\cdot e^{-\Omega(N^{1/2})}+e^{-\Omega(N^{1/4})} \le e^{-\Omega(N^{1/4})}\]
\end{proof}

\section{Closure Under Restrictions}

\begin{proof}[Proof of \cref{closedunderrestrictions}]
Let $L=M+l$. Let $H\in \mathcal{H}$ be defined by a distribution $\mathcal{C}$ over deterministic protocols $C:\{-1,1\}^M\times \{-1,1\}^M\rightarrow \{-1,1\}$ of cost at most $c$. Let $v\in \{-1,1,0\}^{M}$ and $\rho_v$ be a restriction as in \cref{restriction}. Let $V:=\{ j : v(j)\in \{-1,1\}\}$. Define a distribution $\mathcal{C}_{v}$ over protocols $C_v:\{-1,1\}^{M}\times \{-1,1\}^{M} \rightarrow \{-1,1\}$ as follows.
\begin{enumerate}
\item Sample $C\sim \mathcal{C}$.
\item For each $j\in V$, independently sample $a_j$ uniformly at random from $\{-1,1\}$.
\item For each $j\in V$, Alice overwrites the $j$-th bit of her input with $a_j$ and Bob overwrites the $j$-th bit of his input with $a_j\cdot v_j$.
\item Alice and Bob execute the protocol $C$ on their restricted inputs.
\end{enumerate}
\begin{claim} \label{closedunderrestrictionslemma} For all $z\in \{-1,1\}^{M}$, $v\in \{-1,1,0\}^{M}$, we have $\underset{z'\sim U_{l}}{\E} [H_{ext^l(\mathcal{C}_v)}(z,z')]= H(\rho_v(z))$.
\end{claim}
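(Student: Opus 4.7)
The plan is to unravel the definitions on both sides until everything reduces to expectations of the form $\E_{x\sim U_M}[C(x,x\cdot w)]$ for various $w$, and then verify a single coordinate-wise identity. First I would observe that, by \cref{extension}, for any deterministic protocol $C$ on $M$-bit inputs and any $(x,x'),(z,z')\in\{-1,1\}^{M+l}$, one has $ext^l(C)((x,x'),(x\cdot z,\,x'\cdot z')) = C(x,\,x\cdot z)$, since declaring the last $l$ bits does not affect the computation on the first $M$. Taking the expectation over $x$, $x'$ and $z'$ and then over the protocol distribution, \cref{xorprotocol} gives $\E_{z'\sim U_l}[H_{ext^l(\mathcal{C})}(z,z')] = H_{\mathcal{C}}(z)$, and the same argument applied to $\mathcal{C}_v$ yields $\E_{z'\sim U_l}[H_{ext^l(\mathcal{C}_v)}(z,z')] = H_{\mathcal{C}_v}(z)$. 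So the claim reduces to the statement $H_{\mathcal{C}_v}(z) = H_{\mathcal{C}}(\rho_v(z))$ about the original (un-extended) protocols.

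Next I would track what $C_v$ does on an input of the form $(x,\,x\cdot z)$. By the construction of $\mathcal{C}_v$, one samples $C\sim\mathcal{C}$ and independent uniform bits $a_j\in\{-1,1\}$ for each $j\in V:=\{j:v(j)\ne 0\}$, and then runs $C$ on the inputs in which Alice's $j$-th bit is replaced by $a_j$ and Bob's $j$-th bit is replaced by $a_j\cdot v_j$. Set $x'_j := a_j$ for $j\in V$ and $x'_j := x_j$ for $j\notin V$. Then Alice's restricted input is exactly $x'$, and coordinate-wise Bob's restricted input equals $x'\cdot \rho_v(z)$: on $j\in V$ it is $a_j\cdot v_j = x'_j\cdot (\rho_v(z))_j$, while on $j\notin V$ it is $x_j\cdot z_j = x'_j\cdot (\rho_v(z))_j$. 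Hence $C_v(x,\,x\cdot z) = C(x',\,x'\cdot \rho_v(z))$.

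Finally, because $x\sim U_M$ and the $a_j$'s are independent fair coins independent of $x$, the vector $x'$ is uniform on $\{-1,1\}^M$. Taking the expectation over $C\sim\mathcal{C}$, over $x\sim U_M$, and over the internal randomness $a$ of $\mathcal{C}_v$, this yields
\[H_{\mathcal{C}_v}(z) \;=\; \E_{C,x,a}\bigl[C_v(x,\,x\cdot z)\bigr] \;=\; \E_{C,\,x'}\bigl[C(x',\,x'\cdot \rho_v(z))\bigr] \;=\; H_{\mathcal{C}}(\rho_v(z)),\]
which combined with the two reductions above gives the claim. The main (and really only) obstacle is the coordinate-wise verification that Bob's restricted input equals $x'\cdot \rho_v(z)$; once that identity is in hand, the rest is just a change of variables absorbing the fresh $a_j$'s into a uniformly random $x'$, and an appeal to \cref{extension} to strip off the extension.
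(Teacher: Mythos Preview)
Your proposal is correct and follows essentially the same approach as the paper: strip off the extension via \cref{extension} to reduce to showing $H_{\mathcal{C}_v}(z)=H_{\mathcal{C}}(\rho_v(z))$, then verify coordinate-wise that Alice's and Bob's overwritten inputs are $x'$ and $x'\cdot\rho_v(z)$ (the paper writes $x'$ as $\rho_a(x)$), and finally absorb the fresh $a_j$'s into a uniform $x'$. The only cosmetic difference is notation; the logical structure and the key coordinate identity are identical.
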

Note that $\mathcal{C}_v$ is a distribution over deterministic protocols of cost at most $c$. Thus, by definition of $\mathcal{H}$, the function that maps $z$ to $\underset{z'\sim U_{l}}{\E}[H_{ext^l(\mathcal{C}_v)}(z,z')]$ is in $\mathcal{H}$. This observation, along with \cref{closedunderrestrictionslemma} establishes that the restricted function $H(\rho_v(z))$ of $z$ is also in $\mathcal{H}$. It thus suffices to prove \cref{closedunderrestrictionslemma}. \end{proof}

\begin{proof}[Proof of \cref{closedunderrestrictionslemma}]  This proof is by unravelling definitions. Let $z\in \{-1,1\}^M$ and $v\in\{-1,1,0\}^M$. Note that for all $x,'z'\in \{-1,1\}^l$ and $x\in \{-1,1\}^M$, \cref{extension} implies that $C(x,x\cdot z)= ext^l(C)((x,x'),(x\cdot z,x'\cdot z')$. In particular, for all $x\in \{-1,1\}^M$, we have
\begin{equation}\label{usefulrestriction}
C(x,x\cdot z)= \underset{z',x'\sim U_l}{\E}[ext^l(C)((x,x'),(x\cdot z,x'\cdot z')]
\end{equation}
Consider
\begin{align*}\begin{split}
\underset{z'\sim U_{l}}{\E}[H_{ext^l(\mathcal{C}_v)}( z, z')]&\triangleq  \underset{z'\sim U_{l}}{\E} \underset{\substack{C\sim \mathcal{C}_v \\(x,x')\sim U_L}}{\E}[ext^l(C)( (x,x'), (x\cdot z,x'\cdot z'))] \quad\ldots\text{due to \cref{xorprotocol}} \\
&\triangleq  \underset{\substack{C\sim \mathcal{C}_v \\x\sim U_M}}{\E}[C( x, x\cdot z)] \quad\ldots\text{due to \cref{usefulrestriction}} \\
\end{split}\end{align*}
For each $j\in V$, let $a_j$ be a uniformly random sample as in step 2. For the rest of the coordinates $j\in [M]\setminus V$, set $a_j:=0$ and let $a=(a_1,\ldots,a_{M})\in \{-1,1,0\}^{M}$. Let $\mathcal{A}$ denote the distribution of $a$ obtained by this process. This, along with the above equation and the definition of $C_v$ implies that
\begin{align*}\begin{split}
\underset{z'\sim U_{l}}{\E}[H_{ext^l(\mathcal{C}_v)}( z, z')]&= \underset{\substack{C\sim \mathcal{C}_v \\x\sim U_M}}{\E}[C( x, x\cdot z)] =\underset{C\sim \mathcal{C}}{\E} \underset{\substack{a\sim \mathcal{A}\\ x\sim U_M}}{\E}[C(\rho_a(x), \rho_{a\cdot v}(x\cdot z))]
\end{split}\end{align*}
Note that $\rho_{a\cdot v}(x\cdot z)$ is exactly $\rho_a(x)\cdot \rho_{ v}(z).$ This is because for $j\in V$, we have $a_j,v_j\neq 0$ and thus, $(\rho_{a\cdot v}(x\cdot z))(j)=a_j \cdot v_j = (\rho_a(x))(j) \cdot (\rho_{v}(z))(j)$; similarly, for $j\notin V$, we have $a_j=v_j=0$ and thus, $(\rho_{a\cdot v}(x\cdot z))(j)=x(j)\cdot z(j)=(\rho_a(x))(j) \cdot (\rho_{v}(z))(j)$. Substituting this in the above equation,
\begin{align*}\begin{split}
\underset{z'\sim U_{l}}{\E}[H_{ext^l(\mathcal{C}_v)}( z, z')]&=\underset{C\sim \mathcal{C}}{\E} \underset{\substack{a\sim \mathcal{A}\\ x\sim U_M}}{\E}[C(\rho_a(x), \rho_{a}(x)\rho_v( z)) ]
\end{split}\end{align*}
Note that for $a\sim \mathcal{A}$ and $x\sim U_M$, we have $\rho_a(x)\sim U_M$. Substituting this in the above equation,
\begin{align*}\begin{split}
\underset{z'\sim U_{l}}{\E}[H_{ext^l(\mathcal{C}_v)}( z, z')]&=\underset{C\sim \mathcal{C}}{\E} \underset{ x\sim U_M}{\E}[C(x,x\cdot \rho_v( z)) ]\\
&=  \underset{z'\sim U_{l}}{\E}  \underset{\substack{C\sim \mathcal{C}\\ (x,x')\sim U_L}}{\E}[ext^l(C)((x,x'),(x\cdot \rho_v(z),x'\cdot z'))]\quad \ldots \text{due to \cref{usefulrestriction} }\\
&= \underset{ z'\sim U_{l} }{\E} [H_{ext^l(\mathcal{C})}(\rho_v(z),z')] \quad\ldots\text{due to \cref{xorprotocol}} \\
&=H(\rho_v(z)) \quad\quad\ldots\text{due to the definition in \cref{closedunderrestrictions}.}
\end{split}\end{align*}
This completes the proof of \cref{closedunderrestrictionslemma}.
\end{proof}
\section{Weight Bound}

For $l\in \mathbb{N}$, we say that a deterministic protocol $C:\{-1,1\}^M\times \{-1,1\}^M\rightarrow \{-1,1\}$ has minimum cost at least $l$, if every rectangle in the partition induced by the protocol has length and width at most $2^{M-l}.$

\begin{lemma} \label{weightbound1}  Let $C(x,y):\{-1,1\}^M\times \{-1,1\}^M \rightarrow \{0,1\}$ be any deterministic protocol of cost at most $c$ and of minimum cost at least $l:=\lceil 2k\log e\rceil$. Let $H:\{-1,1\}^M\rightarrow \mathbb{R}$ be defined at every $z\in \{-1,1\}^M$ by $H(z):=\underset{x\sim U_{M}}{\E}[C(x,x\cdot z)]$ as in \cref{xorprotocol}. Then,  $ L_{2k}(H)\le O\left( \left(\frac{e}{k}\right)^{2k}\cdot c^{2k} \right)$.
\end{lemma}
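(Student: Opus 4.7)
The plan is to decompose $H$ over the rectangles of the protocol $C$, bound each rectangle's contribution to the level-$2k$ Fourier mass using a hypercontractive level-$k$ inequality, and aggregate using the partition identity $\sum_R\alpha_R\beta_R=1$ together with a concavity argument. Concretely, since $C$ is deterministic of cost $c$, write $C(x,y)=\sum_R b_R f_R(x) g_R(y)$, where $\{A_R\times B_R\}_R$ is the induced partition of $\{-1,1\}^M\times\{-1,1\}^M$ into at most $2^c$ combinatorial rectangles, $f_R:=\mathbf{1}_{A_R}$ and $g_R:=\mathbf{1}_{B_R}$ are their $\{0,1\}$-indicators, and $b_R\in\{0,1\}$ is the value of $C$ on $R$. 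Substituting $y=x\cdot z$ and averaging over $x$ rewrites $H(z)=\sum_R b_R(f_R*g_R)(z)$ in the convolution notation of the preliminaries. Since $\widehat{f*g}(S)=\widehat{f}(S)\widehat{g}(S)$, the triangle inequality gives
\[
L_{2k}(H)\;\le\;\sum_R\sum_{|S|=2k}\bigl|\widehat{f_R}(S)\,\widehat{g_R}(S)\bigr|.
\]

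For each rectangle, set $\alpha_R:=|A_R|/2^M$, $\beta_R:=|B_R|/2^M$, and define the level-$d$ weight $W^{=d}(f):=\sum_{|S|=d}\widehat f(S)^2$. Cauchy--Schwarz in $S$ bounds the inner sum by $\sqrt{W^{=2k}(f_R)\,W^{=2k}(g_R)}$. The minimum-cost hypothesis $l=\lceil 2k\log e\rceil$ forces $\alpha_R,\beta_R\le 2^{-l}\le e^{-2k}$, so $k\le\ln(1/\alpha_R)$, which places us in the range of the classical level-$k$ inequality for sparse $\{0,1\}$-indicators (see, e.g., Theorem~9.24 in \cite{odonnell}): applied at level $2k$, it yields
\[
W^{=2k}(f_R)\;\le\;\alpha_R^2\,\Bigl(\frac{e\,\ln(1/\alpha_R)}{k}\Bigr)^{2k},
\]
and similarly for $g_R$. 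Multiplying the two bounds and applying AM--GM in the form $\ln(1/\alpha_R)\ln(1/\beta_R)\le\tfrac14\bigl(\ln(1/(\alpha_R\beta_R))\bigr)^2$ yields a per-rectangle bound of $\alpha_R\beta_R\bigl(e/(2k)\bigr)^{2k}\bigl(\ln(1/(\alpha_R\beta_R))\bigr)^{2k}$.

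To finish, set $p_R:=\alpha_R\beta_R$. Because the rectangles partition $\{-1,1\}^{2M}$, we have $\sum_R p_R=1$ and the sum has at most $|\mathcal R|\le 2^c$ terms. A short calculation shows that $\phi(p):=p(\ln(1/p))^{2k}$ is concave on $p\le e^{-(2k-1)}$; since $p_R\le 2^{-2l}\le e^{-4k}$, all $p_R$'s lie in this concave regime, and Jensen's inequality gives
\[
\sum_R \phi(p_R)\;\le\;|\mathcal R|\,\phi(1/|\mathcal R|)\;=\;(\ln|\mathcal R|)^{2k}\;\le\;(c\ln 2)^{2k}.
\]
Combining with the $(e/(2k))^{2k}$ prefactor yields $L_{2k}(H)\le\bigl(ec\ln 2/(2k)\bigr)^{2k}=O\bigl((e/k)^{2k}c^{2k}\bigr)$, as required. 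The main obstacle is securing the sharp form of the level-$k$ inequality at every rectangle: it delivers the decisive $(\ln(1/\alpha_R)/k)^{2k}$ savings over the trivial Parseval bound $W^{=2k}(f_R)\le\alpha_R$ only when $\alpha_R$ is small enough, which is exactly what the minimum-cost hypothesis guarantees. With only the Parseval bound one would be stuck at $L_{2k}(H)\lesssim\sum_R\sqrt{\alpha_R\beta_R}\le 2^{c/2}$, which is exponentially weaker than the target; the same smallness bound on $p_R$ is also what makes $\phi$ concave on the relevant range and thus renders the final Jensen step tight.
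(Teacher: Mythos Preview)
Your proposal is correct and follows essentially the same approach as the paper's proof: decompose $C$ over the at most $2^c$ rectangles of the protocol, rewrite $H$ as a sum of convolutions $\mathbbm{1}_{A_R}*\mathbbm{1}_{B_R}$, apply Cauchy--Schwarz and the level-$k$ inequality for sparse indicators (using the minimum-cost hypothesis to ensure $\alpha_R,\beta_R\le e^{-2k}$), and finish with Jensen's inequality for the concave function $p\mapsto p(\ln(1/p))^{2k}$. The only cosmetic difference is that your AM--GM step yields the slightly sharper prefactor $(e/(2k))^{2k}$, whereas the paper uses the cruder bound $\ln(1/\alpha_R)\ln(1/\beta_R)\le(\ln(1/(\alpha_R\beta_R)))^2$ to get $(e/k)^{2k}$; both are absorbed in the $O(\cdot)$.
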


\begin{corollary} \label{weightbound2} Let $l=\lceil 2k\log e\rceil$. Let $\mathcal{C}$ be a distribution over deterministic protocols $C:\{-1,1\}^M\times \{-1,1\}^M\rightarrow \{-1,1\}$ of cost at most $c$. Let $H_{ext^l(\mathcal{C})}$ be as in \cref{xorprotocol}. Then,
$ L_{2k}(H_{ext^l(\mathcal{C})}) \le O\left( \left(\frac{e}{k}\right)^{2k}\cdot (c+2l)^{2k} \right)$.

\end{corollary}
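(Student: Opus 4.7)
My plan is to reduce \cref{weightbound2} directly to \cref{weightbound1} via two simple observations, without any new Fourier estimates.

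First, I would reduce from the distribution $\mathcal{C}$ to a single deterministic protocol. By \cref{xorprotocol}, $H_{ext^l(\mathcal{C})}(z)=\E_{C\sim\mathcal{C}}[H_{ext^l(C)}(z)]$, so taking Fourier coefficients and applying the triangle inequality gives
\[
|\widehat{H_{ext^l(\mathcal{C})}}(S)|\le \E_{C\sim\mathcal{C}}\bigl[|\widehat{H_{ext^l(C)}}(S)|\bigr].
\]
Summing over $|S|=2k$ yields $L_{2k}(H_{ext^l(\mathcal{C})})\le \sup_C L_{2k}(H_{ext^l(C)})$, where the supremum is over deterministic protocols $C$ of cost at most $c$. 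Hence it suffices to prove the bound for a single fixed deterministic $C$.

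Second, I would check that the extended protocol $\tilde C:=ext^l(C)$, viewed as a protocol on $\{-1,1\}^{M+l}\times\{-1,1\}^{M+l}$, fits the hypotheses of \cref{weightbound1}. By \cref{extension}, the players first each declare their last $l$ bits and then run $C$, so the total communication is at most $c+2l$. Moreover, the initial $2l$-bit exchange already partitions $\{-1,1\}^{M+l}\times\{-1,1\}^{M+l}$ into combinatorial rectangles $R_A\times R_B$ in which each player's last $l$ bits are fixed, so $|R_A|,|R_B|\le 2^M=2^{(M+l)-l}$. Any further refinement by $C$ only shrinks these rectangles. Therefore every rectangle in the partition induced by $\tilde C$ has length and width at most $2^{(M+l)-l}$, which is exactly the statement that $\tilde C$ has minimum cost at least $l=\lceil 2k\log e\rceil$.

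With both conditions in hand, I would invoke \cref{weightbound1} applied to $\tilde C$ (on input length $M+l$, cost at most $c+2l$, and minimum cost at least $l$) to conclude
\[
L_{2k}(H_{\tilde C})\le O\!\left(\Bigl(\tfrac{e}{k}\Bigr)^{2k}(c+2l)^{2k}\right),
\]
which combined with the reduction in the first paragraph gives the claimed bound. The only step that needs a moment's thought is verifying the minimum-cost condition, but it is transparent from the definition of $ext^l$; nothing here is genuinely an obstacle, as the heavy lifting is already done in \cref{weightbound1}.
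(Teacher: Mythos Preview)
Your proposal is correct and follows essentially the same approach as the paper's proof: reduce from the distribution $\mathcal{C}$ to a single deterministic protocol via linearity of Fourier coefficients and the triangle inequality, then observe that $ext^l(C)$ has cost at most $c+2l$ and minimum cost at least $l$, and invoke \cref{weightbound1}. Your justification of the minimum-cost condition is somewhat more detailed than the paper's, but the argument is the same.
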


\begin{proof}[Proof of \cref{weightbound3} using \cref{weightbound2}]
Let $H(z):=\E_{z'\sim U_l } [H_{ext^l(\mathcal{C})}(z,z')]$ be as in \cref{weightbound3}. Note that for all $S\subseteq[M]$, we have $\widehat{H}(S) =\widehat{H_{ext^l(\mathcal{C})}}(S)$. This implies that $L_{2k}(H)\le L_{2k}(H_{ext^l(\mathcal{C})})$. \cref{weightbound2} implies that $ L_{2k}(H_{ext^l(\mathcal{C})}) \le O\left( \left(\frac{e}{k}\right)^{2k}\cdot (c+2l)^{2k} \right)  $. This completes the proof of \cref{weightbound3}.
\end{proof}

\begin{proof}[Proof of \cref{weightbound2} using  \cref{weightbound1}]

Note that for all $S\subseteq[M+l]$, we have  $\widehat{H_{ext^l(\mathcal{C})}}(S) = \underset{C\sim \mathcal{C}}{\E} [\widehat{H_{ext^l(C)}}(S)]$. This, along with Triangle-Inequality implies that $L_{2k}(\mathcal{C})\le \max_{C\sim\mathcal{C}}L_{2k}(C)$. Let $C$ be any deterministic protocol in the support of $\mathcal{C}$. Note that $ext^l(C)$ is a deterministic protocol of cost at most $c+2l$ and of minimum cost $l$. Let $H_{ext^l(C)}$ be as in \cref{xorprotocol}. \cref{weightbound1} implies that
$ L_{2k}(H_{ext^l(C)}) \le O\left( \left(\frac{e}{k}\right)^{2k}\cdot (c+2l)^{2k} \right) $. This completes the proof of \cref{weightbound2}.
\end{proof}

\begin{proof}[Proof of \cref{weightbound1} ]
In order to bound $L_{2k}(H)$, we will use the following lemma. Its statement and proof appear as `Level-$k$ Inequalities' on Page 259 of `Analysis of Boolean Functions' \cite{odonnell}. For $S\subseteq\{-1,1\}^n$, let $\mathbbm{1}_S:\{-1,1\}^n\rightarrow \{0,1\}$ denote the $\{0,1\}$-indicator function of the set $S$, that is, for $x\in \{-1,1\}^n$, let $\mathbbm{1}_S(x)=1$ if and only if $x\in S$.
\begin{lemma}[Level-$k$ Inequalities] \label{levelkinequality}  Let $A\subseteq\{-1,1\}^n$ be a set such that $\E[\mathbbm{1}_A]=\alpha$ and let $k\in \mathbb{N}$ be at most $2\ln(1/\alpha)$. Then,
\[ \sum_{|S|=k} \left(\widehat{\mathbbm{1}_A}(S)\right)^2\le\alpha^2 \left(\frac{2e}{k}\ln(1/\alpha)\right)^k \]
\end{lemma}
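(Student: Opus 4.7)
The plan is to use the classical Bonami--Beckner hypercontractivity argument. Set $f := \mathbbm{1}_A$, so that $\E[f] = \alpha$, and let $f^{=k} := \sum_{|S|=k}\widehat{f}(S)\chi_S$ denote the projection of $f$ onto characters of degree exactly $k$. The quantity to be bounded is $\|f^{=k}\|_2^2 = \sum_{|S|=k}\widehat{f}(S)^2$, which by orthogonality of characters equals $\langle f, f^{=k}\rangle$ (here and throughout, inner products and $L_p$ norms are with respect to the uniform measure on $\{-1,1\}^n$).

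First I would apply H\"older's inequality: for any dual pair with $1/p + 1/p' = 1$ and $p \in (1,2]$ (so $p' \ge 2$),
\[ \|f^{=k}\|_2^2 \;=\; \langle f, f^{=k}\rangle \;\le\; \|f\|_p\,\|f^{=k}\|_{p'}. \]
Since $f$ is a $\{0,1\}$-indicator with mean $\alpha$, $\|f\|_p = \alpha^{1/p}$. Next I would invoke the $(2,q)$-Bonami--Beckner inequality: for any function $g$ whose Fourier support lies entirely at level $k$, and any $q \ge 2$, $\|g\|_q \le (q-1)^{k/2}\|g\|_2$. (Equivalently, the noise operator $T_\rho$ with $\rho = 1/\sqrt{q-1}$ is a contraction from $L_2$ to $L_q$, and $T_\rho$ multiplies $f^{=k}$ by $\rho^k$.) Applying this with $g = f^{=k}$ and $q = p'$ yields $\|f^{=k}\|_{p'} \le (p'-1)^{k/2}\|f^{=k}\|_2$. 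Substituting, dividing by $\|f^{=k}\|_2$, and squaring gives
\[ \|f^{=k}\|_2^2 \;\le\; \alpha^{2/p}\,(p-1)^{-k}, \]
using the identity $p' - 1 = 1/(p-1)$.

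Finally, I would optimize over $p$. The natural choice is $p = 1 + \delta$ with $\delta := k/(2\ln(1/\alpha))$; the hypothesis $k \le 2\ln(1/\alpha)$ guarantees $\delta \in (0,1]$, so that $p \in (1,2]$ as required. With this choice, $(p-1)^{-k} = (2\ln(1/\alpha)/k)^k$. For the prefactor, a short computation gives $2/p = 2 - 2\delta/(1+\delta) \ge 2-2\delta$, hence (since $\alpha \le 1$)
\[ \alpha^{2/p} \;\le\; \alpha^{2-2\delta} \;=\; \alpha^2\cdot e^{2\delta\ln(1/\alpha)} \;=\; \alpha^2\cdot e^k. \]
Multiplying the two estimates delivers $\|f^{=k}\|_2^2 \le \alpha^2\bigl(\tfrac{2e\ln(1/\alpha)}{k}\bigr)^k$, which is exactly the stated inequality.

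The main obstacle is the Bonami--Beckner hypercontractive inequality itself, which is nontrivial. Its standard proof proceeds by first establishing the two-point inequality on $\{-1,1\}$ via a direct calculation and then tensorizing to $\{-1,1\}^n$ using the multiplicativity of $L_q$-norms under products together with Minkowski's integral inequality. Since the lemma is being imported from \cite{odonnell} as a textbook fact, in our context it is reasonable to cite hypercontractivity as a black box; all the work specific to the level-$k$ inequality is the H\"older step and the optimization of $p$ described above.
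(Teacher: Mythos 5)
Your argument is correct and is precisely the standard H\"older-plus-hypercontractivity proof from O'Donnell's book, which is exactly the source the paper cites for this lemma (the paper itself imports it as a black box without reproving it). The only cosmetic remarks are that the degenerate case $\|f^{=k}\|_2=0$ should be dispatched before dividing by it, and that O'Donnell states the bound for $W^{\le k}$ rather than $W^{=k}$, but your restriction to the exact level-$k$ projection goes through verbatim.
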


We now show the desired bound on $L_{2k}(H)$. Since $C$ is a deterministic protocol of cost at most $c$, it induces a partition of the input space $\{-1,1\}^{M}\times \{-1,1\}^{M}$ into at most $2^{c}$ rectangles. Let $\mathcal{P}$ denote the set of rectangles in this partition and let $A\times B$ index these rectangles, where $A$ (respectively $B$) is the set of Alice's (respectively Bob's) inputs compatible with the rectangle. Let $C(A\times B)\in \{-1,1\}$ denote the output of the protocol when the inputs are in $A\times B$. For all $x,y\in \{-1, 1\}^M$,
\[ C(x,y) = \underset{A\times B \in \mathcal{P}}{\sum} C(A\times B) \mathbbm{1}_{A}(x)  \mathbbm{1}_B(y)\]
This implies that for all $x,z\in \{-1, 1\}^M,$
\[ C(x,x\cdot z) = \underset{A\times B \in \mathcal{P}}{\sum} C(A\times B)\mathbbm{1}_{A}(x)  \mathbbm{1}_B(x\cdot z)\]
Taking an expectation over $x\sim U_M$ of the above identity implies that
\[ H(z)\triangleq \underset{x\sim U_{M}}{\E} [C(x,x\cdot z)]= \sum_{A\times B\in \mathcal{P}}C(A\times B)\big(\mathbbm{1}_A*\mathbbm{1}_B\big)(z) \]
This implies that for any $S\subseteq [M]$,
\[\widehat{H}(S)= \sum_{A\times B\in \mathcal{P}} C(A\times B)\widehat{\mathbbm{1}_A*\mathbbm{1}_B}(S)=\sum_{A\times B\in \mathcal{P}} C(A\times B)\widehat{\mathbbm{1}_A}(S)\widehat{\mathbbm{1}_B}(S)\]
Note that $C(A\times B)\in \{-1,1\}$. We thus obtain
\begin{align*}\begin{split}
L_{2k}(H)&=\sum_{|S|=2k}\left| \widehat{H}(S) \right| \\
&= \sum_{|S|=2k}\left| \underset{A\times B\in \mathcal{P}}{\sum} C(A\times B) \widehat{\mathbbm{1}_A}(S)\widehat{\mathbbm{1}_B}(S)\right| \\
&\le \underset{A\times B\in \mathcal{P}}{\sum} \sum_{|S|=2k}  |\widehat{\mathbbm{1}_A}(S)| |\widehat{\mathbbm{1}_B}(S)|\\
\end{split}\end{align*}
We apply Cauchy Schwarz to the term $\sum_{|S|=2k}  |\widehat{\mathbbm{1}_A}(S)| |\widehat{\mathbbm{1}_B}(S)|$ to obtain
\[ L_{2k}(H) \le  \underset{A\times B\in \mathcal{P}}{\sum} \Big( \sum_{|S|=2k} \widehat{\mathbbm{1}_A}(S)^2 \Big)^{1/2}\Big( \sum_{|S|=2k} \widehat{\mathbbm{1}_B}(S)^2 \Big)^{1/2} \]
For ease of notation, let $\mu(A)=\frac{|A|}{2^M}$ denote the measure of a set $A\subseteq \{-1, 1\}^M$ under $U_M$. Because of the assumption that the minimum cost of $C$ is at least $l=\lceil 2k\log e\rceil$, every rectangle $A\times B\in \mathcal{P}$ satisfies $\mu(A) , \mu(B) \le e^{-2k}$. This ensures that $2k\le 2\ln\frac{1}{\mu(A)}$ and $2k\le 2\ln\frac{1}{\mu(B)}$. We apply  \cref{levelkinequality} on the indicator functions $\mathbbm{1}_A$ and $\mathbbm{1}_B$ at level $2k$ to obtain
\[\sum_{|S|=2k} \left(\widehat{\mathbbm{1}_A}(S)\right)^2 \le \mu(A)^2\Big(\frac{2e}{2k}\cdot \ln(1/\mu(A))\Big)^{2k} \]
\[ \sum_{|S|=2k} \left(\widehat{\mathbbm{1}_B}(S)\right)^2 \le \mu(B)^2\Big(\frac{2e}{2k} \cdot \ln(1/\mu(B))\Big)^{2k} \]
Substituting this in the bound for $L_{2k}(H)$, we have
\[ L_{2k}(H) \le \left(\frac{e}{k}\right)^{2k} \underset{A\times B\in \mathcal{P}}{\sum} \mu(A)\mu(B)\left( \ln\frac{1}{\mu(A)}\ln\frac{1}{\mu(B)}\right)^{k} \]
Let $\Delta:= \left(\frac{e}{k}\right)^{2k}\underset{A\times B\in \mathcal{P}}{\sum} \mu(A)\mu(B)\left(\ln\frac{1}{\mu(A)}\ln\frac{1}{\mu(B)}\right)^k$ be the expression in the R.H.S. of the above. Consider the case when $\mathcal{P}$ consists of $2^c$ rectangles $A\times B$, each of which satisfies $\mu(A)=\mu(B)=\frac{1}{2^{c/2}}$. In this case, $\Delta$ evaluates to $ \left(\frac{e}{k}\right)^{2k} \sum_{A\times B\in \mathcal{P}} \frac{1}{2^c}( \frac{c\ln2}{2})^{2k} =O\left( \left(\frac{e}{k}\right)^{2k} \cdot c^{2k}\right)$. This proves the lemma in this special case. A similar bound holds for the general case and the proof follows from a concavity argument that we describe now.

\begin{figure}
\centering
\begin{tikzpicture}[scale=0.7]
    \begin{axis}[
            axis lines=middle,
            xmin=0,xmax=0.3,ymin=0,ymax=6,
            xlabel=$x$,
            ylabel=$y$,
    y label style={at={(axis description cs:0.05,1.0)}},
    x label style={at={(axis description cs:1.0,0.05)}},
            ]
      \addplot[domain=0:0.3,thick,samples=100] {x*ln(1/x)^4};
    \end{axis}
  \end{tikzpicture}
\caption{Plot of the function $y=x\left(\ln\frac{1}{x}\right)^4$}
\end{figure}
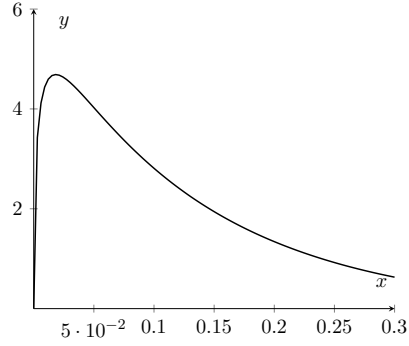

Since $\mu(A),\mu(B)\le 1$, we have the following inequality.
\begin{align*}\begin{split}
\Delta &\triangleq  \left(\frac{e}{k}\right)^{2k} \underset{A\times B\in \mathcal{P}}{\sum} \mu(A)\mu(B)\left(\ln\frac{1}{\mu(A)}\ln\frac{1}{\mu(B)}\right)^k\\
&\le  \left(\frac{e}{k}\right)^{2k}  \underset{A\times B\in \mathcal{P}}{\sum} \mu(A)\mu(B)\left(\ln\frac{1}{\mu(A)\mu(B)}\ln\frac{1}{\mu(A)\mu(B)}\right)^k\\
&= \left(\frac{e}{k}\right)^{2k}  \underset{A\times B\in \mathcal{P}}{\sum} \mu(A\times B)\left(\ln\frac{1}{\mu(A\times B)}\right)^{2k}
\end{split}\end{align*}

Let $f:[0,\infty)\rightarrow \mathbb{R}$ be defined by $f(p):=p\ln (1/p)^{2k}$. A small calculation\footnote{Consider $f'(p)=\ln(1/p)^{2k} -2k \ln(1/p)^{2k-1}$. This implies that $f''(p)=2k\ln(1/p)^{2k-2}\cdot\frac{1}{p}\cdot\left( (2k-1) - \ln(1/p) \right)$. Note that for $p\le \frac{1}{e^{2k-1}}$, $f''(p)\le 0$.} shows that $f$ is a concave function in the interval $[0,\frac{1}{e^{2k-1}}]$ (see Figure 2). Let $\alpha_i\in [0,\frac{1}{e^{2k-1}}]$ for $i\in [d]$. Jensen's inequality applied to $f$ states that for $i\sim [d]$ drawn uniformly at random, we have $\E_i [f(\alpha_i)]\le f(\E_i [\alpha_i])$. This implies that
\[ \sum_{i=1}^d \alpha_i  \ln(1/\alpha_i)^{2k} \le \left(\sum_{i=1}^d\alpha_i\right)\ln\left(\frac{d}{\sum_{i=1}^d \alpha_i}\right)^{2k} \]
We apply this inequality to the terms in $\Delta$ by substituting $\alpha_i$ with $\mu(A\times B)$. We may do this because of the assumption that $\mu(A),\mu(B)\le \frac{1}{e^{2k}}$. This implies that
\[ \Delta \le \left(\frac{e}{k}\right)^{2k} \left( \sum_{A\times B\in \mathcal{P}}\mu(A\times B)\right) \ln\left( \frac{2^{c}}{\sum_{A\times B\in \mathcal{P}} \mu(A\times B)} \right)^{2k}  \]
Note that $\sum_{A\times B\in \mathcal{P}}\mu(A\times B)= 1$. This, along with the above inequality implies that $\Delta \le O\left( \left(\frac{e}{k}\right)^{2k}\cdot c^{2k} \right)$. This completes the proof of \cref{weightbound1}.\end{proof}

\end{document}